\newtheorem{theorem}{Theorem}[section]
\newtheorem{lemma}[theorem]{Lemma}
\theoremstyle{definition}
\theoremstyle{remark}
\newtheorem{remark}[theorem]{Remark}
\numberwithin{equation}{section}
\newcommand{\ba}{\begin{array}}
\newcommand{\ea}{\end{array}}
\newcommand{\f}{\frac}
\newcommand{\la}{\lambda}
\newcommand{\ds}{\displaystyle}
\begin{document}
\date{}
\title{ \bf\large{Evolution of dispersal in advective patchy environments with varying drift rates}\thanks{S. Chen is supported by National Natural Science Foundation of China (Nos. 12171117) and Shandong Provincial Natural Science Foundation of China (No. ZR2020YQ01). %, and Y. Wu is supported by a Start-up Grant (?).
}}
\author{
Shanshan Chen\footnote{Email: chenss@hit.edu.cn}\\[-1mm]
{\small Department of Mathematics, Harbin Institute of Technology}\\[-2mm]
{\small Weihai, Shandong 264209, P. R. China}\\[1mm]
Jie Liu\footnote{Corresponding Author, Email: liujie9703@126.com}\\[-1mm]
{\small School of Mathematics, Harbin Institute of Technology}\\[-2mm]
{\small Harbin, Heilongjiang, 150001, P. R. China}\\[1mm]
Yixiang Wu\footnote{Email: yixiang.wu@mtsu.edu} \\[-1mm]
{\small Department of Mathematics, Middle Tennessee State University}\\[-2mm]
{\small Murfreesboro, Tennessee 37132, USA}\\[1mm]
}

\maketitle
%\newpage
\begin{abstract}
In this paper, we study a two stream species Lotka-Volterra competition patch model with the patches aligned along a line. The two species are supposed to be identical except for the diffusion rates. For each species, the diffusion rates between patches are the same, while the drift rates vary. 
Our results show that the convexity of the drift rates has a significant impact on the competition outcomes:
if the drift rates are convex, then the species with larger diffusion rate wins the competition; if the drift rates are concave, then the species with smaller diffusion rate wins the competition. \\[2mm]
\noindent {\bf Keywords}: Lotka-Volterra competition model, patch, evolution of dispersal. \\[2mm]
\noindent {\bf MSC 2020}: 92D25, 92D40, 34C12, 34D23, 37C65.
\end{abstract}

\section{Introduction}
In stream ecology, one puzzling question named ``drift paradox'' asks why aquatic species in advective stream environments can persist \cite{pachepsky2005persistence}. In an attempt to answer this question, Speirs and Gurney propose a single species reaction-diffusion-advection model with Fisher-KPP type nonlinear term and show that the diffusive dispersal can permit persistence in advective environment \cite{speirs2001population}. This work has inspired a series of studies on how factors such as  diffusion rate, advection rate, domain size, and spatial heterogeneity impact the persistence of a stream species \cite{Huang16,Jin11,lam2016emergence,Lou14, Lou15, Lutscher06,Lutscher05,zhou2018evolution,zhou2018global}. 

One natural extension of the work \cite{speirs2001population} is to consider  the competition of two species in stream environment, where both spices are subjective to random dispersal and passive directed drift (see \cite{Lou14, Lou15, Lutscher06,Lutscher05,vasilyeva2012flow,wang2019persistence, yan2022competition,MR4222368,zhou2018evolution,zhou2018global} and the references therein). One interesting research direction for competition models is to study the evolution of dispersal. 
%These studies are also motivated by the previous works on the evolution of dispersal \cite{ Cantrell07,Cantrell06, Cantrell12,Chenx12,dockery1998evolution,hastings1983can}???.  In particular, 
Earlier results in \cite{dockery1998evolution,hastings1983can} claim that the species with a slower  movement rate has competitive advantage in a spatially heterogeneous environment when both competing species have only random dispersal patterns. Later, it is shown that faster diffuser can be selected in an advective  environment  (e.g., see \cite{MR2332679,MR2462700, Cantrell12, Chenx12}). For reaction-diffusion-advection competition models of stream species, Lou and Lutscher \cite{Lou14}   show that the species with a larger diffusion rate may have competitive advantage when the two competing species are only different by the diffusion rates; Lam \emph{et al.} \cite{Lam15} prove that an intermediate dispersal rate may be selected if the resource function is spatial dependent; Lou \emph{et al.} \cite{Lou19} show that the species with larger dispersal rate has competitive advantage when the resource function is decreasing and the drift rate is large; If the resource function is increasing, the results of Tang \emph{et al.} \cite{Tang2022} indicate that the species with slower diffusion rate may prevail. It seems like that the role of spatial heterogeneity in advection rate is less studied. To our best knowledge, the only such work on reaction-diffusion-advection competition models for stream species  is  by Shao \emph{et al.}  \cite{shao2022second}, which shows that the slower diffuser may win if the advection rate function is concave. 

%In contrast, there are rather few studies on varying drift rates, and since the addition of drift makes the differential operators under consideration non-symmetric, the varying drift brings further challenges

%The  problem has led to the study of population dynamics in steams, with the characteristic that organisms are subjected to random dispersal and passive directed drift \cite{Cantrell06,Huang16,Jin11,Lou14,Lutscher06,Lutscher05}.

%The more concerning research topic in population spatial ecology is the evolution of dispersal \cite{Cantrell07,Cantrell12,Chenx12,Golubitsky17,Lou15,Lou16,Lou19,Lam15,Zhou16}. 

%One well-known result on the Lotka-Volterra competition model, provided that the environment is spatially homogeneous, is due to ; see also\cite{Lou16,Lou15,Zhou16}. . Patch model is in some sense a simplification of the reaction-diffusion model.

Our study is also motivated by a series of works on competition patch models of the form \cite{Chen22,Cheng19,Cheng21,Hamida2017,Jiang20,Jiang21, lou2019ideal, noble2015evolution, Xiang19}:
\begin{equation}\label{pat-cp}
\begin{cases}
\ds\frac{du_i}{dt}=\ds\sum_{j=1}^{n} \left(d_1 D_{ij}+q_1Q_{ij}\right)u_j+u_i(r-u_i-v_i), &i=1,\cdots,n,\;\;t>0,\\
\ds\frac{dv_i}{dt}=\ds\sum_{j=1}^{n}\left(d_2 D_{ij}+q_2Q_{ij}\right)v_j+v_i(r-u_i-v_i),&i=1,\dots,n,\;\; t>0, \\
\bm u(0)=\bm u_0\ge(\not\equiv)\bm0,\;\bm v(0)=\bm v_0\ge(\not\equiv)\bm0,
\end{cases}
\end{equation}
where $\bm u=(u_1,\dots, u_n)$ and $\bm v=(v_1,\dots, v_n)$ are population density of two competing species living in stream environment; matrices $D$ and $Q$ describe the diffusion and  drift  patterns, respectively; $d_1, d_2$ are the diffusion rates and $q_1, q_2$ are the advection rates. Recent works on model \eqref{pat-cp} for small number of patches can be found in the literature (see \cite{Cheng19,Cheng21,Hamida2017, lou2019ideal, noble2015evolution, Xiang19} for $n=2$ and  \cite{chen2022,Jiang20} for $n=3$). In particular, Jiang  \emph{et al.} \cite{Jiang20,Jiang21} propose  three configurations of three-node stream networks (i.e. $n=3$) and show that the magnitude of drift rate can affect whether the slower or faster diffuser wins the competition. Chen \emph{et al.} \cite{chen2022, chen2022evolution} generalize the   configuration in \cite{Jiang20} when all the nodes were aligned alone a line  to arbitrarily many nodes and have studied \eqref{pat-cp} for three stream networks that are different only in the downstream end.   

%The streams have been modeled by habitats of discrete patches \cite{Chen22,Cheng19,Cheng21,Jiang20,Jiang21,Xiang19}. The authors proposed in \cite{chen2022} to study the following two-species Lotka-Volterra competition model \eqref{pat-cp} in an advective patchy environment and analyze the global dynamics of For case $n=3$, \eqref{pat-cp} has been considered in \cite{Jiang21}, where the drift rate of three patches are assumed to be different but the drift rates of two species are assumed to be equal.

%If for arbitrary $n$, how does the varying drift affect the spatial distribution of the population?
%Recently, the population dynamics in advective environments has become an attractive research topic, with a characteristic that organisms in advective environments are subjected to random dispersal and passive directed drift[]. These studies are often in spatially homogeneous environments (For patchy environments, that is all patches of the same species are essentially identical habitats). It follows from [] that faster diffusion rates will be selected provided $q_{1}=\cdots=q_{n-1}=q$. Moreover, the competitor with smaller drift rates has more competitive advantages []. In contrast, for spatially heterogeneous environments, there are few studies, especially about varying drift.

In this paper, we will consider the following two stream species competition patch model:
\begin{equation}\label{pat-cp2}
\begin{cases}
\ds\frac{du_i}{dt}=\ds\sum_{j=1}^{n} \left(d_1 D_{ij}+q_jQ_{ij}\right)u_j+u_i(r-u_i-v_i), &i=1,\cdots,n,\;\;t>0,\\
\ds\frac{dv_i}{dt}=\ds\sum_{j=1}^{n}\left(d_2 D_{ij}+q_jQ_{ij}\right)v_j+v_i(r-u_i-v_i),&i=1,\dots,n,\;\; t>0, \\
\bm u(0)=\bm u_0\ge(\not\equiv)\bm0,\;\bm v(0)=\bm v_0\ge(\not\equiv)\bm0,
\end{cases}
\end{equation}
where $u_i$ and $v_i$ denote the population densities of the two competing species in patch $i$ at time $t$, respectively; $r>0$ is the intrinsic growth rate and represents the effect of resources; $d_1,d_2>0$ represent random movement rates; $q_i\ge0$, $i=1,\cdots,n$, are directed movement rates. The patches are aligned a long a line as shown in Fig. \ref{network} (let $n$ be the number of patches, and we always suppose $n\ge 2$). 
\begin{figure}[htbp]
\label{network}
\centering
\begin{tikzpicture}
\begin{scope}[every node/.style={draw}, node distance= 1.5 cm]
    %\node[white] (0) at (2,4.5) {};
    \node[circle] (1) at (0,0) {$1$};
    \node[circle] (2) at (2,0) {$2$};
    \node[circle] (3) at (4,0) {$3$};
    \node[circle] (4) at (6,0) {$4$};

    \node[circle] (5) at (8,0) {$5$};
    \node[circle] (6) at (10,0) {$6$};
\end{scope}
\begin{scope}[every node/.style={fill=white},
              every edge/.style={thick}]
    \draw[thick] [->](1) to [bend right] node[below=0.1] {{\footnotesize $d+q_1$}} (2);
    \draw[thick] [->](2) to [bend right] node[below=0.1] {{\footnotesize $d+q_2$}} (3);
    \draw[thick] [->](3) to [bend right] node[below=0.1] {{\footnotesize $d+q_3$}} (4);
 %   \draw[thick] [->](3) to [bend right] node[right=0.1] {{\footnotesize $r_3+s_3$}} (7);
 %   \draw[thick] [->](7) to [bend right] node[right=0.1] {{\footnotesize $r_7+s_7$}} (8);
    \draw[thick] [->](4) to [bend right] node[below=0.1] {{\footnotesize $d+q_4$}} (5);
    \draw[thick] [->](5) to [bend right] node[below=0.1] {{\footnotesize $d+q_5$}} (6);
 %   \draw[thick] [->](5) to [bend right] node[right=0.1] {{\footnotesize $r_5+s_5$}} (9);
    \draw[thick] [<-](1) to [bend left] node[above=0.1] {{\footnotesize $d$}} (2);
    \draw[thick] [<-](2) to [bend left] node[above=0.1] {{\footnotesize $d$}} (3);
    \draw[thick] [<-](3) to [bend left] node[above=0.1] {{\footnotesize $d$}} (4);
    \draw[thick] [<-](4) to [bend left] node[above=0.1] {{\footnotesize $d$}} (5);
    \draw[thick] [<-](5) to [bend left] node[above=0.1] {{\footnotesize $d$}} (6);
    \draw[thick] [->](6) to  node[above=0.1] {{\footnotesize $q_6$}} (12, 0);
%    \draw[thick] [->](7) to [bend right] node[left=0.1] {{\footnotesize $r_7$}} (3);
%    \draw[thick] [->](8) to [bend right] node[left=0.1] {{\footnotesize $r_8$}} (7);
%    \draw[thick] [->](9) to [bend right] node[left=0.1] {{\footnotesize $r_9$}} (5);
\end{scope}
\end{tikzpicture}
\caption{A stream with six patches, where $d$ is the random movement rate, and $q_i$, i=1,\ldots,6, are the directed movement rates. }
\end{figure}
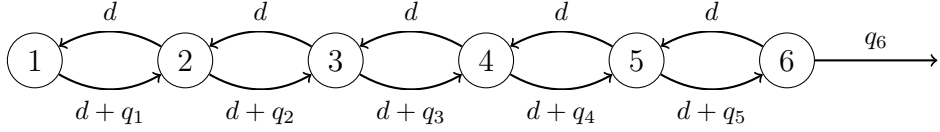
If  $q_n=0$, then the stream is an inland stream, and there exists no population loss at the downstream end; if $q_n>0$, it corresponds to the situation that the stream flows to a lake, where the diffusive flux into and from the lake balances. 
%In this model, the patch $i=1$ and $i=n$ represent the upstream end and the downstream end, respectively, and the no-flux boundary conditions means that no individuals can cross the upstream and downstream boundaries.
The $n\times n$ matrices $D=(D_{ij})$ and $Q=(Q_{ij})$ describe the diffusion and drift patterns, respectively, where
\begin{equation}\label{M}
\begin{split}
D_{ij}=&\begin{cases}
1,&i=j-1\;\text{or}\;i=j+1,\\
-2,&i=j=2,\cdots,n-1,\\
-1,&i=j=1,n,\\
0, &\text{otherwise},
\end{cases}
\end{split}\;\;\begin{split}
Q_{ij}=&\begin{cases}
1,&i=j+1,\\
-1,&i=j=1,\cdots,n,\\
0, &\text{otherwise}.
\end{cases}
\end{split}
\end{equation}
Our work is motivated by Jiang \emph{et al.} \cite{Jiang21}, and we consider model \eqref{pat-cp2} when $n\ge3$. We will show that the convexity of the drift rate $\bm q=(q_1, \dots, q_n)$ affects the evolution of random dispersal. In particular, if $\bm q=(q_1, \dots, q_n)$ is convex, then the species with a larger diffusion rate has competitive advantage; if $\bm q$ is concave, then the slower diffusier has competitive advantage. 
%We remark that the case $n=3$ has been considered in a recent paper

%We now introduce some existing results for model \ref{pat-cp2} in an advective patchy environment. It follows from [] that faster diffusion rates will be selected provided $q_{1}=\cdots=q_{n-1}=q$. Moreover, the competitor with smaller drift rates has more competitive advantages. Motivated by the above works, we would like to pursue further to explore the effects of varying drift.

Our paper is organized as follows. In section 2, we state the main results on the global dynamics of model \eqref{pat-cp2}.  The other sections  are about the details of the proofs for the main results: in section 3, we consider the eigenvalue problems that are related to the existence and stability of the semi-trivial equilibria; in section 4, we study the existence and non-existence of semi-trivial equilibria; an essential step to prove the competitive exclusion result for the model is to show that the model has no positive equilibrium, which is placed in section 5; in section 6, we study the local stability of the semi-trivial equilibria.   

%We first prove the existence and stability of semi-trivial equilibria of \eqref{pat-cp2} in terms of the principal eigenvalue of the corresponding eigenvalue problem. Then we establish the global dynamics of \eqref{pat-cp2}, using the monotone dynamical system theory.

\section{Main results}
In this section, we state our main results about the global dynamics of model \eqref{pat-cp2}. Let $q_0=0$ throughout the paper. We will consider  \eqref{pat-cp2} under two different scenarios for the flow rate $\bm q=(q_1, \dots, q_n)\ge\bm 0$:
\begin{enumerate}
\item [$(\textbf{H1})$] $q_{i+1}-q_{i}\le q_{i}-q_{i-1}$ for $i=1,\cdots,n-1$, with at least one strict inequality.
\item [$(\textbf{H2})$] $q_{i+1}-q_{i}\ge q_{i}-q_{i-1}$ for $i=1,\cdots,n-1$, with at least one strict inequality.
\end{enumerate}

Fig. \ref{fig2} illustrates the flow rate $\bm q$ for the model with six patches under $(\textbf{H1})$ and $(\textbf{H2})$. Assumption $(\textbf{H1})$  describes a stream whose flow rate is convex, while $(\textbf{H2})$ depicts the situation that the flow rate is concave. We will show that the competition outcomes of the species are dramatically different under  $(\textbf{H1})$ or $(\textbf{H2})$. We remark that our assumption $(\textbf{H2})$ implicitly implies that the components of $\bm q$ are strictly increasing as  $q_0=0$ and $\bm q\ge \bm 0$.

\begin{figure}[htbp]
\centering\includegraphics[width=0.45\textwidth]{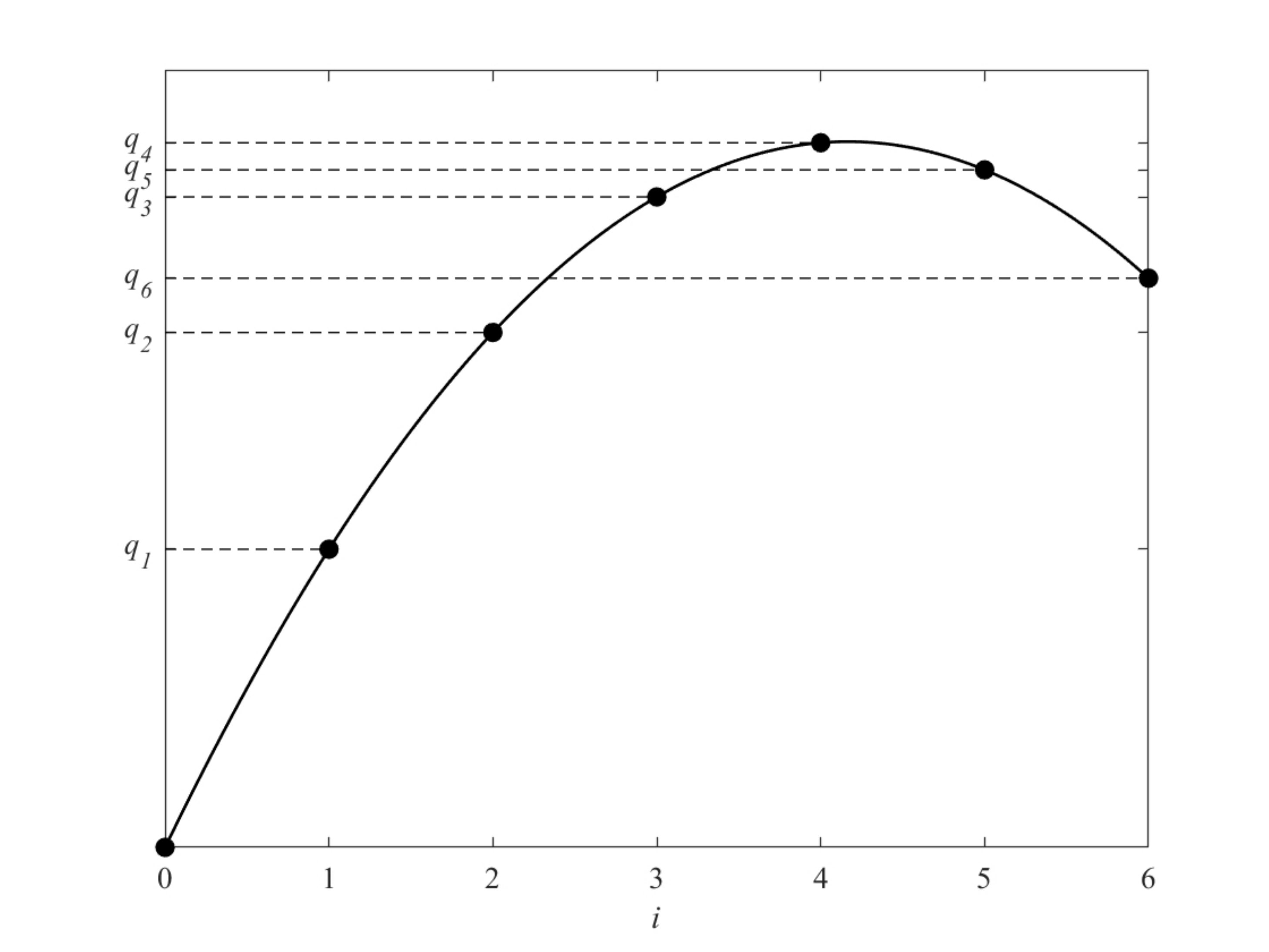}
\centering\includegraphics[width=0.45\textwidth]{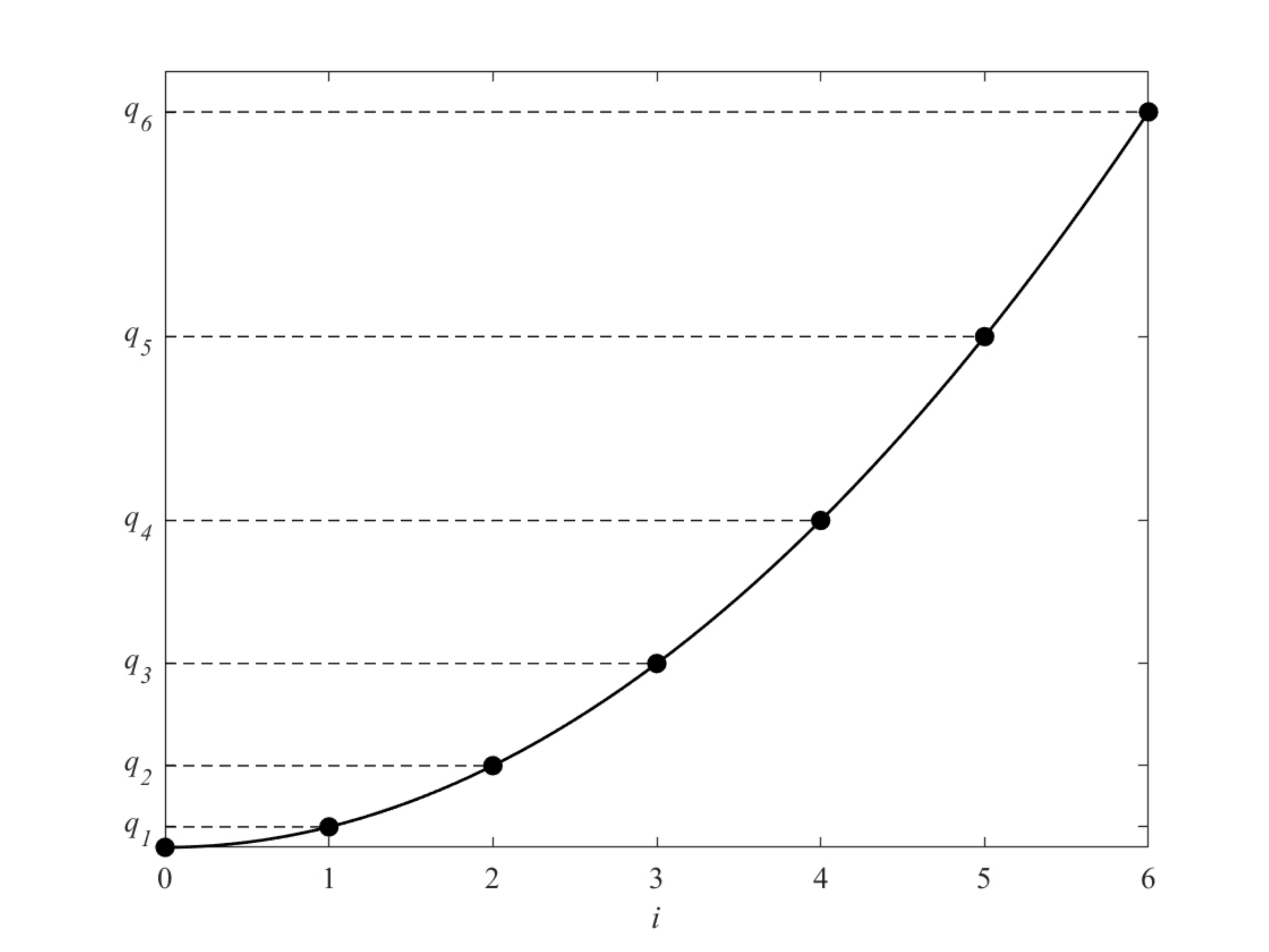}
\caption{Illustration of the drift rate $\bm q$ for model \eqref{pat-cp2} with six patches. The left figure satisfies $(\textbf{H1})$ and the right figure satisfies $(\textbf{H2})$.}
\label{fig2}
\end{figure}

First, we consider the case  that the drift rate $\bm q$ satisfies assumption $(\rm{\bf{H1}})$.
\begin{theorem}\label{dl}
Suppose that $d_1, d_2>0$, $\bm q\ge \bm 0$, and $(\rm{\bf{H1}})$ holds. Then  the following statements on model \eqref{pat-cp2} hold:
\begin{enumerate}
\item [${\rm (i)}$]  If $\ds\min_{1\le i\le n}q_i<r$, then the model has two semi-trivial equilibria $(\bm u^*, \bm 0)$ and $(\bm 0,\bm v^*)$. Moreover,
  \begin{itemize}
    \item[${\rm (i_{1})}$] if $d_1<d_2$, then $(\bm 0,\bm v^*)$ is globally asymptotically stable;
    \item[${\rm (i_{2})}$] if $d_1>d_2$, then $(\bm u^*,\bm 0)$ is globally asymptotically stable;
  \end{itemize}
\item [${\rm (ii)}$] If $\ds\min_{1\le i\le n}q_i>r$ and $q_n<rn$, then there exists $d^*>0$ (obtained in Lemma \ref{dzq}) such that
    \begin{enumerate}
    \item [${\rm (ii_{1})}$] if $d_2>d^*$ and $d_2>d_1$, then the  semi-trivial equilibrium $(\bm 0,\bm v^*)$ exists, which is globally asymptotically stable;
    \item [${\rm (ii_{2})}$] if $d_1>d^*$ and $d_1>d_2$, then the semi-trivial equilibrium $(\bm u^*,\bm 0)$ exists, which is globally asymptotically stable;
    \item [${\rm (ii_{3})}$] if $0<d_1,d_2\le d^*$, then the trivial equilibrium $(\bm 0, \bm 0)$ is globally asymptotically stable;
    \end{enumerate}
\item [${\rm (iii)}$] If $\ds\min_{1\le i\le n}q_i>r$ and $q_n>rn$, then the trivial equilibrium $(\bm 0, \bm 0)$ is globally asymptotically stable.
\end{enumerate}
\end{theorem}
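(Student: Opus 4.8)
\textbf{Proof proposal for Theorem \ref{dl}.}

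The plan is to treat \eqref{pat-cp2} as a strongly monotone competitive system on $\R_+^n\times\R_+^n$ and to reduce the global dynamics to the behaviour of the boundary equilibria. First I would record the structural facts: the graph underlying $D$ and $Q$ is connected, so the within-species flow is irreducible; solutions are bounded because once $u_i+v_i$ is large the logistic terms dominate, giving an invariant rectangle; and the semiflow is strongly monotone for the competitive order $(\bm u_1,\bm v_1)\preceq(\bm u_2,\bm v_2)\iff \bm u_1\le\bm u_2,\ \bm v_1\ge\bm v_2$. This places the model within the classical theory of two-species competition, whose dichotomy I will invoke at the end: if exactly the two semi-trivial equilibria are present, one linearly stable and the other unstable, and there is no positive equilibrium, then the stable semi-trivial equilibrium attracts every orbit with both components nontrivial. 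Thus the work splits into three pieces: existence of boundary equilibria, their local stability, and the absence of coexistence.

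For existence I would study the single-species system $\bm w'=M(d)\bm w+\bm w\circ(r\bm1-\bm w)$ with $M(d)=dD+Q\,\mathrm{diag}(\bm q)$, whose positive equilibrium $\bm w^*(d)$ exists and is globally stable for the single species exactly when the spectral bound $\mu(d):=s(M(d))$ satisfies $\mu(d)+r>0$. Two boundary computations locate the regimes: as $d\to0^+$, $M(0)=Q\,\mathrm{diag}(\bm q)$ is lower triangular with diagonal $-q_i$, so $\mu(0)=-\min_i q_i$; as $d\to\infty$, averaging against the constant kernel of the symmetric matrix $D$ gives $\mu(\infty)=-q_n/n$ since $\bm1^\top Q\,\mathrm{diag}(\bm q)\bm1=-q_n$. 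Combining these with the monotonicity of $\mu(d)$ from Lemma \ref{dzq} (under $(\mathbf{H1})$ one has $\mu(0)\le\mu(\infty)$ and $\mu$ increasing) yields the trichotomy: if $\min_i q_i<r$ then $\mu(d)+r>0$ for all $d$, so both $\bm u^*$ and $\bm v^*$ exist (case (i)); if $\min_i q_i>r$ and $q_n<rn$ then $\mu$ crosses $-r$ at a unique $d^*$, so $\bm u^*$ (resp. $\bm v^*$) exists iff $d_1>d^*$ (resp. $d_2>d^*$), case (ii); and if $q_n>rn$ then $\mu(d)+r<0$ for all $d$, so no species persists, case (iii).

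For the extinction statements (iii) and $(\mathrm{ii}_3)$ I would argue by comparison: since $v_i\ge0$, each density obeys $u_i'\le[M(d_1)\bm u]_i+u_i(r-u_i)$, so $\bm u$ lies below the single-species logistic solution, which tends to $\bm0$ because $\mu(d_1)+r\le0$ (at the critical value $d^*$ the quadratic absorption still drives the orbit to $\bm0$); symmetrically $\bm v\to\bm0$, giving global stability of $(\bm0,\bm0)$. For the exclusion statements (i) and $(\mathrm{ii}_1),(\mathrm{ii}_2)$ I would split on whether one or both semi-trivial equilibria exist. If only the winner's equilibrium exists (the loser has $d\le d^*$), the loser is washed out by the same comparison and the system collapses to the winner's single-species dynamics, which converges to its equilibrium. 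If both exist, I would compute the invasion eigenvalues, namely the principal eigenvalues of $d_2D+Q\,\mathrm{diag}(\bm q)+\mathrm{diag}(r-\bm u^*)$ and of $d_1D+Q\,\mathrm{diag}(\bm q)+\mathrm{diag}(r-\bm v^*)$; at $d_2=d_1$ the resident eigenvector $\bm u^*$ is itself a null vector, so the invasion eigenvalue vanishes there, and its sign for $d_2\ne d_1$ is governed by the sign of the eigenvalue's $d$-derivative. Section 6 amounts to showing that, under $(\mathbf{H1})$, this derivative has the sign making the larger-diffuser resident uninvadable and the smaller-diffuser resident invadable, after which the competition-exclusion dichotomy delivers global stability of the larger diffuser's equilibrium in each subcase.

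The hard part will be pinning that sign, i.e. proving that convexity of $\bm q$ is exactly what makes the faster diffuser win. The operator $M(d)$ is not symmetric, but it is reversible: it is self-adjoint in the weighted inner product $\langle\cdot,\cdot\rangle_{\bm m(d)}$ with $m_i(d)=\prod_{k<i} d/(d+q_k)$, so the left eigenvector is $\bm m(d)\circ\bm\phi$. The eigenvalue derivative then reduces to the sign of the weighted discrete Dirichlet form $\langle\bm\phi,D\bm\phi\rangle_{\bm m(d)}$, which is not a priori definite since both the weight $\bm m(d)$ and the eigenvector $\bm\phi$ vary; summation by parts turns it into a sum over the increments of $\bm m$ and of $\bm\phi$, and the hypothesis $(\mathbf{H1})$ on the increments $q_i-q_{i-1}$ is precisely what fixes its sign. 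Coupled to this is the non-existence of a coexistence equilibrium (Section 5), which I would approach by assuming a positive equilibrium $(\bm u,\bm v)$, subtracting the two equilibrium identities, and forcing a contradiction from a weighted summation/maximum-principle argument again driven by the convexity of $\bm q$. Establishing the eigenvalue sign together with ruling out coexistence for all admissible $d_1\ne d_2$ is the technical core on which the competitive-exclusion conclusion rests.
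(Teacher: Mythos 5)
Your skeleton coincides with the paper's: monotone-systems dichotomy, the limits $\lim_{d\to0}\la_1=r-\min_iq_i$ and $\lim_{d\to\infty}\la_1=r-q_n/n$ (Lemma \ref{dlimit}), the derivative-sign/single-crossing structure giving $d^*$ (Lemmas \ref{dds}, \ref{dzq}), washout by comparison when the loser's equilibrium fails to exist, and the weighted symmetrization with $\rho_i=\prod_{k<i}d/(d+q_k)$. The first genuine gap is in your stability step. You claim the sign of the invasion eigenvalue $\la_1(d_2,\bm q,\bm r-\bm u^*)$ for \emph{all} $d_2\ne d_1$ ``is governed by the sign of the eigenvalue's $d$-derivative'' at the zero $d_2=d_1$. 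That is only local information; to globalize you must exclude further zeros of $d\mapsto\la_1(d,\bm q,\bm r-\bm u^*)$, and your Dirichlet-form sign computation does not transfer there: the monotonicity of the eigenvector, which the paper proves in Lemma \ref{dds}, uses the \emph{constant} potential $\bm r$ and the chain $r\ge q_1\ge q_2-q_1\ge\cdots$ from $(\rm{\bf{H1}})$. At a hypothetical zero $\tilde d\ne d_1$ the potential is $m_i=r-u_i^*$, which is decreasing, and the induction breaks: from $\psi_2\le\psi_1$ one gets $m_1\ge q_1\ge q_i-q_{i-1}$, but $m_i\le m_1$ bounds $m_i$ in the wrong direction, so $m_i\ge q_i-q_{i-1}$ cannot be propagated. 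The paper avoids this entirely under $(\rm{\bf{H1}})$: Lemma \ref{prpla} is an \emph{exact two-point identity} comparing $\la_1(d_1,\bm q,\bm m)$ and $\la_1(d_2,\bm q,\bm m)$, which, combined with $u_1^*<\cdots<u_n^*$ (Lemma \ref{wdx0}) and the flux inequality $(d_2+q_i)\phi_i>d_2\phi_{i+1}$ for the invasion eigenvector (Lemma \ref{prop-p11}, applicable because $\bm r-\bm u^*$ is decreasing), pins the sign for every $d_2\ne d_1$ in one stroke. The derivative-plus-no-other-zero route you propose is what the paper is forced to use only under $(\rm{\bf{H2}})$, where Lemma \ref{prop-p11} is unavailable, and even there the no-zero claim \eqref{d1d2l} requires a separate contradiction argument, not just the derivative.

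The second gap is the nonexistence of a coexistence equilibrium, which you compress into ``subtract the two equilibrium identities and force a contradiction from a weighted summation/maximum-principle argument.'' The summation identities are indeed the right tool (they are Lemma \ref{indfg}), but a maximum-principle or monotonicity argument cannot close the proof: Lemma \ref{tsbh} shows that for $n\ge4$ the increment sequences $\{T_k\}$ and $\{S_k\}$ of any putative positive equilibrium \emph{must change sign}, so neither profile is monotone and no single application of the identity over $[1,n]$ yields a contradiction; \eqref{xmyy} only forces $\{g_k\}$ to change sign. The actual proof (Lemma \ref{iv2}) is a multi-case induction on the number $2p$ of sign changes of $\{T_k\}$, locating the indices $i^1>i^2>\cdots$ and $i_s$, and applying \eqref{gf2} and \eqref{fg2} on carefully chosen windows $[i,j]$ in each of the cases $(A_1)$--$(A_3)$, $(B_1)$--$(B_3)$; nothing in your sketch anticipates this mechanism, and it is the technical core of the theorem. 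A minor further imprecision: you assert $\mu(d)=s(dD+Q\,\mathrm{diag}(\bm q))$ is increasing in $d$ under $(\rm{\bf{H1}})$; the paper only establishes $\partial\la_1/\partial d>0$ \emph{at zeros} of $\la_1(d,\bm q,\bm r)$ (Lemma \ref{dds}), which is weaker but exactly what the single-crossing trichotomy in Lemma \ref{dzq} needs.
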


\begin{remark}
If $n=3$, similar results obtained in \cite{Jiang21} require weaker assumptions than $(\rm{\bf{H1}})$. It seems  that the methods in \cite{Jiang21} cannot be generalized to the case $n>3$.
\end{remark}

Then, we consider the case  that the drift rate $\bm q$ satisfies assumption $(\rm{\bf{H2}})$.

\begin{theorem}\label{dl0}
Suppose that $d_1, d_2>0$, $\bm q\ge \bm 0$, and $(\rm{\bf{H2}})$ holds. Then the the following statements on model \eqref{pat-cp2} hold:
\begin{enumerate}
\item [${\rm (i)}$]  If $q_n<rn$, then both semi-trivial equilibria $(\bm u^*, \bm 0)$ and $(\bm 0,\bm v^*)$  exist. Moreover,
  \begin{enumerate}
    \item[${\rm (i_{1})}$] if $d_1>d_2$, then $(\bm 0,\bm v^*)$ is globally asymptotically stable;
    \item[${\rm (i_{2})}$] if $d_1<d_2$, then $(\bm u^*,\bm 0)$ is globally asymptotically stable;
  \end{enumerate}
\item [${\rm (ii)}$] if $q_n>rn$ and $q_1<r$, then there exists $\widetilde d^*>0$ (obtained in Lemma \ref{dzq2}) such that
    \begin{enumerate}
    \item [${\rm (ii_{1})}$] if $d_2<\widetilde d^*$ and $d_2<d_1$, then the semi-trivial equilibrium $(\bm 0,\bm v^*)$ exists and is globally asymptotically stable;
    \item [${\rm (ii_{2})}$] if $d_1<\widetilde d^*$ and $d_1<d_2$, then the semi-trivial equilibrium $(\bm u^*,\bm 0)$ exists and is globally asymptotically stable;
    \item [${\rm (ii_{3})}$] if $d_1,d_2\ge\widetilde d^*$, then the trivial equilibrium $\bm 0$ is globally asymptotically stable;
    \end{enumerate}
\item [${\rm (iii)}$] if $q_n>rn$ and $q_1>r$, then the trivial equilibrium $\bm 0$ is globally asymptotically stable.
\end{enumerate}
\end{theorem}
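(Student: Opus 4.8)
The plan is to reduce the global dynamics to three ingredients -- existence of the semi-trivial equilibria, their local stability, and the nonexistence of a positive equilibrium -- and then feed these into the theory of two-species competitive (monotone) systems, since \eqref{pat-cp2} generates a monotone semiflow under the competitive order. Let $M(d)$ denote the $n\times n$ movement matrix with entries $M(d)_{ij}=dD_{ij}+q_jQ_{ij}$; its off-diagonal entries are nonnegative and it is irreducible, so by Perron--Frobenius theory it has a real principal eigenvalue $s(M(d))$ (the spectral bound) with a positive eigenvector. I would take from the analysis of Sections 3--4 that the single-species $u$-equation with diffusion $d_1$ has a positive, globally attracting equilibrium $\bm u^*$ exactly when $s(M(d_1))+r>0$, and similarly for $\bm v^*$ with $d_2$. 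Thus the existence of the two semi-trivial equilibria is controlled by the single scalar function $d\mapsto s(M(d))+r$.

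Second, I would establish that under $(\rm{\bf{H2}})$ the map $d\mapsto s(M(d))$ is strictly decreasing, with
\begin{equation*}
\lim_{d\to 0^+}s(M(d))=-q_1,\qquad \lim_{d\to\infty}s(M(d))=-\frac{q_n}{n}.
\end{equation*}
The first limit holds because $M(0)$ is lower bidiagonal with diagonal $(-q_1,\dots,-q_n)$, so its spectral bound equals $-\min_i q_i=-q_1$ (using that $(\rm{\bf{H2}})$ forces $\bm q$ strictly increasing); the second is the first-order perturbation of the simple zero eigenvalue of $D$, namely $\bm 1^\top M(d)\bm 1/(\bm 1^\top\bm 1)=-q_n/n$. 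Convexity enters through the chord inequality $q_1\le q_n/n$ implied by $(\rm{\bf{H2}})$, which orders the two limits, and which I would use, via the derivative formula $\partial_d s(M(d))=\bm\psi^\top D\bm\varphi/(\bm\psi^\top\bm\varphi)$ (with $\bm\varphi,\bm\psi$ the right and left principal eigenvectors of $M(d)$), to prove the monotonicity. Combining monotonicity with the endpoints yields the three regimes: if $q_n<rn$ then $s(M(d))+r>0$ for all $d>0$, so $\bm u^*$ and $\bm v^*$ exist for every $d_1,d_2$ (case (i)); if $q_n>rn$ and $q_1<r$ then $s(M(d))+r$ decreases from $r-q_1>0$ through $0$ at a unique $\widetilde d^*$ (this is Lemma \ref{dzq2}) to $r-q_n/n<0$, so $\bm u^*$ (resp. $\bm v^*$) exists iff $d_1<\widetilde d^*$ (resp. $d_2<\widetilde d^*$) (case (ii)); and if $q_1>r$ then $s(M(d))+r\le r-q_1<0$ for all $d$, forcing extinction (case (iii), in which $q_n\ge nq_1>rn$ is automatic).

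Third, on the ranges where both semi-trivial equilibria exist I would decide the stable one through the invasion eigenvalue. Linearizing the $v$-equation at $(\bm u^*,\bm 0)$ gives the principal eigenvalue $\lambda(d_1,d_2):=s\big(M(d_2)+\mathrm{diag}(r-u_i^*)\big)$, and $(\bm u^*,\bm 0)$ is linearly stable iff $\lambda(d_1,d_2)<0$. Because $\bm u^*$ solves $M(d_1)\bm u^*+\mathrm{diag}(r-u_i^*)\bm u^*=\bm 0$, we have $\lambda(d_1,d_1)=0$ with eigenvector $\bm u^*$. The crux is to show that under $(\rm{\bf{H2}})$ the map $d_2\mapsto\lambda(d_1,d_2)$ is strictly decreasing, equivalently $\mathrm{sign}\,\lambda(d_1,d_2)=\mathrm{sign}(d_1-d_2)$; I would obtain this from the same perturbation identity $\partial_{d_2}\lambda=\bm\psi^\top D\bm\varphi/(\bm\psi^\top\bm\varphi)$, rewriting the numerator as a discrete second-difference (summation-by-parts) expression whose sign is pinned down by the convexity of $\bm q$. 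Hence the slower diffuser's semi-trivial equilibrium is stable and the faster diffuser's is unstable: $d_1>d_2$ makes $(\bm 0,\bm v^*)$ stable and $(\bm u^*,\bm 0)$ unstable, and vice versa.

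Finally, I would rule out coexistence (Section 5): assuming a positive equilibrium $(\bm u,\bm v)$, the identities $r-u_i-v_i=-(M(d_1)\bm u)_i/u_i=-(M(d_2)\bm v)_i/v_i$ hold for all $i$, and I would reach a contradiction by summing a suitable combination of them and again exploiting $(\rm{\bf{H2}})$. The pieces then assemble as follows. When both species persist (both diffusion rates below the relevant threshold), existence, the stability dichotomy, and the absence of a positive equilibrium let the classical theory of two-species competitive systems force every nontrivial trajectory to the linearly stable, i.e.\ slower-diffusing, semi-trivial equilibrium; this gives (i$_1$),(i$_2$) and the part of (ii$_1$),(ii$_2$) with $d_1,d_2<\widetilde d^*$. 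When only one species can persist alone (e.g.\ $d_2<\widetilde d^*\le d_1$ in (ii$_1$)), the other is driven to $\bm 0$ by comparison with its own single-species equation and the system relaxes to the survivor's semi-trivial equilibrium, completing (ii$_1$),(ii$_2$). When neither persists ($d_1,d_2\ge\widetilde d^*$ in (ii$_3$), or $q_1>r$ in (iii)), both components tend to $\bm 0$, yielding global stability of $\bm 0$. I expect the decisive difficulties to be the two convexity-driven sign computations -- the monotonicity of the invasion eigenvalue and the nonexistence of coexistence -- since both require converting the discrete convexity of $\bm q$ into the sign of a summation-by-parts expression that couples the tridiagonal $D$ with the $\bm q$-weighted advection, uniformly in the number of patches $n$.
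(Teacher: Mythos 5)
Your overall skeleton coincides with the paper's: reduce the theorem to (a) existence of semi-trivial equilibria via the principal eigenvalue $\la_1(d,\bm q,\bm r)$, (b) an invasion-eigenvalue stability dichotomy, and (c) nonexistence of positive equilibria, then invoke the theory of two-species monotone systems. Part (a) of your plan is sound: the two limits you state are exactly Lemma \ref{dlimit}, and under $(\rm{\bf{H2}})$ the derivative formula with the symmetrizing weights $\rho_i=\prod_{k<i}d/(d+q_k)$ does yield a sign-definite expression (the paper proves the derivative sign only at zeros of $\la_1$ in Lemma \ref{dds}, which suffices; your stronger global monotonicity claim is in fact also provable for the constant potential $\bm r$, since the eigenvector-monotonicity induction works at any $d$ there).

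The first genuine gap is in your stability step. You claim $d_2\mapsto\la(d_1,d_2)=s\big(M(d_2)+\mathrm{diag}(r-u_i^*)\big)$ is strictly decreasing, to be proved ``from the same perturbation identity\ldots whose sign is pinned down by the convexity of $\bm q$.'' That sign computation requires monotonicity of the invasion eigenvector $\bm\varphi$ at \emph{every} $d_2$. But under $(\rm{\bf{H2}})$ the single-species profile $\bm u^*$ is strictly decreasing in $i$ (Lemma \ref{wdx0}(ii)), so the potential $r-u_i^*$ is \emph{increasing} in $i$ and competes against the convex advection in the eigenvector recursion: the induction that proves $\varphi_{i+1}<\varphi_i$ for the constant potential $\bm r$ (or for decreasing potentials, Lemma \ref{prop-p11}) breaks down, because $\la-(r-u_i^*)+q_i-q_{i-1}$ cannot be bounded below by $\la-(r-u_1^*)+q_1$ when $u_i^*<u_1^*$. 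The paper circumvents exactly this difficulty: it proves $\la_1(d_2,\bm q,\bm r-\bm u^*)\neq 0$ for all $d_2\neq d_1$ by a separate contradiction argument (the auxiliary sequences $\widetilde T_k,\widetilde S_k,\widetilde f_k,\widetilde g_k$ and the identity \eqref{tildef}), and computes the derivative only at $d_2=d_1$, where the eigenvector is $\bm u^*$ itself and its monotonicity is known. The second gap is part (c): your entire treatment of nonexistence of coexistence is one sentence (``reach a contradiction by summing a suitable combination\ldots exploiting $(\rm{\bf{H2}})$''), whereas this is the technical core of the paper: it requires showing that the difference sequences $\{T_k\}$ and $\{S_k\}$ must change sign (Lemma \ref{tsbh}), the weighted summation identities of Lemma \ref{indfg}, and a multi-case induction in Lemmas \ref{23no}--\ref{iv21}. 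You correctly flag both points as the decisive difficulties, but flagging them is not resolving them; as written, neither step can be completed routinely, and the stability step as proposed would fail.
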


\begin{proof}[Proof of Theorems \ref{dl} and \ref{dl0}] 
%(i) By Lemma \ref{dzq}, both semi-trivial equilibria $(\bm u^*, \bm 0)$ and $(\bm 0,\bm v^*)$  exist. By Lemma  \ref{wdx}, $(\bm u^*, \bm 0)$ is unstable if $d_1<d_2$ and $(\bm 0, \bm v^*)$ is unstable if $d_1>d_2$. By Lemma \ref{23no}, the model has no positive equilibrium. Therefore, the desired results follow from the well-known monotone dynamical system theory on two-species competition models
% \cite{Hess,hsu1996competitive, Lam-Munther2016,smith2008monotone}.

%(ii)-(iii) It is not difficult to see that if $(\bm u^*, \bm 0)$ (resp. $(\bm 0, \bm v^*)$) exists and $(\bm 0, \bm v^*)$ (resp. $(\bm u^*, \bm 0)$) does not exist, then $(\bm u^*, \bm 0)$ (resp. $(\bm 0, \bm v^*)$) is globally asymptotically stable. 

It is well-known that the competition system \eqref{pat-cp2} induces a monotone dynamical system \cite{smith2008monotone}. By the monotone dynamical system theory, 
if both semi-trivial equilibria exist,  $(\bm u^*, \bm 0)$  (resp. $(\bm 0, \bm v^*)$) is unstable and the model has no positive equilibrium, then $(\bm 0, \bm v^*)$ (resp. $(\bm u^*, \bm 0)$) is globally asymptotically stable \cite{Hess,hsu1996competitive, Lam-Munther2016,smith2008monotone}. Moreover by Lemma \ref{DS-single},
if both semi-trivial equilibria do not exist, then the trivial equilibrium $(\bm 0, \bm 0)$ is globally asymptotically stable; if $(\bm u^*, \bm 0)$ (resp. $(\bm 0, \bm v^*)$) exists and $(\bm 0, \bm v^*)$ (resp. $(\bm u^*, \bm 0)$) does not exist, then $(\bm u^*, \bm 0)$ (resp. $(\bm 0, \bm v^*)$) is globally asymptotically stable. Therefore by the discussions on the existence/nonexistence of semi-trivial equilibria in Lemmas \ref{dzq}-\ref{dzq2}, the nonexistence of positive equilibrium in Lemmas \ref{23no}-\ref{iv21} and the local stability/instability of semi-trivial equilibria in Lemma \ref{wdx},  the desired results hold. 
%It follows from Theorems \ref{23no} and \ref{iv2} that model \eqref{pat-cp2} admits no positive equilibrium for $d_2\ne d_1$. By Lemma \ref{dzq} and Theorem \ref{wdx} (i), we see that the semi-trivial equilibrium $(\bm 0,\bm v^*)$ exists and is locally asymptotically stable. If $d_1>d_*$, then  the semi-trivial equilibrium
%$(\bm u^*,\bm 0)$ also exists, and by Theorem \ref{wdx} (i) again, we see that $(\bm u^*,\bm 0)$ is unstable. If $0<d_1\le d_*$, then $(\bm u^*,\bm 0)$ does not exist.
%Therefore, by the monotone dynamical system theory \cite{Hess,hsu1996competitive, Lam-Munther2016,smith2008monotone},
%$(\bm 0,\bm v^*)$ is globally asymptotically stable.
\end{proof}

Theorems \ref{dl} and \ref{dl0} show that the drift rate affects  the global dynamics of the two competing species competiton model. If  $(\textbf{H1})$ holds, the species with faster random dispersal rate has competitive advantages. In contrast if  $(\textbf{H2})$ holds, the slower one is superior.

%\section{Simulations}

\section{Eigenvalue problems}

 For a real vector $\bm u=(u_1,\cdots,u_n)^T$, we write $\bm u\ge\bm0$ $(\bm u\gg \bm 0)$ if $u_i\ge 0$ ($u_i>0$) for all $1\le i\le n$, and $\bm u>\bm0$ if $\bm u\ge \bm0$ and $\bm u\neq \bm0$. Suppose that $A$ is an $n\times n$ real matrix. The spectral bound $s(A)$ of $A$ is defined to be
\begin{equation*}
s(A)=\max\left\{{\rm Re} \lambda:\;\lambda\;\text{is}\;\text{an}\;\text{eigenvalue}\;\text{of}\;A\right\}.
\end{equation*}
The matrix $A$ is \emph{essentially nonnegative} if all the off-diagonal elements are nonnegative.
By the Perron-Frobenius theorem, if $A$ is an irreducible essentially nonnegative matrix, then $s(A)$ is the unique eigenvalue  (called the principal eigenvalue) of $A$ corresponding to a nonnegative eigenvector.

Let  ${\bm m}=(m_1,\cdots,m_n)^T$ be a real vector, $d>0$ and $\bm q\ge \bm 0$. Consider the following eigenvalue problem:
\begin{equation}\label{eigen}
\sum_{j=1}^{n}(dD_{ij}+q_jQ_{ij})\phi_j+m_i\phi_i=\la\phi_i, \;\;i=1,\cdots,n.
\end{equation}
Since $dD+Q\text{diag}(q_i)+\text{diag}(m_i)$ is irreducible and essentially nonnegative,
\begin{equation}\label{la1}
\la_1(d,\bm q,\bm m):=s(dD+Q\text{diag}(q_i)+\text{diag}(m_i))
\end{equation}
is the principal eigenvalue of $dD+Q\text{diag}(q_i)+\text{diag}(m_i)$. The eigenvalue $\la_1(d,\bm q,\bm m)$ plays an essential role in our analysis. In the rest of this section, we will present several results on the eigenvalue $\la_1(d,\bm q,\bm m)$, which will be used later.

% The eigenvalue $\la_1(d,\bm q,\bm m)$ plays an essential role in our analysis. Firstly, if $(\bm u^*,\bm 0)$ and $(\bm 0,\bm v^*)$ are semi-trivial equilibria of model \eqref{pat-cp2}, then
% $\bm u^*$ and $\bm v^*$ are positive equilibrium
% \begin{equation}\label{pat-cp3}
% \ds \frac{dw_i}{dt}=\sum_{j=1}^{n} (d D_{ij}+q_jQ_{ij})w_j+w_i(r-w_i), \;\;i=1,\cdots,n, t>0
% \end{equation}
% with $d=d_1$ and $d=d_2$, respectively.
% It is well-know that 
% if the trivial equilibrium $\bm 0$ is unstable (i.e., $\lambda_1(d, 
% \bm q, \bm r)>0$ with $\bm r=(r,\dots, r)$), then model \eqref{pat-cp3} admits a unique positive equilibrium $\bm w^*\gg\bm 0$, which is globally asymptotically stable; if $\lambda_1(d, 
% \bm q, \bm r)\le 0$, then  $\bm 0$ is globally asymptotically stable for model \eqref{pat-cp3} \cite{cosner1996variability,Li2010,Lu1993}. 

%Secondly, it is easy to see that the stability of $(\bm u^*, \bm 0)$, if exists, is determined by the sign of $\lambda_1(d_2, \bm q, \bm r-\bm u^*)$: if  $\lambda_1(d_2, \bm q, \bm r-\bm u^*)<0$, then $(\bm u^*, \bm 0)$ is locally asymptotically stable;  if  $\lambda_1(d_2, \bm q, \bm r-\bm u^*)>0$, then $(\bm u^*, \bm 0)$ is unstable. Similarly, the stability of $(\bm u^*, \bm 0)$, if exists, is determined by the sign of $\lambda_1(d_1, \bm q, \bm r-\bm v^*)$.

We start by considering the limits of $\la_1(d,\bm q,\bm m)$ as $d\to0$ or $\infty$.

\begin{lemma}\label{dlimit}
Suppose that $d>0$ and $\bm m$ is a real vector. Let $\la_1(d,\bm q,\bm m)$ be defined as above.   Then the following equations hold:
\begin{equation*}
\lim_{d\to0}\la_1(d,\bm q,\bm m)=\max_{1\le i\le n}\{m_i-q_i\}\;\;\text{and}\;\;\lim_{d\to\infty}\la_1(d,\bm q,\bm m)=\f{\sum^n_{i=1}m_i-q_n}{n}.
\end{equation*}
\end{lemma}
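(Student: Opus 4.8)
The plan is to treat the two limits by separate elementary arguments, using throughout that $A(d):=dD+Q\,\mathrm{diag}(q_i)+\mathrm{diag}(m_i)$ is irreducible and essentially nonnegative, so by the Perron--Frobenius theorem quoted above $\la_1(d,\bm q,\bm m)=s(A(d))$ is the principal eigenvalue with a strictly positive eigenvector $\bm\phi=\bm\phi(d)\gg\bm0$. For the limit $d\to0$ the point is simply that the diffusion term disappears and what remains is triangular: from the structure of $Q$ in \eqref{M} (only subdiagonal and diagonal entries), $A(0)=Q\,\mathrm{diag}(q_i)+\mathrm{diag}(m_i)$ is lower triangular with diagonal entries $m_i-q_i$. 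Its eigenvalues are therefore exactly $\{m_i-q_i\}_{i=1}^n$, all real, so $s(A(0))=\max_{1\le i\le n}\{m_i-q_i\}$. Since the eigenvalues of a matrix, hence the spectral bound $s(\cdot)$, depend continuously on the entries, letting $d\to0$ gives the first equality.

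For $d\to\infty$ I would first record an exact identity obtained by summing the eigenvalue equation \eqref{eigen} over $i=1,\dots,n$. Writing $\mu=\la_1(d,\bm q,\bm m)$ and interchanging the order of summation, the diffusion contribution $d\sum_j\phi_j\sum_iD_{ij}$ vanishes because every column of $D$ sums to zero (it is symmetric with zero row sums), while the drift contribution collapses to $-q_n\phi_n$ because every column of $Q$ sums to zero except the last, which sums to $-1$. This yields, for every $d>0$, the exact formula
\begin{equation*}
\mu=\frac{\sum_{i=1}^n m_i\phi_i-q_n\phi_n}{\sum_{i=1}^n\phi_i}.
\end{equation*}
Since $\bm\phi\gg\bm0$ and $q_n\ge0$, this shows $\mu$ is bounded uniformly in $d$, namely $\min_i m_i-q_n\le\mu\le\max_i m_i$.

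It then remains to show that the normalized principal eigenvector flattens as $d\to\infty$, i.e. $\bm\phi(d)\to\mathbf 1/\sqrt n$ after normalizing $\|\bm\phi(d)\|_2=1$; this is the main obstacle, and everything else is bookkeeping. Dividing $A(d)\bm\phi=\mu\bm\phi$ by $d$ gives $\bigl(D+\tfrac1d(Q\,\mathrm{diag}(q_i)+\mathrm{diag}(m_i))\bigr)\bm\phi=\tfrac{\mu}{d}\bm\phi$, and since $\mu$ is bounded we have $\mu/d\to0$. By compactness of the unit sphere, along any sequence $d\to\infty$ some subsequence of $\bm\phi(d)$ converges to a vector $\bm\psi\ge\bm0$ with $\|\bm\psi\|_2=1$; passing to the limit in the displayed relation yields $D\bm\psi=\bm0$. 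Because $D$ is the negative of the path-graph Laplacian, its nonzero eigenvalues are negative and its kernel is spanned by $\mathbf 1$, so $\bm\psi=\mathbf 1/\sqrt n$; as this limit is independent of the subsequence, $\bm\phi(d)\to\mathbf 1/\sqrt n$. Substituting $\phi_i\to1/\sqrt n$ into the exact formula for $\mu$ gives $\lim_{d\to\infty}\la_1(d,\bm q,\bm m)=\frac{\sum_{i=1}^n m_i-q_n}{n}$, completing the proof.
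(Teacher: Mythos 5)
Your proposal is correct and follows essentially the same route as the paper: for $d\to0$ you use the lower-triangular structure of $Q\,\mathrm{diag}(q_i)+\mathrm{diag}(m_i)$ together with continuity of the spectral bound, and for $d\to\infty$ you sum the eigenvalue equation to get the exact identity $\la_1=\sum_i m_i\phi_i-q_n\phi_n$ (under normalization), deduce boundedness, and use compactness to show the eigenvector flattens to a constant vector via $D\hat{\bm\phi}=\bm 0$. Your write-up is in fact slightly more careful than the paper's on two minor points: you justify the $d\to0$ limit explicitly rather than calling it clear, and you note that uniqueness of the limit vector upgrades subsequential convergence to full convergence.
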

\begin{proof}
Clearly,
\begin{equation*}
\lim_{d\to0}\la_1(d,\bm q,\bm m)=\la_1(0,\bm q,\bm m)=\max_{1\le i\le n}\{m_i-q_i\}.
\end{equation*}
It remains to consider the limit of $\la_1(d,\bm q,\bm m)$ as $d\to\infty$. Let $\bm\phi=(\phi_1,\cdots,\phi_n)^T\gg\bm 0$ be the eigenvector corresponding to the eigenvalue $\la_1(d,\bm q,\bm m)$ with $\sum^n_{i=1}\phi_i=1$. By \eqref{eigen}, we have
\begin{equation}\label{egi0}
\sum_{i=1}^{n}\sum_{j=1}^{n}(dD_{ij}+q_jQ_{ij})\phi_j+\sum_{i=1}^{n}m_i\phi_i=\la_1(d,\bm q,\bm m).
\end{equation}
Since
\begin{equation*}
\sum_{i=1}^{n}\sum_{j=1}^{n}D_{ij}\phi_j=0\;\;\text{and}\;\;\sum_{i=1}^{n}\sum_{j=1}^{n}q_jQ_{ij}\phi_j=-q_n\phi_n,
\end{equation*}
we see from \eqref{egi0} that
\begin{equation}\label{egi}
-q_n\phi_n+\sum_{i=1}^{n}m_i\phi_i=\la_1(d,\bm q,\bm m).
\end{equation}
This yields
\begin{equation*}
\min_{1\le i\le n}m_i-q_n\le\la_1(d,\bm q,\bm m)\le\max_{1\le i\le n}m_i.
\end{equation*}
So, up to a subsequence if necessary, we may assume
$$
\lim_{d\to\infty}\la_1(d,\bm q,\bm m)=a\;\;\text{and}\;\;\lim_{d\to\infty}\bm\phi=\hat{\bm \phi},
$$
for some $a\in\mathbb{R}$ and  $\hat{\bm \phi}=\left(\hat{\phi}_1,\cdots,\hat{\phi}_n\right)^T\ge\bm 0$ with $\sum^n_{i=1}\hat{\phi}_i=1$. Dividing both sides of \eqref{eigen} by $d$ and taking $d\to\infty$, we obtain $D\bm \hat{\phi}=0$, which implies that
\begin{equation*}
\hat{\bm \phi}=\left(\hat{\phi}_1,\cdots,\hat{\phi}_n\right)^T=\left(\f{1}{n},\cdots,\f{1}{n}\right)^T.
\end{equation*}
Taking $d\to\infty$ in \eqref{egi}, we have $a=(\sum_{i=1}^{n}m_i-q_n)/{n}$. This completes the proof.
\end{proof}

Then we explore some further properties of $\la_1(d^*,\bm q,\bm r)$, which will be important in the proof of Lemma \ref{dzq}.

\begin{lemma}\label{dds}
Suppose $\bm r=(r,\cdots,r)^T$ with $r>0$. If $\la_1(d^*,\bm q,\bm r)=0$ for some $d^*>0$, then the following statements hold:
\begin{itemize}
  \item [${\rm (i)}$] If $(\rm{\bf{H1}})$ holds, then
\begin{equation*}
\left.\f{\partial}{\partial d}\la_1(d,\bm q,\bm r)\right|_{d=d^*}>0;
\end{equation*}
  \item [${\rm (ii)}$] If $(\rm{\bf{H2}})$ holds, then
\begin{equation*}
\left.\f{\partial}{\partial d}\la_1(d,\bm q,\bm r)\right|_{d=d^*}<0.
\end{equation*}
\end{itemize}
\end{lemma}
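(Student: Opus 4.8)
The plan is to reduce the sign of the derivative to a quadratic form in the principal eigenvectors of $M(d^*):=d^*D+Q\text{diag}(q_i)$, and then to extract that sign from the convexity of $\bm q$. Since $\bm r=(r,\dots,r)^T$, we have $\la_1(d,\bm q,\bm r)=r+s(dD+Q\text{diag}(q_i))$, so the hypothesis $\la_1(d^*,\bm q,\bm r)=0$ says exactly that $-r=s(M(d^*))$. Let $\bm\phi\gg\bm 0$ and $\bm\psi\gg\bm 0$ be the right and left principal eigenvectors of $M(d^*)$ for the eigenvalue $-r$, i.e. $M(d^*)\bm\phi=-r\bm\phi$ and $\bm\psi^T M(d^*)=-r\bm\psi^T$. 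Because $s(M(d^*))$ is simple by Perron--Frobenius and $\partial_d M=D$, standard first-order eigenvalue perturbation gives
\[
\left.\f{\partial}{\partial d}\la_1(d,\bm q,\bm r)\right|_{d=d^*}=\f{\bm\psi^T D\bm\phi}{\bm\psi^T\bm\phi}.
\]
As $\bm\psi^T\bm\phi>0$, the statement reduces to showing $\bm\psi^T D\bm\phi>0$ under $(\mathbf{H1})$ and $\bm\psi^T D\bm\phi<0$ under $(\mathbf{H2})$.

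First I would rewrite this quantity by a discrete integration by parts. Using that $(D\bm\phi)_i$ is the second difference of $\bm\phi$ with no-flux ends (set $\phi_0:=\phi_1$, $\phi_{n+1}:=\phi_n$), summation by parts yields
\[
\bm\psi^T D\bm\phi=-\sum_{i=1}^{n-1}(\psi_{i+1}-\psi_i)(\phi_{i+1}-\phi_i).
\]
Next I would record two structural facts obtained by summing the eigen-relations over the first $k$ patches. Summing $M(d^*)\bm\phi=-r\bm\phi$ and exploiting the (almost) zero column sums of $M(d^*)$ gives the flux identity $(d^*+q_k)\phi_k-d^*\phi_{k+1}=r\sum_{j=1}^{k}\phi_j$, hence
\[
d^*(\phi_{k+1}-\phi_k)=q_k\phi_k-r\sum_{j=1}^{k}\phi_j,\qquad k=1,\dots,n-1 .
\]
The adjoint relation $M(d^*)^T\bm\psi=-r\bm\psi$ gives $(d^*+q_k)(\psi_{k+1}-\psi_k)=d^*(\psi_k-\psi_{k-1})-r\psi_k$ (with $\psi_0:=\psi_1$); since the right side is negative as soon as $\psi_k-\psi_{k-1}\le0$, an induction started from $(d^*+q_1)(\psi_2-\psi_1)=-r\psi_1<0$ shows $\psi_1>\psi_2>\cdots>\psi_n>0$. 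Thus every factor $\psi_{i+1}-\psi_i$ is strictly negative, and $\bm\psi^T D\bm\phi=\sum_{i=1}^{n-1}(\psi_i-\psi_{i+1})(\phi_{i+1}-\phi_i)$ is a combination of the increments $\phi_{i+1}-\phi_i$ with strictly positive weights.

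The decisive and hardest step is to show that this positively weighted combination of increments is positive under $(\mathbf{H1})$ and negative under $(\mathbf{H2})$, and this is where the convexity of $\bm q$ must enter. I expect this to be the main obstacle, because the increments $\phi_{k+1}-\phi_k=\tfrac1{d^*}\!\left(q_k\phi_k-r\sum_{j\le k}\phi_j\right)$ need not be of one sign, so a bare monotonicity argument for $\bm\phi$ fails, and the hypotheses of the lemma (only $\la_1(d^*,\bm q,\bm r)=0$, with no control on the sizes of $q_1$ or $q_n$) do not fix the sign of any single increment. My plan is to substitute the $\bm\phi$-flux identity to obtain $d^*\bm\psi^T D\bm\phi=\sum_{i=1}^{n-1}(\psi_i-\psi_{i+1})\big(q_i\phi_i-r\sum_{j\le i}\phi_j\big)$ and then perform a (double) Abel summation, so that the second differences $q_{k+1}-2q_k+q_{k-1}$ of the drift appear explicitly; these are $\le0$ under $(\mathbf{H1})$ and $\ge0$ under $(\mathbf{H2})$, and I would combine this with the monotonicity of $\bm\psi$ and with sign information on the partial sums $\sum_{j\le k}\phi_j$ to close the argument. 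As a consistency check on the direction of the inequalities, note the identity $\la_1(d,\bm q,\bm r)=r-q_n\phi_n$ valid when $\sum_i\phi_i=1$; together with the limits in Lemma~\ref{dlimit} this forces $\la_1$ to cross zero from below under $(\mathbf{H1})$ and from above under $(\mathbf{H2})$ in the regimes of Theorems~\ref{dl}--\ref{dl0}, in agreement with the claimed signs of the derivative.
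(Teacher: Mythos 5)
Your reduction is correct as far as it goes, and it essentially re-derives the paper's key identity rather than replacing it: the weights $\rho_i=\prod_{k=0}^{i-1}\frac{d^*}{d^*+q_k}$ used in the paper symmetrize $d^*D+Q\,\mathrm{diag}(q_i)$, so the vector with entries $\psi_i=\rho_i\phi_i$ is precisely the left principal eigenvector, and since $\psi_i-\psi_{i+1}=\frac{\rho_{i+1}}{d^*}\left[(d^*+q_i)\phi_i-d^*\phi_{i+1}\right]$, your formula $\partial_d\la_1=\bm\psi^TD\bm\phi/\bm\psi^T\bm\phi$ combined with summation by parts is the same as \eqref{fzjs}. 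Your flux identity $(d^*+q_k)\phi_k-d^*\phi_{k+1}=r\sum_{j\le k}\phi_j>0$ is also correct, and it gives a cleaner proof of \eqref{phif} (equivalently, of the strict monotonicity of $\bm\psi$) than the paper's induction. So steps 1--4 of your proposal are sound.

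The gap is in the decisive step, and your assessment of that step is wrong. You assert that the hypotheses "do not fix the sign of any single increment" $\phi_{i+1}-\phi_i$, and you therefore propose an unexecuted double Abel summation designed to surface second differences of $\bm q$. In fact, under $(\mathbf{H1})$ together with $\la_1(d^*,\bm q,\bm r)=0$ every increment is strictly positive, and under $(\mathbf{H2})$ every increment is strictly negative; this is exactly what the paper proves, and it is exactly what closes the argument, since then every term $(\psi_i-\psi_{i+1})(\phi_{i+1}-\phi_i)$ has one fixed sign. The proof uses the difference form \eqref{phi0} of the eigen-equations: if, under $(\mathbf{H1})$, one had $\phi_2\le\phi_1$, then \eqref{phi0-a} forces $r\ge q_1$, and $(\mathbf{H1})$ propagates this to $r+q_{i-1}-q_i\ge0$ for all $i$ (with strictness at the downstream end); the middle equations \eqref{phi0-b} then yield by induction that $\bm\phi$ is non-increasing, which contradicts \eqref{phi0-c}. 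The same scheme, started at an arbitrary index, gives $\phi_{i+1}>\phi_i$ for every $i$, and the mirror-image argument handles $(\mathbf{H2})$. Note how the convexity/concavity of $\bm q$ enters: not through its second differences in a quadratic form, but through the propagation step "$r+q_{i-1}-q_i\ge 0$ at one index implies the same at all later indices." As written, the hard half of your proof is a plan rather than an argument, and the plan starts from an incorrect premise about what the hypotheses can determine; whether your Abel-summation route can be completed is unclear, whereas the inductive sign determination above finishes the lemma immediately from your own identity.
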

\begin{proof}
Let $\bm\phi=(\phi_1,\cdots,\phi_n)^T\gg\bm 0$ be the eigenvector corresponding to the  eigenvalue $\la_1(d,\bm q,\bm r)$ with $\sum^n_{i=1}\phi_i=1$. Differentiating \eqref{eigen} with respect to $d$, we obtain
\begin{equation}\label{dqd}
\sum_{j=1}^{n}(dD_{ij}+q_jQ_{ij})\f{\partial\phi_j}{\partial d}+\sum_{j=1}^{n}D_{ij}\phi_j+r\f{\partial\phi_i}{\partial d}=\la\f{\partial\phi_i}{\partial d}+\f{\partial\la}{\partial d}\phi_i,\;\;i=1,\cdots,n.
\end{equation}
Multiplying \eqref{dqd} by $\phi_i$ and \eqref{eigen} by $\f{\partial\phi_i}{\partial d}$ and taking the difference of them, we have
\begin{equation}\label{dxj0}
\sum_{j\neq i}(dD_{ij}+q_jQ_{ij})
\left(\f{\partial\phi_j}{\partial d}\phi_i-\f{\partial\phi_i}{\partial d}\phi_j\right)
+\sum_{j=1}^{n}D_{ij}\phi_i\phi_j=\f{\partial\la}{\partial d}\phi_i^2,\;\;i=1,\cdots,n.
\end{equation}
Let
\begin{equation*}
\rho_i=\prod_{k=0}^{i-1}\f{d}{d+q_k},\;\;\;\;i=1,\cdots,n.
\end{equation*}
Multiplying \eqref{dxj0} by $\rho_i$ and summing up over $i$, we get
\begin{equation}\label{dxj}
\f{\partial\la}{\partial d}\sum_{i=1}^{n}\rho_i\phi_i^2
=\sum_{i=1}^{n}\sum_{j\neq i}\rho_i(dD_{ij}+q_jQ_{ij})
\left(\f{\partial\phi_j}{\partial d}\phi_i-\f{\partial\phi_i}{\partial d}\phi_j\right)+\sum_{i,j=1}^{n}\rho_iD_{ij}\phi_i\phi_j.
\end{equation}
It is not difficult to check that  $\left(\rho_i(dD_{ij}+q_jQ_{ij})\right)$ is symmetric. Therefore, we have
$$ 
\sum_{i=1}^{n}\sum_{j\neq i}\rho_i(dD_{ij}+q_jQ_{ij})
\left(\f{\partial\phi_j}{\partial d}\phi_i-\f{\partial\phi_i}{\partial d}\phi_j\right)=0.
$$
It follows that 
\begin{equation}\label{dxj1}
\f{\partial\la}{\partial d}\sum_{i=1}^{n}\rho_i\phi_i^2
=\sum_{i,j=1}^{n}\rho_iD_{ij}\phi_i\phi_j.
\end{equation}

Let $\bm\phi^*=(\phi_1^*,\dots, \phi_n^*)^T$ be the positive eigenvector corresponding to $\la_1(d^*,\bm q,\bm r)=0$  with $\sum^n_{i=1}\phi^*_i=1$ and 
$$
\rho^*_i=\prod_{k=0}^{i-1}\f{d^*}{d^*+q_k}, \ \ \ i=1, \dots, n.
$$
Evaluating \eqref{dxj} at $d=d^*$ leads to
\begin{equation}\label{fzjs}
\begin{split}
\left.\f{\partial\la}{\partial d}\right|_{d=d^*}\sum_{i=1}^{n}\rho^*_i(\phi_i^*)^2=&\sum_{i,j=1}^{n}\rho^*_iD_{ij}\phi^*_i\phi^*_j\\
=&\sum_{i=1}^{n-1}\rho^*_i\phi^*_i(\phi^*_{i+1}-\phi^*_i)+
\sum_{i=2}^{n}\rho^*_i\phi^*_i(\phi^*_{i-1}-\phi^*_i)\\
=&\f{1}{d^*}\sum_{i=1}^{n-1}\rho^*_{i+1}(\phi^*_{i+1}-\phi^*_i)\left[(d^*+q_i)\phi^*_i-d^*\phi^*_{i+1}\right].
\end{split}
\end{equation}

By \eqref{eigen},  if $n\ge3$ then $\bm \phi^*$ satisfies
\begin{subequations}\label{phi0}
\begin{align}
&d^*(\phi^*_2-\phi^*_1)=-\phi^*_1(r-q_1),\label{phi0-a}\\
&d^*(\phi^*_{i+1}-\phi^*_i)-(d^*+q_{i-1})(\phi^*_i-\phi^*_{i-1})=-\phi^*_i(r+q_{i-1}-q_i),\;i=2,\cdots,n-1,\label{phi0-b}\\
&-(d^*+q_{n-1})(\phi^*_n-\phi^*_{n-1})=-\phi^*_n(r+q_{n-1}-q_n).\label{phi0-c}
\end{align}
\end{subequations}
If $n=2$, then $\bm \phi^*$ satisfies only \eqref{phi0-a} and \eqref{phi0-c}.

Now we consider case (i) and suppose that assumption $(\textbf{H1})$ holds.  We claim  $\phi^*_2>\phi^*_1$. To see it, suppose to the contrary that $\phi^*_2\le \phi^*_1$. By \eqref{phi0-a}, we have  $r\ge q_1$. Then by $(\textbf{H1})$, we have
\begin{equation}\label{h}
r\ge q_1=q_1-q_0\ge\cdots \ge q_n-q_{n-1},
\end{equation}
with at lease one strict inequality. 
If $n\ge 3$, by \eqref{phi0-b} and induction, we obtain that
\begin{equation}\label{phijs}
\phi^*_1\ge \cdots\ge \phi^*_n.
\end{equation}
%Moreover, since there exists at least one strict inequality in \eqref{h}, we must have $\phi_{n-1}^*>\phi_n^*$.
If $n=2$, \eqref{phijs} holds trivially. By \eqref{phijs}, \eqref{phi0-c} and \eqref{h}, we have
\begin{equation*}
0\le-(d^*+q_{n-1})(\phi^*_n-\phi^*_{n-1})=-\phi^*_n(r+q_{n-1}-q_n)<0,
\end{equation*}
which is a contradiction. Therefore, $\phi^*_2>\phi^*_1$. Similarly, we can show that
\begin{equation}\label{phiddx}
\phi^*_{i+1}>\phi^*_i,\;\;\;\;i=1,\cdots,n-1.
\end{equation}

We  rewrite \eqref{eigen} as follows:
\begin{subequations}\label{phi1}
\begin{align}
&-[(d^*+q_1)\phi^*_1-d^*\phi^*_2]=-r\phi^*_1,\label{phi1-a}\\
&[(d^*+q_{i-1})\phi^*_{i-1}-d^*\phi^*_i]-[(d^*+q_i)\phi^*_i-d^*\phi^*_{i+1}]=-r\phi^*_i,\;i=2,\cdots,n-1,\label{phi1-b}\\
&[(d^*+q_{n-1})\phi^*_{n-1}-d^*\phi^*_n]=-(r-q_n)\phi^*_n.\label{phi1-c}
\end{align}
\end{subequations}
By \eqref{phi1-a}, we have $(d^*+q_1)\phi^*_1-d^*\phi^*_2>0$. Then by \eqref{phi1}, we can show that
\begin{equation}\label{phif}
(d^*+q_i)\phi^*_i-d^*\phi^*_{i+1}>0,\;\; \;\;i=1,\cdots,n-1.
\end{equation}

Combining \eqref{fzjs}, \eqref{phiddx} and \eqref{phif}, we have
\begin{equation*}
\left.\f{\partial}{\partial d}\la_1(d,\bm q,\bm r)\right|_{d=d^*}>0.
\end{equation*}
This proves (i). Case (ii) can be prove similarly. Indeed, we can use a similar argument as above to prove $\phi_i^*>\phi_{i+1}^*$ for $i=1, \dots, n-1$, which trivially implies $(d^*+q_i)\phi^*_i-d^*\phi^*_{i+1}>0$ for $i=1, \dots, n-1$. By these and  \eqref{fzjs}, (ii) holds. 
\end{proof}

If $\{m_i\}_{i=1}^n$ is monotone decreasing, we have the following result about the properties of the eigenvector corresponding to $\la_1(d,\bm q,\bm m)$.
\begin{lemma}\label{prop-p11}
Suppose that $d>0$, $\bm q\ge \bm 0$, and the components of $\bm m=(m_1, \dots, m_n)$ satisfy
 $m_1\ge \cdots \ge m_n$ with at least one strict inequality. Let $\bm {\phi}=(\phi_1,\cdots,\phi_n)^T\gg \bm 0$ be an eigenvector of \eqref{eigen} corresponding to the principal eigenvalue $\la_1(d,\bm q,\bm m)$. Then $\bm \phi$ satisfies
\begin{equation}\label{ephi}
(d+q_i)\phi_{i}-d\phi_{i+1}>0, \;\;\;\;i=1,\cdots,n-1.
\end{equation}
\end{lemma}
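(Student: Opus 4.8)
The plan is to track the discrete ``flux'' quantities
$$J_i := (d+q_i)\phi_i - d\phi_{i+1}, \qquad i=1,\dots,n-1,$$
which are exactly the expressions whose positivity we must establish, together with the boundary conventions $J_0:=0$ and $J_n:=q_n\phi_n$. First I would rewrite the $i$-th equation of \eqref{eigen} in flux form. Writing $\la:=\la_1(d,\bm q,\bm m)$, a direct regrouping of the three blocks of \eqref{eigen} (the node $i=1$ equation, the interior equations $2\le i\le n-1$, and the node $i=n$ equation) turns the system into the single first-order recursion
$$J_i - J_{i-1} = (m_i-\la)\phi_i, \qquad i=1,\dots,n,$$
valid uniformly once the boundary values $J_0=0$ and $J_n=q_n\phi_n$ are adopted; this is the only computational step and it is routine. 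Summing the recursion gives the closed form $J_i=\sum_{k=1}^{i}(m_k-\la)\phi_k$.

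The key structural observation is that, since $\phi_k>0$ and $m_1\ge\cdots\ge m_n$, the increment $J_i-J_{i-1}=(m_i-\la)\phi_i$ has the same sign as $m_i-\la$, which is non-increasing in $i$. Hence the finite sequence $J_0,J_1,\dots,J_n$ is unimodal: it is non-decreasing up to some peak index and non-increasing thereafter. For such a sequence the minimum over all indices is attained at an endpoint, so $\ds\min_{0\le i\le n}J_i=\min\{J_0,J_n\}=\min\{0,\,q_n\phi_n\}=0$, which already yields $J_i\ge 0$ for every $i$.

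It remains to upgrade this to strict positivity for $1\le i\le n-1$, and this is where the hypothesis of at least one strict inequality among the $m_k$ is essential. First I would show $\la<m_1$: if instead $\la\ge m_1$, every increment would be $\le 0$, so $J$ would be non-increasing from $J_0=0$ down to $J_n=q_n\phi_n\ge 0$, forcing all $J_i=0$ and hence $m_k=\la$ for all $k$, contradicting the strict inequality. Then I would argue by contradiction: if $J_i=0$ for some interior $i$, unimodality forces an entire run of zeros reaching one of the endpoints. On the increasing side this gives $J_1=J_0=0$, hence $m_1=\la$, contradicting $\la<m_1$; on the decreasing side it gives $J_{n-1}=J_n=0$, hence both $q_n=0$ and $m_n=\la$, and then $m_k\ge m_n=\la$ makes $J$ non-decreasing, again forcing all $J_i=0$ and all $m_k$ equal, a contradiction.

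I expect the main obstacle to be the downstream endpoint: because $J_n=q_n\phi_n$ may vanish when $q_n=0$, the unimodality bound only delivers $J_{n-1}\ge 0$, so the strict inequality hypothesis must be used precisely to exclude an interior zero propagating to the right end. The left endpoint is easier, since $J_0=0$ is forced and the strict bound $\la<m_1$ closes it immediately. Finally I would note that the argument is uniform in $n\ge 2$, the case $n=2$ using only the $i=1$ and $i=n$ equations.
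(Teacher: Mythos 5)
Your proof is correct, and it is built on exactly the same object as the paper's: your flux variables $J_i=(d+q_i)\phi_i-d\phi_{i+1}$ together with the recursion $J_i-J_{i-1}=(m_i-\la)\phi_i$, $J_0=0$, $J_n=q_n\phi_n$, are precisely the system \eqref{eigg} rewritten, and in both arguments the engine is that monotonicity of $\bm m$ lets the sign of the increments $(m_i-\la)\phi_i$ change at most once along the chain. The difference is the global organization. The paper argues by contradiction and induction, index by index: if $(d+q_1)\phi_1-d\phi_2\le 0$, then \eqref{eigg-a} forces $\la_1\ge m_1$, hence all increments are nonpositive, induction pushes the nonpositive flux to the right end, and \eqref{eigg-c} yields the contradiction $0\ge(\la_1+q_n-m_n)\phi_n>0$; the remaining indices of \eqref{ephi} are then dispatched with ``similar arguments.'' You instead make a single structural observation --- the flux sequence is unimodal, so its minimum is attained at an endpoint, giving $J_i\ge\min\{0,\,q_n\phi_n\}=0$ for all $i$ simultaneously --- and then upgrade to strict positivity by excluding interior zeros, using $\la<m_1$ on the ascending side and the degenerate pair $q_n=0$, $m_n=\la$ on the descending side. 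Your route buys a uniform treatment of all indices (no appeal to repeating the argument) and isolates exactly where each hypothesis enters: the strict inequality among the $m_i$ kills the all-flat case $m_1=\cdots=m_n=\la$, while $q_n\ge 0$ is what pins the right endpoint value at a nonnegative number. The paper's route is shorter per index and avoids the peak/run-of-zeros case analysis, but your write-up is, if anything, the more complete of the two, since it replaces the paper's ``similar arguments'' step by a genuinely index-free argument.
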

\begin{proof}
If $n\ge3$, then $\bm \phi$ satisfies
\begin{subequations}\label{eigg}
\begin{align}
& -(d+q_1)\phi_1+d\phi_2=(\lambda_1-m_1)\phi_1,\label{eigg-a}\\
&\left[(d+q_{i-1})\phi_{i-1}-d\phi_i\right]-\left[(d+q_i)\phi_{i}-d\phi_{i+1}\right]=(\lambda_1-m_i)\phi_i,\;\;i=2,\cdots,n-1,\label{eigg-b}\\
&(d+q_{n-1})\phi_{n-1}-d\phi_n=(\lambda_1+q_n-m_n)\phi_n.\label{eigg-c}
\end{align}
\end{subequations}
If $n=2$, $\bm \phi$ satisfies only \eqref{eigg-a} and \eqref{eigg-c}. We first claim that $(d+q_1)\phi_1-d\phi_2>0$. Suppose to the contrary that $(d+q_1)\phi_1-d\phi_2
\le 0$. Then by \eqref{eigg-a}, we have $\la_1-m_1\ge0$. Since $m_1\ge \cdots \ge m_n$ with at least one strict inequality, we obtain that $\la_1-m_i\ge0$ for $i=1,\cdots,n-1$ and $\la_1+q_n-m_n>0$. Then by \eqref{eigg-b} and induction, we  can deduce that
\begin{equation*}
(d+q_i)\phi_i-d\phi_{i+1}\le0,\;\;\;\;i=1,\cdots,n-1.
\end{equation*}
Therefore, by \eqref{eigg-c}, we obtain
\begin{equation}\label{contrd}
0\ge(d+q_{n-1})\phi_{n-1}-d\phi_n=(\lambda_1+q_n-m_n)\phi_n>0,
\end{equation}
which is a contradiction. Thus, $(d+q_1)\phi_1-d\phi_2>0$. Applying similar arguments to \eqref{eigg-a}-\eqref{eigg-c}, we can prove \eqref{ephi}.
\end{proof}

The following result is used in the proof of  Lemma \ref{wdx} later.
\begin{lemma}\label{prpla}
Suppose that $d_1, d_2>0$, $\bm q\ge \bm 0$, and $\bm m$ is a real vector. Let $\bm\phi_j=(\phi_{j,1},\cdots,\phi_{j,n})^T\gg\bm 0$ be an eigenvector of \eqref{eigen} corresponding to the principal eigenvalue $\la_1(d_j,\bm q,\bm m)$ for $j=1,2$. Then, the following equation holds:
\begin{equation}\label{iden}
\begin{split}
&\left[\la_1(d_1,\bm q,\bm m)-\la_1(d_2,\bm q,\bm m)\right]
\sum_{k=1}^n\rho^{(2)}_k\phi_{1,k}\phi_{2,k}\\
=&\f{(d_2-d_1)}{d_2}\sum_{k=1}^{n-1}\rho^{(2)}_{k+1}
\left(\phi_{1,k+1}-\phi_{1,k}\right)\left[d_2\phi_{2,k+1}-(d_2+q_k)\phi_{2,k}\right],
\end{split}
\end{equation}
where $\rho^{(2)}_0=1$ and
\begin{equation}\label{rho2}
\rho^{(2)}_k=\prod_{i=1}^{k-1}\f{d_2}{d_2+q_i},\;\;\;\;i=2,\cdots,n.
\end{equation}
\end{lemma}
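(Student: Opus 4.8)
The plan is to derive the identity \eqref{iden} by the classical ``two eigenvectors'' manipulation, adapted to the weighted discrete setting already exploited in Lemma \ref{dds}. Writing $\la^{(j)}:=\la_1(d_j,\bm q,\bm m)$, I would first record the two eigenvalue equations
\[
\sum_{l=1}^n(d_1D_{il}+q_lQ_{il})\phi_{1,l}+m_i\phi_{1,i}=\la^{(1)}\phi_{1,i},\qquad \sum_{l=1}^n(d_2D_{il}+q_lQ_{il})\phi_{2,l}+m_i\phi_{2,i}=\la^{(2)}\phi_{2,i},
\]
then multiply the first by $\phi_{2,i}$, the second by $\phi_{1,i}$, and subtract. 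The diagonal contributions $m_i\phi_{1,i}\phi_{2,i}$ cancel, leaving an expression whose right-hand side is $(\la^{(1)}-\la^{(2)})\phi_{1,i}\phi_{2,i}$.

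The key structural observation is that the weight $\rho^{(2)}_i$ symmetrizes the $d_2$-operator: a direct check using the recursion $\rho^{(2)}_{k+1}=\rho^{(2)}_kd_2/(d_2+q_k)$ gives $\rho^{(2)}_i(d_2D_{il}+q_lQ_{il})=\rho^{(2)}_l(d_2D_{li}+q_iQ_{li})$, exactly the symmetry claim invoked in Lemma \ref{dds}. To use this I would split the $d_1$-operator as $d_1D_{il}+q_lQ_{il}=(d_2D_{il}+q_lQ_{il})+(d_1-d_2)D_{il}$, so that after multiplying the subtracted equation by $\rho^{(2)}_i$ and summing over $i$ the genuinely two-sided part $\sum_{i,l}\rho^{(2)}_i(d_2D_{il}+q_lQ_{il})(\phi_{1,l}\phi_{2,i}-\phi_{2,l}\phi_{1,i})$ vanishes: relabelling $i\leftrightarrow l$ in the second half and invoking the symmetry shows the two halves cancel. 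What survives is
\[
(\la^{(1)}-\la^{(2)})\sum_{k=1}^n\rho^{(2)}_k\phi_{1,k}\phi_{2,k}=(d_1-d_2)\sum_{i,l=1}^n\rho^{(2)}_iD_{il}\phi_{1,l}\phi_{2,i}.
\]

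It then remains to rewrite the discrete-Laplacian term. Setting $\Delta_k:=\phi_{1,k+1}-\phi_{1,k}$ with the boundary convention $\Delta_0=\Delta_n=0$, the structure \eqref{M} of $D$ gives $\sum_lD_{il}\phi_{1,l}=\Delta_i-\Delta_{i-1}$, so a discrete summation by parts converts $\sum_i\rho^{(2)}_i\phi_{2,i}(\Delta_i-\Delta_{i-1})$ into $\sum_{k=1}^{n-1}(\rho^{(2)}_k\phi_{2,k}-\rho^{(2)}_{k+1}\phi_{2,k+1})\Delta_k$, with no boundary contributions precisely because $\Delta_0=\Delta_n=0$. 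Finally, using $\rho^{(2)}_k=\rho^{(2)}_{k+1}(d_2+q_k)/d_2$ to rewrite the coefficient $\rho^{(2)}_k\phi_{2,k}-\rho^{(2)}_{k+1}\phi_{2,k+1}$ as $-\rho^{(2)}_{k+1}[d_2\phi_{2,k+1}-(d_2+q_k)\phi_{2,k}]/d_2$ produces exactly the right-hand side of \eqref{iden}, the factor $(d_2-d_1)/d_2$ emerging from combining $(d_1-d_2)$ with $-1/d_2$.

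I expect the only real obstacle to be careful bookkeeping rather than any conceptual difficulty: one must keep the two diffusion rates straight, remembering that the weight $\rho^{(2)}$ is tied to $d_2$ and hence symmetrizes only the $d_2$-operator, which is exactly what forces the splitting $d_1D=d_2D+(d_1-d_2)D$. The boundary terms in the Abel summation and the index shift in the recursion for $\rho^{(2)}$ are the two places where a sign or off-by-one error could creep in, so I would verify the endpoint cases $i=1$ and $i=n$ of $\sum_lD_{il}\phi_{1,l}=\Delta_i-\Delta_{i-1}$ explicitly against the definition \eqref{M} of $D$.
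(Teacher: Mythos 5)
Your proposal is correct and is essentially the paper's own argument: both proofs rest on cross-multiplying the two eigenvalue relations, weighting by $\rho^{(2)}_k$, splitting the $d_1$-operator as the $d_2$-operator plus $(d_1-d_2)D$, and a discrete summation by parts whose boundary terms vanish. The only organizational difference is that you cancel the two-sided part abstractly via the symmetry of the matrix $\bigl(\rho^{(2)}_i(d_2D_{il}+q_lQ_{il})\bigr)$ — the same device the paper invokes in Lemma \ref{dds} — and then sum by parts only the remaining $(d_1-d_2)D$ term, whereas the paper's proof of Lemma \ref{prpla} carries out two parallel weighted summations (its $\Phi^j_k$, $\Psi_k$ bookkeeping in \eqref{iden1}--\eqref{2iden}) and subtracts them at the end, so the intermediate identities look different but the computation is the same.
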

\begin{proof}
Denote $\la_1^j=\la_1(d_j,\bm q,\bm m)$, $j=1, 2$. Let
\begin{equation}\label{phi}
\Phi^j_0=0,\;\;\Phi^j_n=-q_n\phi_{j,n},\;\;\Phi^j_k=d_2\phi_{j,k+1}-(d_2+q_k)\phi_{j,k},\;\;k=1,\cdots,n-1,
\end{equation}
and
\begin{equation}\label{psi0}
\Psi_0=\Psi_n=0,\;\;\Psi_k=(d_2-d_1)\left(\phi_{1,k+1}-\phi_{1,k}\right),\;\;k=1,\cdots,n-1.
\end{equation}
Then we see from \eqref{eigen} that
\begin{equation}\label{iden1}
\Phi^1_k-\Phi^1_{k-1}+\left(m_k-\la^1_1\right)\phi_{1,k}=\Psi_k-\Psi_{k-1},\;\;k=1,\cdots,n.
\end{equation}
Multiplying \eqref{iden1} by $\rho^{(2)}_k\phi_{2,k}$ and summing  over  $k$, we obtain
\begin{equation}\label{iden2}
\begin{split}
&\sum^n_{k=1}\rho^{(2)}_k\left(\Phi^1_k-\Phi^1_{k-1}\right)\phi_{2,k}\\
=&\sum^n_{k=1}\rho^{(2)}_k\left(\Psi_k-\Psi_{k-1}\right)\phi_{2,k}
-\sum^n_{k=1}\rho^{(2)}_k\left(m_k-\la^1_1\right)\phi_{1,k}\phi_{2,k}.\\
\end{split}
\end{equation}
A direct computation yields
\begin{equation}\label{iden2i1}
\begin{split}
&\sum^n_{k=1}\rho^{(2)}_k\left(\Phi^1_k-\Phi^1_{k-1}\right)\phi_{2,k}\\
=&-\rho^{(2)}_1\Phi^1_0\phi_{2,1}+\sum^{n-1}_{k=1}\rho^{(2)}_k\Phi^1_k\left(\phi_{2,k}-\f{d_2}{d_2+q_k}\phi_{2,k+1}\right)
+\rho^{(2)}_n\Phi^1_n\phi_{2,n}\\
=&-\frac{1}{d_2}\sum^{n-1}_{k=1}\rho^{(2)}_{k+1}\Phi^1_k\Phi_k^2
-q_n\rho^{(2)}_n\phi_{1,n}\phi_{2,n},\\
\end{split}
\end{equation}
where we have used \eqref{phi} in the last step. Then we compute
\begin{equation}\label{iden2i2}
\begin{split}
&\sum^n_{k=1}\rho^{(2)}_k\left(\Psi_k-\Psi_{k-1}\right)\phi_{2,k}\\
=&-\rho^{(2)}_1\Psi_0\phi_{2,1}+\sum^{n-1}_{k=1}\rho^{(2)}_k\Psi_k\left(\phi_{2,k}-\f{d_2}{d_2+q_k}\phi_{2,k+1}\right)
+\rho^{(2)}_n\Psi_n\phi_{2,n}\\
=&-\frac{1}{d_2}\sum^{n-1}_{k=1}\rho^{(2)}_{k+1}\Psi_k\Phi_k^2,
\end{split}
\end{equation}
where we have used \eqref{psi0} in the last step. By \eqref{iden2}-\eqref{iden2i2}, we have
\begin{equation}\label{iden3}
\begin{split}
&\sum^n_{k=1}\rho^{(2)}_k\left(m_k-\la^1_1\right)\phi_{1,k}\phi_{2,k}\\
=&-\f{1}{d_2}\sum^{n-1}_{k=1}\rho^{(2)}_{k+1}\Psi_k\Phi^2_k
+\f{1}{d_2}\sum^{n-1}_{k=1}\rho^{(2)}_{k+1}\Phi^1_k\Phi^2_k
+q_n\rho^{(2)}_n\phi_{1,n}\phi_{2,n},
\end{split}
\end{equation}
Similarly, by \eqref{eigen}, we have
\begin{equation}\label{1iden}
\Phi^2_k-\Phi^2_{k-1}+\left(m_k-\la^2_1\right)\phi_{2,k}=0,\;\;k=1,\cdots,n.
\end{equation}
Multiplying \eqref{1iden} by $\rho^{(2)}_k\phi_{1,k}$ and summing  over  $k$, we obtain
\begin{equation}\label{2iden}
\begin{split}
&\sum^n_{k=1}\rho^{(2)}_k\left(m_k-\la^2_1\right)\phi_{1,k}\phi_{2,k}\\
=&-\sum^{n-1}_{k=1}\rho^{(2)}_k\Phi^2_k\left(\phi_{1,k}-\f{d_2}{d_2+q_k}\phi_{1,k+1}\right)
+q_n\rho^{(2)}_n\phi_{1,n}\phi_{2,n}\\
=&\f{1}{d_2}\sum^{n-1}_{k=1}\rho^{(2)}_{k+1}\Phi^1_k\Phi^2_k
+q_n\rho^{(2)}_n\phi_{1,n}\phi_{2,n}.\\
\end{split}
\end{equation}
Subtracting \eqref{iden3} from \eqref{2iden}, we obtain  \eqref{iden}.
\end{proof}

\begin{remark}
We remark that Lemmas \ref{prop-p11} and \ref{prpla} have been proved in \cite{chen2022} for a special case when $q_1=q_2=\cdots=q_n$.
\end{remark}

\section{Existence and properties of semi-trivial equilibria}
To study the existence and properties of the semi-trivial equilibria of \eqref{pat-cp2}, we  need to consider the positive equilibrium of the following:
 \begin{equation}\label{pat-cp3}
 \left\{
 \begin{array}{lll}
\ds\frac{dw_i}{dt}=\sum_{j=1}^{n} (d D_{ij}+q_jQ_{ij})w_j+w_i(r-w_i), \;\;i=1,\cdots,n, \ t>0,\\
\bm w(0)>\bm 0.
\end{array}
\right.
\end{equation}
The global dynamics of \eqref{pat-cp3} as stated in the following result is well-known:
\begin{lemma}[\cite{cosner1996variability,Li2010,Lu1993}]\label{DS-single}
Suppose that $d>0$, $\bm q\ge \bm 0$, and $\bm r =(r,\cdots,r)^T$ with $r>0$. Then if $\la_1(d, \bm q, \bm r)>0$, \eqref{pat-cp3} has a unique positive equilibrium which is globally asymptotically stable; if $\la_1(d, \bm q, \bm r)\le0$, the trivial equilibrium $\bm 0$ of \eqref{pat-cp3} is globally asymptotically stable. 
\end{lemma}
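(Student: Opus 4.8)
The plan is to recognize \eqref{pat-cp3} as a cooperative (quasimonotone), irreducible, and strictly subhomogeneous system, and then invoke the theory of monotone subhomogeneous dynamical systems. First I would verify the structural hypotheses. Writing $F_i(\bm w)=\sum_{j=1}^n(dD_{ij}+q_jQ_{ij})w_j+w_i(r-w_i)$, the off-diagonal Jacobian entries are $\partial F_i/\partial w_j=dD_{ij}+q_jQ_{ij}\ge0$ for $i\ne j$, and the matrix $dD+Q\,\text{diag}(q_i)$ is irreducible because the patches form a connected line. Hence the induced semiflow is monotone, indeed strongly monotone on the interior of the nonnegative cone, which stays positively invariant. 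Next I would establish dissipativity: setting $S=\sum_i w_i$ and using $\sum_i w_i^2\ge S^2/n$ together with the conservative structure of $D$ and the sign of the outflow term $-q_nw_n\le0$, one gets $\dot S\le rS-S^2/n$, so every orbit is bounded and eventually enters a fixed compact set, yielding global existence and a compact global attractor.

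The second ingredient is the strict subhomogeneity of $F$ on the positive cone. The linear part is homogeneous of degree one, while for the logistic term $\alpha\,w_i(r-\alpha w_i)>\alpha\,w_i(r-w_i)$ for every $\alpha\in(0,1)$ and $w_i>0$, since $-\alpha^2w_i^2>-\alpha w_i^2$. Thus $F(\alpha\bm w)\gg\alpha F(\bm w)$ for all $\bm w\gg\bm0$ and $\alpha\in(0,1)$, so \eqref{pat-cp3} is strictly subhomogeneous. The linearization at $\bm 0$ is $dD+Q\,\text{diag}(q_i)+\text{diag}(r)$, whose spectral bound is exactly $\la_1(d,\bm q,\bm r)$ by \eqref{la1}; hence $\bm 0$ is linearly stable when $\la_1(d,\bm q,\bm r)<0$ and unstable when $\la_1(d,\bm q,\bm r)>0$.

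I would then settle existence by a left-eigenvector identity. If $\bm w^*\gg\bm 0$ were an equilibrium and $\bm\psi\gg\bm0$ denotes the principal eigenvector of the transpose $dD^T+\text{diag}(q_i)Q^T+\text{diag}(r)$, multiplying the equilibrium equation by $\psi_i$ and summing gives $\la_1(d,\bm q,\bm r)\sum_i\psi_iw_i^*=\sum_i\psi_i(w_i^*)^2>0$; since $\sum_i\psi_iw_i^*>0$, this forces $\la_1(d,\bm q,\bm r)>0$. Consequently, if $\la_1(d,\bm q,\bm r)\le0$ there is no positive equilibrium, and a monotone, dissipative, subhomogeneous system with no interior fixed point has $\bm 0$ as its global attractor on the nonnegative cone; this disposes of $\la_1(d,\bm q,\bm r)\le0$, including the borderline $\la_1(d,\bm q,\bm r)=0$. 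When $\la_1(d,\bm q,\bm r)>0$, I would use the principal eigenvector $\bm\phi\gg\bm0$ to build the subsolution $\varepsilon\bm\phi$, valid since $F(\varepsilon\bm\phi)=\varepsilon\la_1(d,\bm q,\bm r)\bm\phi-\varepsilon^2(\phi_i^2)_i\gg\bm0$ for small $\varepsilon$, together with a large constant supersolution, and squeeze by monotone iteration to obtain a positive equilibrium; uniqueness and global asymptotic stability then follow from the standard convergence theorem for strictly subhomogeneous strongly monotone flows.

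The main obstacle I expect is not the existence or the structural verifications but the \emph{global} convergence, particularly in the critical case $\la_1(d,\bm q,\bm r)=0$: linear stability information alone does not give global attraction, so the argument must lean on the full subhomogeneous machinery (a part-metric contraction or Hirsch-type convergence criterion) and, for $\la_1(d,\bm q,\bm r)=0$, on the nonexistence of positive equilibria established above to exclude any nontrivial limit set. Since the statement is classical and quoted from \cite{cosner1996variability,Li2010,Lu1993}, I would cite the corresponding convergence theorem rather than reprove it.
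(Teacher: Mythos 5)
Your proposal is correct, and it is in fact the argument the paper implicitly relies on: the paper offers no proof of Lemma \ref{DS-single}, quoting it directly from the cited references \cite{cosner1996variability,Li2010,Lu1993}, whose proofs rest on exactly the ingredients you assemble (cooperativity and irreducibility of $dD+Q\,\mathrm{diag}(q_i)$, dissipativity via the column-sum identity $\sum_i D_{ij}=0$ and $\sum_i q_jQ_{ij}\phi_j=-q_n\phi_n$, strict subhomogeneity of the logistic nonlinearity, the left-eigenvector identity ruling out positive equilibria when $\la_1\le 0$, and the sub/supersolution construction plus the convergence theorem for strongly monotone strictly subhomogeneous flows). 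Your structural verifications and the handling of the critical case $\la_1=0$ via nonexistence of interior equilibria are sound, so there is nothing to correct.
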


By Lemmas \ref{dlimit}-\ref{dds} and \ref{pat-cp3}, we have the following two results about the existence/nonexistence of positive equilibrium of \eqref{pat-cp3}.
\begin{lemma}\label{dzq}
Suppose that $\bm q\ge \bm 0$, $(\rm{\bf{H1}})$ holds, and $\bm r =(r,\cdots,r)^T$ with $r>0$. Then the following statements holds:
\begin{enumerate}
\item [${\rm (i)}$]  If $\ds\min_{1\le i\le n}q_i<r$, then \eqref{pat-cp3} admits a unique positive equilibrium for any $d>0$, which is globally asymptotically stable;
\item [${\rm (ii)}$] If $\ds\min_{1\le i\le n}q_i>r$ and $q_n<rn$, then there exists $d^*(\bm q, \bm r)>0$ such that $\la_1(d^*,\bm q,\bm r)=0$, $\la_1(d,\bm q,\bm r)<0$ for $0<d<d^*$, and $\la_1(d,\bm q,\bm r)>0$ for $d>d^*$. Moreover, the following results hold:
    \begin{enumerate}
    \item [${\rm (ii_1)}$] If $d\in(d^*,\infty)$, then \eqref{pat-cp3} admits a unique positive equilibrium, which is globally asymptotically stable;
    \item [${\rm (ii_2)}$] If $d\in(0,d^*]$, then the trivial equilibrium $\bm 0$ of \eqref{pat-cp3} is globally asymptotically stable;
    \end{enumerate}
\item [${\rm (iii)}$] If $\ds\min_{1\le i\le n}q_i>r$ and $q_n>rn$, then the trivial equilibrium $\bm 0$ of \eqref{pat-cp3} is globally asymptotically stable for any $d>0$.
\end{enumerate}
\end{lemma}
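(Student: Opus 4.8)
The plan is to reduce the entire statement to a sign analysis of the map $d\mapsto\la_1(d,\bm q,\bm r)$ on $(0,\infty)$, and then to invoke Lemma \ref{DS-single}, which already translates the sign of $\la_1$ into the asserted global dynamics (positive equilibrium when $\la_1>0$, trivial equilibrium when $\la_1\le 0$). The three cases (i)--(iii) differ only in the signs of the two limiting values supplied by Lemma \ref{dlimit} with $\bm m=\bm r$, namely
\[
\lim_{d\to 0}\la_1(d,\bm q,\bm r)=r-\min_{1\le i\le n}q_i
\quad\text{and}\quad
\lim_{d\to\infty}\la_1(d,\bm q,\bm r)=r-\frac{q_n}{n}.
\]
I would first record that $d\mapsto\la_1(d,\bm q,\bm r)$ is continuous, indeed smooth, on $(0,\infty)$, since it is a simple Perron--Frobenius eigenvalue of a matrix depending smoothly on $d$; continuity is needed for the intermediate value argument, and differentiability is already presupposed by Lemma \ref{dds}.

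The key structural input is Lemma \ref{dds}: under $(\mathbf{H1})$, at every $d^*>0$ with $\la_1(d^*,\bm q,\bm r)=0$ one has $\partial_d\la_1|_{d=d^*}>0$. I would use this to prove the decisive fact that $\la_1(\cdot,\bm q,\bm r)$ has \emph{at most one} zero in $(0,\infty)$, and that any such zero is a strict upward crossing. Indeed, if $d_1^*<d_2^*$ were two zeros, I may take $d_2^*$ to be the smallest zero exceeding $d_1^*$; since $\partial_d\la_1|_{d=d_1^*}>0$ forces $\la_1>0$ immediately to the right of $d_1^*$, continuity together with the absence of zeros on $(d_1^*,d_2^*)$ gives $\la_1>0$ throughout $(d_1^*,d_2^*)$, whence $\la_1$ approaches $d_2^*$ from above and $\partial_d\la_1|_{d=d_2^*}\le 0$, contradicting Lemma \ref{dds}.

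With this in hand the three cases become routine sign bookkeeping. For (i), $\min_i q_i<r$ gives $\lim_{d\to0}\la_1>0$; if $\la_1$ had any zero, the smallest one $d^*$ would be approached from above (as $\la_1>0$ on $(0,d^*)$), forcing $\partial_d\la_1|_{d=d^*}\le 0$, again contradicting Lemma \ref{dds}. Hence $\la_1>0$ for all $d>0$, and Lemma \ref{DS-single} yields the unique globally asymptotically stable positive equilibrium. For (ii), $\min_i q_i>r$ and $q_n<rn$ give $\lim_{d\to0}\la_1<0$ and $\lim_{d\to\infty}\la_1>0$; the intermediate value theorem produces a zero, uniqueness makes it the claimed $d^*$, and the single-upward-crossing property forces $\la_1<0$ on $(0,d^*)$ and $\la_1>0$ on $(d^*,\infty)$, so Lemma \ref{DS-single} gives (ii$_1$) and (ii$_2$), with the boundary value $d=d^*$ assigned to the trivial-equilibrium regime since $\la_1(d^*,\bm q,\bm r)=0\le 0$. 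For (iii), both limits are negative; a zero would be an upward crossing making $\la_1>0$ for all large $d$, contradicting $\lim_{d\to\infty}\la_1<0$, so there is no zero and $\la_1<0$ throughout, whence the trivial equilibrium is globally asymptotically stable.

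The only genuinely delicate point is the monotonicity-at-zeros property, but that is exactly the content of the already-established Lemma \ref{dds}; granting it, the remaining work is the elementary sign analysis organized above, and I expect no further obstacle beyond reading off the boundary limits correctly from Lemma \ref{dlimit} (so that $\max_i\{r-q_i\}=r-\min_i q_i$) and handling the endpoint $d=d^*$ in case (ii) with care.
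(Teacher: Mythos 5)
Your proof is correct and follows essentially the same route as the paper: both reduce the lemma to a sign analysis of $d\mapsto\la_1(d,\bm q,\bm r)$ via Lemma \ref{DS-single}, using Lemma \ref{dlimit} for the limits as $d\to 0$ and $d\to\infty$ and Lemma \ref{dds} to show that every zero of $\la_1(\cdot,\bm q,\bm r)$ is a strict upward crossing, hence that there is at most one zero. You in fact supply more detail than the paper, which writes out only case (i) and declares (ii)--(iii) similar.
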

\begin{proof}
(i) By $\ds\min_{1\le i\le n}q_i<r$ and Lemma \ref{dlimit}, we have $\lim_{d\to0}\la_1(d,\bm q,\bm m)>0$ and $\lim_{d\to\infty}\la_1(d,\bm q,\bm m)>0$. By Lemma \ref{dds} (i), we have $\la_1(d,\bm q,\bm r)>0$  for all $d>0$. Therefore by Lemma \ref{DS-single}, \eqref{pat-cp3} has a unique positive equilibrium, which is globally asymptotically stable. The proof of (ii)-(iii) is similar, so it is omitted here. 
\end{proof}

If we replace (\rm{\bf{H1}}) by (\rm{\bf{H2}}) in Lemma \ref{dzq}, we have the following result:
\begin{lemma}\label{dzq2}
Suppose that $\bm q$ satisfies $(\rm{\bf{H2}})$ and $\bm r =(r,\cdots,r)^T$ with $r>0$. Then the following statements holds:
\begin{enumerate}
\item [${\rm (i)}$]  If $q_n<rn$, then \eqref{pat-cp3} admits a unique positive equilibrium for any $d>0$, which is globally asymptotically stable;
\item [${\rm (ii)}$] If $q_n>rn$ and $q_1<r$, then there exists $\widetilde d^*(\bm q, \bm r)>0$ such that $\la_1(\widetilde d^*,\bm q,\bm r)=0$, $\la_1(d,\bm q,\bm r)>0$ for $0<d<\widetilde d^*$, and $\la_1(d,\bm q,\bm r)<0$ for $d>\widetilde d^*$. Moreover,  the following results hold:
    \begin{enumerate}
    \item [${\rm (ii_1)}$] If $d\in(0,\widetilde d^*)$, then \eqref{pat-cp3} admits a unique positive equilibrium, which is globally asymptotically stable solution;
    \item [${\rm (ii_2)}$] If $d\in[\widetilde d^*,\infty)$, then the trivial equilibrium $\bm 0$ of \eqref{pat-cp3} is globally asymptotically stable;
    \end{enumerate}
\item [${\rm (iii)}$] If $q_n>rn$ and $q_1>r$, then the trivial equilibrium $\bm 0$ of \eqref{pat-cp3} is globally asymptotically stable for any $d>0$.
\end{enumerate}
\end{lemma}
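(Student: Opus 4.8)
The plan is to prove Lemma \ref{dzq2} in close parallel with Lemma \ref{dzq}, the only change being that I replace Lemma \ref{dds}(i) by Lemma \ref{dds}(ii); this flips the sign of $\partial_d\la_1$ at a zero from positive to negative, and hence reverses the direction in which $\la_1(\cdot,\bm q,\bm r)$ is allowed to cross zero. Throughout I would abbreviate $g(d):=\la_1(d,\bm q,\bm r)$, noting that $g$ is continuous (indeed smooth) in $d>0$ since it is the simple principal eigenvalue of an irreducible essentially nonnegative matrix depending smoothly on $d$, so that Lemma \ref{dds} applies and the derivative $g'(d)$ is meaningful.

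First I would record the endpoint limits. By Lemma \ref{dlimit} with $\bm m=\bm r$ one has $\lim_{d\to0}g(d)=r-\min_{1\le i\le n}q_i$ and $\lim_{d\to\infty}g(d)=r-q_n/n$. Under $(\mathbf{H2})$ the differences $q_i-q_{i-1}$ are nondecreasing in $i$, and $q_1-q_0=q_1\ge0$, so every difference is nonnegative; hence $\bm q$ is nondecreasing, $\min_{1\le i\le n}q_i=q_1$, and moreover $q_1=q_1-q_0\le (q_n-q_0)/n=q_n/n$. Consequently $\lim_{d\to0}g(d)=r-q_1\ge r-q_n/n=\lim_{d\to\infty}g(d)$, which is precisely the monotone-decreasing tendency one expects under $(\mathbf{H2})$ and which aligns the endpoint signs in each of the three cases below.

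The key structural step is then to combine Lemma \ref{dds}(ii) with the limits. Since $g'(d^*)<0$ at every $d^*>0$ with $g(d^*)=0$, each zero of $g$ is isolated and $g$ is strictly decreasing through it; in particular $g$ has \emph{at most one} zero on $(0,\infty)$, because between two consecutive zeros $g$ would be negative just to the right of the smaller and positive just to the left of the larger, forcing a further zero by the intermediate value theorem. Furthermore, if a (necessarily unique) zero $\widetilde d^*$ exists, then $g>0$ on $(0,\widetilde d^*)$ and $g<0$ on $(\widetilde d^*,\infty)$. The three cases now follow from the endpoint signs: in (i), $q_n<rn$ gives $\lim_{d\to\infty}g>0$, so a zero would force $g<0$ for all large $d$ and there can be none, whence $g>0$ on $(0,\infty)$; in (ii), $q_n>rn$ and $q_1<r$ give $\lim_{d\to0}g>0>\lim_{d\to\infty}g$, so the sign change produces exactly one zero $\widetilde d^*$ with $g>0$ on $(0,\widetilde d^*)$ and $g<0$ on $(\widetilde d^*,\infty)$; in (iii), $q_n>rn$ and $q_1>r$ give both limits negative, and a zero would force $g>0$ immediately to its left, contradicting $\lim_{d\to0}g<0$, so $g<0$ on $(0,\infty)$.

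Finally I would translate the sign of $g$ into dynamics via Lemma \ref{DS-single}: wherever $g(d)>0$, \eqref{pat-cp3} has a unique globally asymptotically stable positive equilibrium, and wherever $g(d)\le0$ the trivial equilibrium $\bm 0$ is globally asymptotically stable. This gives statement (i) for all $d>0$, statements ${\rm (ii_1)}$ and ${\rm (ii_2)}$ on $(0,\widetilde d^*)$ and $[\widetilde d^*,\infty)$ respectively, and statement (iii) for all $d>0$. Because Lemmas \ref{dlimit}, \ref{dds}, and \ref{DS-single} are already in hand, no genuinely new estimate is required; the only points demanding care are the at-most-one-zero deduction, which must use $g'<0$ at every zero (and nowhere rely on a zero being transversal in a different direction), and the elementary inequality $q_1\le q_n/n$ under $(\mathbf{H2})$ that makes the endpoint signs in each case consistent with the claimed conclusions.
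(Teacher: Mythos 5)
Your proposal is correct and follows essentially the same route as the paper, which proves Lemma \ref{dzq2} by simply repeating the argument for Lemma \ref{dzq} with Lemma \ref{dds}(i) replaced by Lemma \ref{dds}(ii): endpoint limits from Lemma \ref{dlimit}, the sign of $\partial_d\la_1$ at any zero to control crossings, and Lemma \ref{DS-single} to convert the sign of $\la_1$ into global dynamics. Your write-up in fact makes explicit the details (the at-most-one-zero argument and the inequality $q_1\le q_n/n$ under $(\mathbf{H2})$) that the paper leaves to the reader.
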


%It follows from \cite{} that
%Therefore, we only need to show that $\la_1(d,\bm q,\bm m)>0$. Let $\bm\phi=(\phi_1,\cdots,\phi_n)\gg\bm 0$ be the eigenvector %corresponding to the principal eigenvalue $\la_1(d,\bm q,\bm m)$ with $\sum^n_{i=1}\phi_i=1$. Substituting $\la=\la_1(d,\bm q,\bm m)$ into %\eqref{eigen} and summing the results over all $i$, we have
%\begin{equation*}
%\la_1(d,\bm q,\bm m)=\sum^n_{i=1}m_i\phi_1>0.
%\end{equation*}
%This completes the proof.
%\end{proof}

Then we prove some properties of the positive equilibrium of  \eqref{pat-cp3}, which will be useful later. 
\begin{lemma}\label{wdx0}
Suppose that  $\bm q\ge \bm 0$ and $d, r>0$. Let  $\bm w^*=(w_1,\cdots,w_n)^T\gg0$ be the positive equilirbium of \eqref{pat-cp3} if exists. Then the following statements hold:
\begin{enumerate}
    \item[$\rm{(i)}$] If $(\rm{\bf{H1}})$ holds, then $w_1^*<\cdots<w_n^*$ and $(d+q_i)w^*_{i}>dw^*_{i+1}$ for $i=1,\cdots,n-1$;
    \item[$\rm{(ii)}$] If $(\rm{\bf{H2}})$ holds, then $w_1^*>\cdots>w_n^*$ and $(d+q_i)w^*_{i}>dw^*_{i+1}$ for $i=1,\cdots,n-1$.
\end{enumerate}
\end{lemma}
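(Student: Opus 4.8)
The plan is to read the equilibrium identity $\sum_{j}(dD_{ij}+q_jQ_{ij})w_j+w_i(r-w_i)=0$ for \eqref{pat-cp3} in conservation (flux) form. Setting $F_i=(d+q_i)w_i-dw_{i+1}$ for $1\le i\le n-1$, together with $F_0=0$ and $F_n=q_nw_n$, the equations telescope to $F_i-F_{i-1}=w_i(r-w_i)$, so $F_i=\sum_{k=1}^{i}w_k(r-w_k)$. Writing $\delta_i=w_{i+1}-w_i$, $p_i=q_i-q_{i-1}$ (so (\textbf{H1}) reads $p_1\ge\cdots\ge p_n$ and (\textbf{H2}) reads $p_1\le\cdots\le p_n$, each with one strict step), and $g_i=(r-w_i)-p_i$, eliminating $F_i$ yields the second–order recursion $d\delta_1=-w_1g_1$, $d\delta_i=(d+q_{i-1})\delta_{i-1}-w_ig_i$ for $2\le i\le n-1$, together with the downstream relation $(d+q_{n-1})\delta_{n-1}=w_ng_n$. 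These three relations encode all $n$ equilibrium equations and are the workhorse of the proof.

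The heart of the matter is the monotonicity, which I would prove in two stages for case (ii), i.e.\ (\textbf{H2}) with target $w_1>\cdots>w_n$; case (i) is the mirror image. First I establish $\delta_1<0$ by contradiction: if $\delta_1\ge0$ then $g_1\le0$, and an induction gives $\delta_i\ge0$ and $g_i\le0$ for every $i$, because once $w$ is nondecreasing $r-w_i$ is nonincreasing while $p_i$ is nondecreasing, so $g_i\le g_{i-1}\le0$ survives and the recursion returns $\delta_i\ge0$. The downstream relation then forces $\delta_{n-1}=0=g_n$, but the strict step in (\textbf{H2}) makes some $g_{k+1}<0$ strictly, which drives $\delta_{k+1}>0$ and propagates to $\delta_{n-1}>0$ (or directly $g_n<0$ when the strict step is the last one), contradicting $\delta_{n-1}=0$. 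Second, to upgrade $\delta_1<0$ to $\delta_i<0$ for all $i$, I take the smallest index $j$ with $\delta_j\ge0$ (necessarily $j\ge2$); the recursion at $i=j$ forces $g_j<0$ strictly, and from there the same ``$r-w_i$ decreasing versus $p_i$ increasing'' mechanism keeps $g_i<0$ and yields $\delta_i>0$ for $i>j$, again colliding with the downstream relation.

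With monotonicity in hand the flux inequality $(d+q_i)w_i>dw_{i+1}$ is immediate in case (ii): since $w_i>w_{i+1}>0$ and $q_i\ge0$, one has $(d+q_i)w_i\ge dw_i>dw_{i+1}$. In case (i) it is not automatic, so I would instead note that the equilibrium identity says precisely that $\bm w^*\gg\bm 0$ solves \eqref{eigen} with $\bm m=\bm r-\bm w^*$ and eigenvalue $0$; by the Perron--Frobenius theorem $\bm w^*$ is therefore the principal eigenvector associated with $\la_1(d,\bm q,\bm r-\bm w^*)=0$. The monotonicity $w_1<\cdots<w_n$ makes the components of $\bm r-\bm w^*$ strictly decreasing, so Lemma \ref{prop-p11} applies verbatim and delivers $(d+q_i)w_i-dw_{i+1}>0$.

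I expect the monotonicity to be the only genuine obstacle; the flux inequalities are then either trivial or a direct appeal to Lemma \ref{prop-p11}. Within the monotonicity the delicate issue is strictness: the recursion only transports nonstrict inequalities, so the degenerate scenario in which $w$ is locally flat and every $g_i$ vanishes must be ruled out, and this is exactly where the strict step guaranteed by (\textbf{H1})/(\textbf{H2}), funneled through the downstream relation $(d+q_{n-1})\delta_{n-1}=w_ng_n$, closes the argument. Some routine care is also needed at the boundary indices $i=1,n$ where the recursion degenerates, and in separating $n=2$ from $n\ge3$, but these are bookkeeping once the flux formulation is in place.
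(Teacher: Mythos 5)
Your proposal is correct and follows essentially the same route as the paper's own proof: the paper likewise writes the equilibrium equations of \eqref{pat-cp3} in difference form, proves the strict monotonicity of $\bm w^*$ by contradiction and induction, with the contradiction delivered at the downstream equation using the strict inequality in $(\textbf{H1})$/$(\textbf{H2})$, and then obtains $(d+q_i)w^*_i>dw^*_{i+1}$ trivially in the $(\textbf{H2})$ case and by applying Lemma \ref{prop-p11} to $\bm m=\bm r-\bm w^*$ (noting $\bm w^*$ is a principal eigenvector for $\la_1(d,\bm q,\bm r-\bm w^*)=0$) in the $(\textbf{H1})$ case. The differences are purely organizational: your $\delta_i$, $p_i$, $g_i$ bookkeeping and first-failing-index induction correspond to the paper's ``continuing this process'' argument.
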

\begin{proof}
We first prove (i).
By \eqref{pat-cp3}, if $n\ge3$, then $\bm w^*$ satisfies
\begin{subequations}\label{bpf}
\begin{align}
&d(w^*_2-w^*_1)=-w^*_1(r-q_1-w^*_1),\label{bpf-a}\\
&d(w^*_{i+1}-w^*_i)-(d+q_{i-1})(w^*_i-w^*_{i-1})=-w^*_i(r+q_{i-1}-q_i-w^*_i),\;i=2,\cdots,n-1,\label{bpf-b}\\
&-(d+q_{n-1})(w^*_n-w^*_{n-1})=-w^*_n(r+q_{n-1}-q_n-w^*_n).\label{bpf-c}
\end{align}
\end{subequations}
If $n=2$, then $\bm w^*$ only satisfies \eqref{bpf-a} and \eqref{bpf-c}.

Suppose to the contrary that $w^*_2\le w^*_1$. Then we see from \eqref{bpf-a} that $r-q_1-w^*_1\ge0$. If $n=2$, by $(\textbf{H1})$, we have $r+q_1-q_2-w^*_2>0$. This together with \eqref{bpf-c} implies that
\begin{equation*}
0\le-(d+q_1)(w^*_2-w^*_1)=-w^*_2(r+q_1-q_2-w^*_2)<0,
\end{equation*}
which is a contradiction. If $n\ge3$, then by \eqref{bpf-b},  $(\textbf{H1})$, and induction, we can show
\begin{equation}\label{inq1}
r+q_{n-1}-q_n-w^*_n>0\;\;\text{and}\;\;w^*_2\ge \cdots\ge w^*_n.
\end{equation}
So, by \eqref{bpf-c} and \eqref{inq1}, we have
\begin{equation}\label{md}
0\le-(d+q_{n-1})(w^*_n-w^*_{n-1})=-w^*_n(r+q_{n-1}-q_n-w^*_n)<0,
\end{equation}
which is a contradiction. Thus, $w^*_2>w^*_1$.

Suppose to the contrary that $w^*_3\le w^*_2$. Since $w^*_2>w^*_1$, we see from \eqref{bpf-b} that
$r+q_1-q_2-w^*_2>0$.
Then, by  $(\textbf{H1})$, \eqref{bpf-b}, and induction, we can show
\begin{equation*}
w^*_2\ge w^*_3\ge \cdots\ge w^*_n\;\;\text{and}\;\;r+q_{n-1}-q_n-w^*_n>0,
\end{equation*}
which leads to a contradiction as \eqref{md}. Continuing this process, we obtain 
\begin{equation}\label{uddx}
w^*_{i+1}>w^*_i\;\;\text{for}\;\;i=1,\cdots,n-1.
\end{equation}

By \eqref{uddx}, we have $r-w^*_1>\cdots>r-w^*_n$.
Noticing that $\bm w^*$ is an eigenvector corresponding to eigenvalue $\la_1(d,\bm q,\bm r-\bm w^*)=0$, 
by Lemma \ref{prop-p11}, we have 
$$
(d+q_i)w^*_{i}>dw^*_{i+1}, \ \ \ i=1, \dots, n-1.
$$

Now we consider (ii). Using similar arguments as (i), we can show that $w_1^*>\cdots>w_n^*$. This trivially yields $(d+q_i)w^*_{i}>dw^*_{i+1}$ for $i=1,\cdots,n-1$.
\end{proof}

\section{Nonexistence of positive equilibrium}

In this section, we prove the nonexistence of positive equilibrium of model \eqref{pat-cp2}, which is an essential step towards understanding the global dynamics of \eqref{pat-cp2}. 

Suppose that $\bm q\ge \bm 0$ and $r>0$. Let $(\bm u,\bm v)$ be a positive equilibrium of \eqref{pat-cp2} if exists. Define
\begin{subequations}\label{figi}
\begin{align}
&f_0=f_n=0,\;\;f_k=d_1u_{k+1}-(d_1+q_k)u_k,\;\;\;\;k=1,\cdots,n-1,\label{figi-f}\\
&g_0=g_n=0,\;\;g_k=d_2v_{k+1}-(d_2+q_k)v_k,\;\;\;\;k=1,\cdots,n-1.\label{figi-g}
\end{align}
\end{subequations}
By \eqref{pat-cp2}, we have
\begin{subequations}\label{f-k}
\begin{align}
&f_{k}-f_{k-1}=-u_k(r-u_k-v_k) ,\;\;\;k=1,\cdots,n-1,\label{f-k-a}\\
&f_{n}-f_{n-1}=-u_n(r-q_n-u_n-v_n),\label{f-k-b}
\end{align}
\end{subequations}
and
\begin{subequations}\label{g-k}
\begin{align}
&g_{k}-g_{k-1}=-v_k(r-u_k-v_k) ,\;\;\;k=1,\cdots,n-1,\label{g-k-a}\\
&g_{n}-g_{n-1}=-v_n(r-q_n-u_n-v_n).\label{g-k-b}
\end{align}
\end{subequations}
\begin{lemma}\label{proffg}
 Let $\{f_k\}_{k=0}^{n}$ and $\{g_k\}_{k=0}^{n}$ be defined in \eqref{figi}. Then $f_1,g_1,f_{n-1},g_{n-1}<0$.%$\{f_k\}_{k=0}^{n}$ has the same monotonicity as $\{g_k\}_{k=0}^{n}$, and .
\end{lemma}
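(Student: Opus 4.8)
The plan is to first convert the four flux inequalities into two scalar statements about the end patches, and then to establish those by a discrete maximum-principle (monotone propagation) argument run along the chain. Evaluating \eqref{f-k-a} and \eqref{g-k-a} at $k=1$ with $f_0=g_0=0$, and \eqref{f-k-b}, \eqref{g-k-b} with $f_n=g_n=0$, I read off
\[
f_1=-u_1(r-u_1-v_1),\qquad g_1=-v_1(r-u_1-v_1),
\]
\[
f_{n-1}=u_n(r-q_n-u_n-v_n),\qquad g_{n-1}=v_n(r-q_n-u_n-v_n).
\]
Because $u_i,v_i>0$, the lemma is equivalent to the two inequalities $u_1+v_1<r$ and $u_n+v_n>r-q_n$.

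I would prove $u_1+v_1<r$ by contradiction, assuming $u_1+v_1\ge r$ so that $f_1,g_1\ge0$. The engine of the argument is a coupling between flux signs and monotonicity: if $f_k\ge0$ then $d_1u_{k+1}\ge(d_1+q_k)u_k\ge d_1u_k$, whence $u_{k+1}\ge u_k$, and likewise $g_k\ge0$ gives $v_{k+1}\ge v_k$; hence $u_{k+1}+v_{k+1}\ge u_k+v_k\ge r$, which makes $r-u_{k+1}-v_{k+1}\le0$ and, through \eqref{f-k-a} and \eqref{g-k-a}, forces $f_{k+1}\ge f_k\ge0$ and $g_{k+1}\ge g_k\ge0$. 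A forward induction then yields $f_k,g_k\ge0$ and $u_k+v_k\ge r$ for all $k$. Feeding this into the downstream boundary relation \eqref{f-k-b} gives $f_{n-1}=u_n(r-q_n-u_n-v_n)\le-u_nq_n\le0$, so in fact $f_{n-1}=0$, $q_n=0$, and $u_n+v_n=r$.

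The equality $u_n+v_n=r$ must now be propagated back to patch $1$. Since $f_{n-1}=0$ and $r-u_{n-1}-v_{n-1}\le0$, relation \eqref{f-k-a} forces $f_{n-2}=0$ and $u_{n-1}+v_{n-1}=r$; iterating downward gives $f_k=g_k=0$ and $u_k+v_k=r$ for every $k$. But $f_k=g_k=0$ means $u_{k+1}=(1+q_k/d_1)u_k$ and $v_{k+1}=(1+q_k/d_2)v_k$, so $r=u_{k+1}+v_{k+1}=r+q_k(u_k/d_1+v_k/d_2)$, forcing $q_k=0$ for all $k$. This contradicts $\bm q\not\equiv\bm 0$, which holds under either $(\textbf{H1})$ or $(\textbf{H2})$, and proves $u_1+v_1<r$. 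The inequality $u_n+v_n>r-q_n$ follows by the mirror-image argument: assuming $f_{n-1}\ge0$ and running the same induction from patch $n$ down to patch $1$, the upstream boundary relation $f_1=-u_1(r-u_1-v_1)$ (from $f_0=0$) delivers the analogous contradiction.

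I expect the main obstacle to be exactly this closing step. The boundary fluxes carry no local sign information: each equals a product whose second factor ($r-u_1-v_1$ or $r-q_n-u_n-v_n$) is a priori of unknown sign, so the whole argument rests on turning a single end assumption into global monotonicity of $u_k+v_k$ along the chain and then colliding it with the condition at the opposite end. The genuinely delicate point is the borderline ``flat'' equilibrium $u_k+v_k\equiv r$, which really does occur when $\bm q\equiv\bm 0$; handling it is precisely what forces the backward-propagation step and the explicit use of $\bm q\not\equiv\bm 0$.
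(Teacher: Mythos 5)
Your proof is correct, and its engine is the same as the paper's: suppose a boundary flux is nonnegative, propagate the monotonicity of $u_k,v_k$ and the sign of $r-u_k-v_k$ patch by patch through \eqref{f-k}--\eqref{g-k}, and collide with the condition at the opposite end. Where you differ is in the degenerate case, and there your version is the more careful one. From $f_1,g_1\ge 0$ the paper immediately asserts the \emph{strict} inequalities $u_2>u_1$, $v_2>v_1$, $r-u_2-v_2<0$; this silently uses $q_1>0$, which is automatic under (\textbf{H1}) (if $q_1=0$, the non-increasing gaps force $\bm q\equiv\bm 0$, contradicting the required strict inequality) but can fail under (\textbf{H2}), and the lemma is stated without either hypothesis. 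Indeed, as your last paragraph observes, the statement is literally false for $\bm q\equiv\bm 0$: the flat equilibria $u_i\equiv c$, $v_i\equiv r-c$ with $0<c<r$ give $f_k=g_k\equiv 0$. Your weak-inequality induction plus the backward sweep (forcing $u_k+v_k\equiv r$, then $f_k=g_k\equiv 0$, then $q_k=0$ for every $k$) is exactly what closes this loophole, at the cost of invoking $\bm q\not\equiv\bm 0$ explicitly --- a hypothesis that is genuinely needed and that holds under (\textbf{H1}) or (\textbf{H2}), hence in every application of the lemma in the paper. Your mirror argument for $f_{n-1},g_{n-1}$ also goes through as sketched: $f_{n-1},g_{n-1}\ge 0$ gives $u_n+v_n\le r-q_n\le r$, the same monotonicity propagates upstream, and the flat case is excluded in the same way.
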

\begin{proof}
%By \eqref{f-k} and \eqref{g-k}, we find
%\begin{equation*}
%{\ds\f{f_{k}-f_{k-1}}{u_k}=\f{g_k-g_{k-1}}{v_k}=r-u_k-v_k},
%\end{equation*}
%which implies $\{f_k\}_{k=0}^{n}$ has the same monotonicity as $\{g_k\}_{k=0}^{n}$.
Suppose to the contrary that $f_1\ge0$. By \eqref{f-k-a}, we have $r-u_1-v_1\le0$. This, combined with \eqref{g-k-a}, yields $g_1\ge0$. Noticing  $f_1,g_1\ge0$, we see
from \eqref{figi} that $u_2>u_1$,  $v_2>v_1$, and $r-u_2-v_2<0$. If $n=2$, then
\begin{equation*}
0\ge-f_1=-u_2(r-q_2-u_2-v_2)>0,
\end{equation*}
which is a contradiction.
%this contradicts \eqref{f-k}-\eqref{g-k} with $k=n=2$.
If $n\ge 3$, then by \eqref{f-k}-\eqref{g-k} and induction, we have $f_k,g_k>0$ for $k=2,\cdots,n-1$.  Consequently, we obtain that
\begin{equation*}
\begin{split}
&u_{k}>u_{k-1},\;\;v_{k}>v_{k-1},\;\;\text{for}\;\;k=3,\cdots,n,\\
&r-u_k-v_k<0\;\;\text{for}\;\;k=3,\cdots,n-1,\;\;\text{and}\;\;r-q_n-u_n-v_n<0,
\end{split}
\end{equation*}
which contradicts \eqref{f-k}-\eqref{g-k} with $k=n$. Thus $f_1<0$, which yields $g_1<0$. Similarly, we can prove that $f_{n-1},g_{n-1}<0$, and here we omit the details. This completes the proof.
\end{proof}

Then we define another two auxiliary sequences $\{T_k\}_{k=0}^n$ and $\{S_k\}_{k=0}^n$:
\begin{subequations}\label{tsk}
\begin{align}
&T_0=T_n=0,\;\;T_k=u_{k+1}-u_k\;\;\text{for}\;\;k=1,\cdots,n-1,\label{tsk-t}\\
&S_0=S_n=0,\;\;S_k=v_{k+1}-v_k\;\;\text{for}\;\;k=1,\cdots,n-1.\label{tsk-s}
\end{align}
\end{subequations}

\begin{lemma}\label{indfg}
Let $\{T_k\}_{k=0}^n$ and $\{S_k\}_{k=0}^n$ be defined in \eqref{tsk}, and let $\{f_k\}_{k=0}^{n}$ and $\{g_k\}_{k=0}^{n}$ be defined in \eqref{figi}. Then
for any $1\le i<j\le n$, the following two identities hold:
\begin{equation}\label{gf2}
\begin{split}
&\ds\f{d_2-d_1}{d_2}\sum_{k=i}^{j-1}\rho^{(2)}_{k+1}T_kg_k=\rho^{(2)}_j\left(d_2u_jS_{j}-d_1v_jT_j\right)\\
&-\rho^{(2)}_i\left[(d_2+q_{i-1})u_iS_{i-1}-(d_1+q_{i-1})v_iT_{i-1}\right],\\
\end{split}
\end{equation}
and
\begin{equation}\label{fg2}
\begin{split}
&\ds\f{d_1-d_2}{d_1}\sum_{k=i}^{j-1}\rho^{(1)}_{k+1}S_kf_k=\rho^{(1)}_j\left(d_1v_jT_{j}-d_2u_jS_j\right)\\
&-\rho^{(1)}_i\left[(d_1+q_{i-1})v_iT_{i-1}-(d_2+q_{i-1})u_iS_{i-1}\right],\\
\end{split}
\end{equation}
where $\rho_{1}^{(1)}=1$ and
\begin{equation}\label{rho1}
\rho_{k}^{(1)}=\prod_{l=1}^{k-1}\f{d_1}{d_1+q_l},\;\;\;\;k=2,\cdots,n,
\end{equation}
 and $\ds\{\rho_{k}^{(2)}\ds\}_{k=1}^n$ is defined in Lemma \ref{prpla}.
\end{lemma}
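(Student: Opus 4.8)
The plan is to prove \eqref{gf2} by a weighted discrete summation-by-parts (a discrete Green/Wronskian identity, in the same spirit as Lemma \ref{prpla}), and then obtain \eqref{fg2} from \eqref{gf2} by the symmetry $1\leftrightarrow 2$, $\bm u\leftrightarrow\bm v$, $f\leftrightarrow g$, $T\leftrightarrow S$, $\rho^{(2)}\leftrightarrow\rho^{(1)}$. The starting point is a single algebraic relation coupling the two equilibrium equations. Since \eqref{f-k-a} and \eqref{g-k-a} share the common factor $r-u_k-v_k$, while \eqref{f-k-b} and \eqref{g-k-b} share $r-q_n-u_n-v_n$, multiplying the $f$-equation by $v_k$ and the $g$-equation by $u_k$ and subtracting yields
$$v_k(f_k-f_{k-1})=u_k(g_k-g_{k-1}),\qquad k=1,\dots,n,$$
which I rewrite as
$$u_kg_k-v_kf_k=u_kg_{k-1}-v_kf_{k-1},\qquad k=1,\dots,n. \qquad (\star)$$
Here the boundary instance $k=n$ is genuinely needed below. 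I then set $B_k:=u_kg_k-v_kf_k$, so that $(\star)$ reads $B_k=u_kg_{k-1}-v_kf_{k-1}$.

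Next I convert the right-hand side of \eqref{gf2} into boundary values of $B_k$. Using $f_k=d_1T_k-q_ku_k$ and $g_k=d_2S_k-q_kv_k$ one checks directly that $d_2u_jS_j-d_1v_jT_j=u_jg_j-v_jf_j=B_j$; and using $g_{i-1}=d_2v_i-(d_2+q_{i-1})v_{i-1}$ together with $S_{i-1}=v_i-v_{i-1}$ (and the analogous $f,T$ relations) one finds $(d_2+q_{i-1})u_iS_{i-1}-(d_1+q_{i-1})v_iT_{i-1}=u_ig_{i-1}-v_if_{i-1}$, which by $(\star)$ equals $B_i$. Hence the entire right-hand side of \eqref{gf2} is exactly $\rho^{(2)}_jB_j-\rho^{(2)}_iB_i$, a telescoped boundary expression.

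It then remains to match this with the sum on the left. The key structural fact about the weights is $\rho^{(2)}_{k+1}(d_2+q_k)=\rho^{(2)}_kd_2$, equivalently $\rho^{(2)}_k/(d_2+q_k)=\rho^{(2)}_{k+1}/d_2$. From this I derive the two one-step identities
$$\rho^{(2)}_{k+1}v_{k+1}-\rho^{(2)}_kv_k=\tfrac{1}{d_2}\rho^{(2)}_{k+1}g_k,\qquad \rho^{(2)}_{k+1}u_{k+1}-\rho^{(2)}_ku_k=\tfrac{1}{d_2}\rho^{(2)}_{k+1}\bigl(f_k+(d_2-d_1)T_k\bigr),$$
the second using $d_2u_{k+1}-(d_2+q_k)u_k=f_k+(d_2-d_1)T_k$. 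Writing $B_{k+1}$ in the $(\star)$-shifted form $u_{k+1}g_k-v_{k+1}f_k$ and $B_k$ in the direct form $u_kg_k-v_kf_k$, I group the $g_k$- and $f_k$-coefficients and substitute the two one-step identities; the $f_kg_k$ terms cancel, leaving the clean term-by-term identity
$$\rho^{(2)}_{k+1}B_{k+1}-\rho^{(2)}_kB_k=\tfrac{d_2-d_1}{d_2}\rho^{(2)}_{k+1}T_kg_k.$$
Summing over $k=i,\dots,j-1$ telescopes the left side to $\rho^{(2)}_jB_j-\rho^{(2)}_iB_i$, which is precisely the boundary expression from the previous paragraph, giving \eqref{gf2}.

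The main obstacle is purely organizational: choosing the right representation of each quantity so that everything collapses. Concretely, the argument only works if I (i) record the coupling relation $(\star)$ including its boundary instance $k=n$ (needed when $j=n$, since there $B_n=u_ng_{n-1}-v_nf_{n-1}=0$), and (ii) deliberately use the \emph{asymmetric} forms of $B_{k+1}$ (the $(\star)$-shifted form) and $B_k$ (the direct form) when forming the one-step difference, so that the cross terms $\rho^{(2)}_{k+1}f_kg_k/d_2$ cancel and only the $(d_2-d_1)T_kg_k$ contribution survives. Once this term-by-term identity is established, both the telescoping and the boundary-term conversions are routine.
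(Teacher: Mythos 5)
Your proof is correct, and it takes a route that is organized differently from the paper's. The paper starts from the second-order difference equations for $T_k$ and $S_k$ (its \eqref{tk} and \eqref{sk}), which still carry the reaction terms; it multiplies the first by $\rho^{(2)}_k v_k$ and the second by $\rho^{(2)}_k u_k$, performs two separate summations by parts over $k=i,\dots,j$, and only in the final subtraction do the nonlinear cross terms $\sum_k \rho^{(2)}_k u_k v_k (r+q_{k-1}-q_k-u_k-v_k)$ cancel. You instead cancel the reaction terms pointwise at the very start: your relation $(\star)$, namely $v_k(f_k-f_{k-1})=u_k(g_k-g_{k-1})$ for $k=1,\dots,n$, is precisely this cancellation, and it sets up the discrete Wronskian $B_k=u_kg_k-v_kf_k$. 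Everything then reduces to the local identity $\rho^{(2)}_{k+1}B_{k+1}-\rho^{(2)}_kB_k=\frac{d_2-d_1}{d_2}\rho^{(2)}_{k+1}T_kg_k$ (which checks out, as do the two one-step weight identities it relies on) followed by telescoping. Both proofs hinge on the same two algebraic facts --- the weight recursion $\rho^{(2)}_{k+1}(d_2+q_k)=d_2\rho^{(2)}_k$ and the rewriting $d_2u_{k+1}-(d_2+q_k)u_k=f_k+(d_2-d_1)T_k$ --- so the underlying content coincides; what differs is the bookkeeping. Your arrangement buys a clean term-by-term identity, with no nonlinear terms to track through parallel summations; the price is the extra boundary conversions identifying the right-hand side of \eqref{gf2} with $\rho^{(2)}_jB_j-\rho^{(2)}_iB_i$, including the degenerate instances $j=n$ and $i=1$ where the conventions $f_n=g_n=T_n=S_n=0$ and $q_0=0$ must be invoked; you handle these correctly (both sides of your conversion identities vanish there). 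The paper's arrangement needs no such identification, since its boundary terms fall out of the summation by parts automatically. Finally, your symmetry reduction of \eqref{fg2} to \eqref{gf2} (swapping $u\leftrightarrow v$, $d_1\leftrightarrow d_2$, hence $f\leftrightarrow g$, $T\leftrightarrow S$, $\rho^{(1)}\leftrightarrow\rho^{(2)}$) is the same one-line step the paper takes when it says \eqref{fg2} ``can be proved similarly.''
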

\begin{proof}
We only prove  \eqref{gf2} since \eqref{fg2} can be proved similarly. For any $k=1,\cdots,n$, by \eqref{pat-cp2}, we have
\begin{equation}\label{tk}
d_2T_{k}-(d_2+q_{k-1})T_{k-1}=-u_k(r+q_{k-1}-q_k-u_k-v_k)+(d_2-d_1)(T_k-T_{k-1}),
\end{equation}
and
\begin{equation}\label{sk}
d_2S_{k}-(d_2+q_{k-1})S_{k-1}=-v_k(r+q_{k-1}-q_k-u_k-v_k).
\end{equation}
Multiplying \eqref{tk} by $\rho^{(2)}_kv_k$ and summing up from $k=i$ to $k=j$, we have
\begin{equation}\label{sumf1}
\begin{split}
&\sum_{k=i}^j\rho^{(2)}_k\left[d_2T_{k}-(d_2+q_{k-1})T_{k-1}\right]v_k\\
=&(d_2-d_1)\sum_{k=i}^j\rho^{(2)}_k\left(T_k-T_{k-1}\right)v_k-\sum_{k=i}^j \rho^{(2)}_ku_kv_k(r+q_{k-1}-q_k-u_k-v_k).\\
\end{split}
\end{equation}
A direct computation yields
\begin{equation}\label{sums1}
\begin{split}
&\sum_{k=i}^j\rho^{(2)}_k\left[d_2T_{k}-(d_2+q_{k-1})T_{k-1}\right]v_k\\
=&-\rho^{(2)}_i(d_2+q_{i-1})v_iT_{i-1}-d_2\sum_{k=i}^{j-1}\rho^{(2)}_kT_kS_k+d_2\rho^{(2)}_jv_jT_j,\\
\end{split}
\end{equation}
and
\begin{equation}\label{sums2}
\begin{split}
&\sum_{k=i}^j\rho^{(2)}_k\left(T_k-T_{k-1}\right)v_k\\
=&-\rho^{(2)}_iv_iT_{i-1}-\ds\f{1}{d_2}\sum_{k=i}^{j-1}\rho^{(2)}_{k+1}T_kg_k+\rho^{(2)}_jv_jT_j.
\end{split}
\end{equation}
Substituting \eqref{sums1}-\eqref{sums2} into \eqref{sumf1}, we obtain
\begin{equation}\label{sumf3}
\begin{split}
&-(d_1+q_{i-1})\rho^{(2)}_iv_iT_{i-1}-d_2\sum_{k=i}^{j-1}\rho^{(2)}_kT_kS_k+d_1\rho^{(2)}_jv_jT_j\\
=&-\ds\f{d_2-d_1}{d_2}\sum_{k=i}^{j-1}\rho^{(2)}_{k+1}T_kg_k-\sum_{k=i}^j\rho^{(2)}_k u_kv_k(r+q_{k-1}-q_k-u_k-v_k).
\end{split}
\end{equation}
Similarly, multiplying \eqref{sk} by $\rho^{(2)}_ku_k$ and summing up from $k=i$ to $k=j$, we have
\begin{equation}\label{sumg2}
\begin{split}
&-(d_2+q_{i-1})\rho^{(2)}_iu_iS_{i-1}-d_2\sum_{k=i}^{j-1}\rho^{(2)}_kT_kS_k+d_2\rho^{(2)}_ju_jS_j\\
=&-\sum_{k=i}^j\rho^{(2)}_k u_kv_k(r+q_{k-1}-q_k-u_k-v_k).
\end{split}
\end{equation}
Taking the difference of \eqref{sumf3} and \eqref{sumg2}, we obtain  \eqref{gf2}. 
\end{proof}

In the following, we say that a sequence changes sign if it has both negative and positive terms. 
%$\{T_k\}_{k=0}^n$ and $\{S_k\}_{k=0}^n$.
\begin{lemma}\label{tsbh}
Let $\{T_k\}_{k=0}^n$ and $\{S_k\}_{k=0}^n$ be defined in \eqref{tsk} and suppose that $d_1\ne d_2$. Then the following statements hold:
\begin{enumerate}
\item [${\rm (i)}$] If $(\rm{\bf{H1}})$ holds, then
\begin{enumerate}
\item [${\rm (i_1)}$]  $T_1, S_1, T_{n-1}, S_{n-1}>0$;
\item [${\rm (i_2)}$] If $n\ge4$, then $\{T_k\}_{k=1}^{n-1}$, $\{S_k\}_{k=1}^{n-1}$ must change sign;
\end{enumerate}
\item [${\rm (ii)}$] If $(\rm{\bf{H2}})$ holds, then
\begin{enumerate}
\item [${\rm (ii_1)}$]  $T_1, S_1, T_{n-1}, S_{n-1}<0$;
\item [${\rm (ii_2)}$] If $n\ge4$, then $\{T_k\}_{k=1}^{n-1}$, $\{S_k\}_{k=1}^{n-1}$ must change sign.
\end{enumerate}
\end{enumerate}
\end{lemma}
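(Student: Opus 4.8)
The plan is to read off both statements from the second-order recurrences that $\{T_k\}$ and $\{S_k\}$ satisfy. From \eqref{pat-cp2} (the $S$-equation is \eqref{sk}, and the $T$-equation is analogous) one obtains, for $k=1,\dots,n$ with $T_0=T_n=S_0=S_n=0$ and $q_0=0$,
\[
d_1T_k-(d_1+q_{k-1})T_{k-1}=-u_kh_k,\qquad d_2S_k-(d_2+q_{k-1})S_{k-1}=-v_kh_k,
\]
where $h_k:=r+q_{k-1}-q_k-u_k-v_k$ is common to both. In particular the endpoint cases read $d_1T_1=-u_1h_1$, $d_2S_1=-v_1h_1$, $(d_1+q_{n-1})T_{n-1}=u_nh_n$ and $(d_2+q_{n-1})S_{n-1}=v_nh_n$, so $T_1,S_1$ share the sign of $-h_1$ and $T_{n-1},S_{n-1}$ share the sign of $h_n$; this reduces the endpoint claims to the single quantities $h_1,h_n$.

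For the endpoint signs $\mathrm{(i_1)}$, $\mathrm{(ii_1)}$ I would argue by a monotone induction. Writing $\delta_k:=q_k-q_{k-1}$, assumption $(\mathbf{H1})$ says $\delta_1\ge\cdots\ge\delta_n$ with $\delta_1>\delta_n$. Suppose $T_1\le0$; then $h_1\ge0$, hence $S_1\le0$ and $u_1+v_1\le r-\delta_1$. Feeding $T_{k-1},S_{k-1}\le0$ into the two recurrences gives $u_k+v_k\le u_{k-1}+v_{k-1}\le r-\delta_1$, so $h_k\ge\delta_1-\delta_k\ge0$ and therefore $T_k,S_k\le0$; pushing $k$ up to $n$ forces $(d_1+q_{n-1})T_{n-1}=u_nh_n$ with $h_n\ge\delta_1-\delta_n>0$, i.e. $T_{n-1}>0$, a contradiction. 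Thus $T_1,S_1>0$, and a symmetric backward induction gives $T_{n-1},S_{n-1}>0$. Part $\mathrm{(ii_1)}$ is the mirror image, with every inequality reversed under $(\mathbf{H2})$.

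The sign-change claim $\mathrm{(i_2)}$ I would prove by contradiction using the conservation identity from Lemma \ref{indfg}: taking $i=1,j=n$ in \eqref{gf2} makes the boundary terms vanish and, since $d_1\ne d_2$, gives $\sum_{k=1}^{n-1}\rho^{(2)}_{k+1}T_kg_k=0$. Assume $\{T_k\}$ does not change sign; by $\mathrm{(i_1)}$ this means $T_k\ge0$ for all $k$, i.e. $u$ is nondecreasing. I then claim $g_k\le0$ for every $k$: if $g_p>0$ for the least such $p$ (necessarily $p\ge2$ since $g_1<0$ by Lemma \ref{proffg}), then \eqref{g-k-a} forces $\tilde h_p:=r-u_p-v_p<0$, while $g_p>0$ forces $v_{p+1}>v_p$; since $u$ is nondecreasing, $u_k+v_k>r$ propagates upward and keeps $g_k$ increasing, yielding $g_{n-1}>0$ and contradicting Lemma \ref{proffg}. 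With $T_k\ge0$, $g_k\le0$ and the strictly negative term $\rho^{(2)}_2T_1g_1<0$, the identity is violated; hence $\{T_k\}$ changes sign, and the symmetric use of \eqref{fg2} (deducing $f_k\le0$ from a nondecreasing $v$) gives the sign change of $\{S_k\}$. The hypothesis $n\ge4$ enters precisely here: for $n\le3$ there is no interior index, so the same computation instead yields nonexistence of a positive equilibrium rather than a sign change.

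The main obstacle is $\mathrm{(ii_2)}$. The scheme is identical—assume $\{T_k\}$ does not change sign, so by $\mathrm{(ii_1)}$ $u$ is nonincreasing, and try to force $g_k\le0$ to contradict $\sum_{k}\rho^{(2)}_{k+1}T_kg_k=0$—but the decoupling induction of the previous paragraph breaks down: under $(\mathbf{H2})$ a nonincreasing $u$ and the increase $v_{p+1}>v_p$ forced by $g_p>0$ pull $u_k+v_k$ in opposite directions, so the propagation $\tilde h_k<0$ is lost. What does survive cleanly is that both profiles cannot be monotone at once: if $v$ is nonincreasing then $g_k\le0$ trivially (as $g_k=d_2v_{k+1}-(d_2+q_k)v_k\le -q_kv_k\le0$), and together with $u$ nonincreasing the identity is again violated. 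I would therefore close $\mathrm{(ii_2)}$ by ruling out the remaining ``exactly one profile monotone'' configurations, using the localized identities of Lemma \ref{indfg} on subintervals $[i,j]$ chosen at the sign changes of $g$ (resp. $f$), combined with the trivial flux bound $f_k<0$ (resp. $g_k<0$) available from the monotone profile; extracting a definite sign from these subinterval identities is the step I expect to require the most care.
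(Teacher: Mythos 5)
Your treatment of ${\rm (i_1)}$, ${\rm (i_2)}$ and ${\rm (ii_1)}$ is correct and essentially identical to the paper's own proof: the endpoint signs come from the same monotone induction on $h_k=r+q_{k-1}-q_k-u_k-v_k$ driven by the convexity (resp.\ concavity) of $\bm q$, with the strict inequality in $(\rm{\bf{H1}})$/$(\rm{\bf{H2}})$ producing the contradiction at $k=n$; and the sign change under $(\rm{\bf{H1}})$ uses the same three ingredients as the paper (the identity \eqref{gf2} with $i=1$, $j=n$, Lemma \ref{proffg}, and the upward propagation of $r-u_k-v_k<0$ along a nondecreasing $u$), merely organized in a different order.

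The genuine gap is exactly where you flagged it: ${\rm (ii_2)}$. Disposing of the ``both profiles monotone'' configuration does not finish the proof; the remaining case --- $T_k\le 0$ for all $k$ while $\{S_k\}$ changes sign --- is the heart of the matter, and your sketch for it would not go through as written. You propose to localize the identities of Lemma \ref{indfg} on subintervals cut at the sign changes of the \emph{fluxes} $g$ (resp.\ $f$), but the boundary terms in \eqref{gf2}--\eqref{fg2} are expressed through $T_{i-1},S_{i-1},T_j,S_j$, not through the fluxes, so cutting at sign changes of $g$ gives no control of the boundary contribution. The paper instead cuts at the extremal sign-change indices of $S$ itself: assuming $T_k\le0$ for all $k$ (hence $f_k\le 0$, with $f_1<0$ by Lemma \ref{proffg}), the identity \eqref{fg2} with $i=1$, $j=n$ forces $\{S_k\}$ to change sign, so $\hat i=\max\{i:S_i>0\}$ and $\check i=\min\{i:S_i>0\}$ are well defined. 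If $d_2>d_1$, one applies \eqref{fg2} on $[\hat i+1,n]$: there the sum side satisfies
\begin{equation*}
\frac{d_1-d_2}{d_1}\sum_{k=\hat i+1}^{n-1}\rho^{(1)}_{k+1}S_kf_k\le 0,
\end{equation*}
since $S_k\le0$ and $f_k\le 0$ on that range, while the boundary side reduces (using $T_n=S_n=0$) to
\begin{equation*}
-\rho^{(1)}_{\hat i+1}\left[(d_1+q_{\hat i})v_{\hat i+1}T_{\hat i}-(d_2+q_{\hat i})u_{\hat i+1}S_{\hat i}\right]>0,
\end{equation*}
because $T_{\hat i}\le 0$ and $S_{\hat i}>0$; if $d_2<d_1$ one argues symmetrically on $[1,\check i]$. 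The two ingredients your plan is missing are precisely these: the cut must be made at a sign change of $S$ (so that one boundary term carries a strict sign), and which end of the index range you use must be dictated by the sign of $d_1-d_2$ (so that the sum side has the opposite sign). With this replacement for your final step, the rest of your argument stands.
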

\begin{proof}
We rewrite \eqref{tk}-\eqref{sk} as follows:
\begin{subequations}\label{TSk}
\begin{align}
&d_1T_{k}-(d_1+q_{k-1})T_{k-1}=-u_k(r+q_{k-1}-q_k-u_k-v_k),\;\;\;\;k=1,\cdots,n,\label{TSk-a}\\
&d_2S_{k}-(d_2+q_{k-1})S_{k-1}=-v_k(r+q_{k-1}-q_k-u_k-v_k),\;\;\;\;k=1,\cdots,n.\label{TSk-b}
\end{align}
\end{subequations}
We first consider (i). Suppose to the contrary that $T_1\le0$. Since $T_0=0$, we see from \eqref{TSk-a} that
\begin{equation*}
r+q_0-q_1-u_1-v_1\ge0,
\end{equation*}
which yields $S_1\le0$ by \eqref{TSk-b}. By $S_1, T_1\le 0$, we have  $u_2\le u_1$ and $v_2\le v_1$. This, combined with assumption $(\textbf{H1})$, yields
\begin{equation*}
r+q_1-q_2-u_2-v_2\ge0.
\end{equation*}
%If $n=2$, this contradicts \eqref{TSk} with $k=n=2$. If $n\ge3$,
Then by \eqref{TSk} and induction, we can show that
\begin{equation*}\label{uvinq}
T_k\le0,\;\;S_k\le0,\;\;\text{and}\;\;r+q_{k}-q_{k+1}-u_{k+1}-v_{k+1}\ge 0\;\; \text{for}\;\;k=1,\cdots,n-1\;.
\end{equation*}
Moreover, by the assumption that there exists at least one strict inequality in $(\textbf{H1})$, we have $r+q_{n-1}-q_{n}-u_{n}-v_{n}>0$.
Noticing $T_n=0$, we see from  \eqref{TSk-a} that
\begin{equation*}
0\le-(d_1+q_{n-1})T_{n-1}=-u_n(r+q_{n-1}-q_n-u_k-v_k)<0,
\end{equation*}
which is a contradiction. Therefore we have $T_1>0$. It follows from \eqref{TSk} with $k=n$ that $S_1>0$. Using a similar argument, we can show that  $T_{n-1},S_{n-1}>0$.

 Suppose to the contrary that $\{T_k\}_{k=1}^{n-1}$ does not change sign.
Since $T_1,T_{n-1}>0$, we must have $T_k\ge0$ for $k=1,\cdots,n-1$, which yields
\begin{equation}\label{uddz}
u_1<u_2\le\cdots\le u_{n-1}<u_{n}.
\end{equation}
Noticing $T_0=S_0=T_n=S_n=0$ and substituting $i=1$ and $j=n$ into \eqref{gf2}, we obtain
\begin{equation}\label{xmyy}
\ds\f{d_2-d_1}{d_2}\sum_{k=1}^{n-1}\rho^{(2)}_{k+1}T_kg_k=0,
\end{equation}
which implies that $\{g_k\}_{k=1}^{n-1}$ must change sign. It follows from Lemma \ref{proffg} that $\bar i=\min\{i:g_i>0\}$ is well defined with $1<\bar i<n-1$. Moreover, $g_k\le0$ for $1\le k\le \bar i-1$ and $g_{\bar i}> 0$. This, combined with \eqref{figi-g} and \eqref{g-k}, implies that
\begin{equation}\label{yfh}
v_{\bar i+1}>v_{\bar i}\;\;\text{and}\;\;r-u_{\bar i}-v_{\bar i}<0.
\end{equation}
This, together with \eqref{uddz}, implies that $r-u_{i_g+1}-v_{i_g+1}<0$. Then by \eqref{g-k} and induction, we can prove that $g_k>0$ for $k=i_{g+1},\cdots,n-1$,
which contradicts Lemma \ref{proffg}. Therefore, $\{T_k\}_{k=1}^{n-1}$ must change sign. Similarly, we can prove that $\{S_k\}_{k=1}^{n-1}$ also changes sign.

Now we consider (ii). The proof of ${\rm (ii_1)}$ is similar to ${\rm (i_1)}$, so we omit it here.
To see ${\rm (ii_2)}$, suppose to the contrary that $\{T_k\}_{k=1}^{n-1}$ does not change sign. This, together with ${\rm (ii_1)}$, implies that $T_k\le0$ for $k=1,\cdots,n-1$, which yields $f_k<0$ for $k=1,\cdots,n-1$.
Then substituting $i=1$ and $j=n$ into \eqref{fg2}, we obtain
\begin{equation*}
\ds\f{d_1-d_2}{d_1}\sum_{k=1}^{n-1}\rho^{(1)}_{k+1}S_kf_k=0,
\end{equation*}
which means that $\{S_k\}_{k=1}^{n-1}$ must change sign. 
Hence by ${\rm (ii_1)}$,  
$$
\hat i=\max\{i:S_i>0\} \ \ \text{and}\ \ \check i=\min\{i:S_i>0\}
$$
are well defined with $1<\check i,\hat i<n-1$.
We first suppose $d_2>d_1$.
Noticing $S_k\le0$ for $\hat i+1< k<n$ and $S_{\hat i}>0$ and substituting $i=\hat i+1$ and $j=n$ into \eqref{fg2}, we obtain
\begin{equation*}
\begin{split}
0\ge&\ds\f{d_1-d_2}{d_1}\sum_{k=\hat i+1}^{n-1}\rho^{(1)}_{k+1}S_k f_k\\
=&-\rho^{(1)}_i\left[(d_1+q_{\hat i})v_{\hat i+1}T_{\hat i}-(d_2+q_{\hat i})u_{\hat i+1}S_{\hat i}\right]>0,
\end{split}
\end{equation*}
which is a contradiction. If $d_2<d_1$, we can obtain a contradiction by substituting $i=1$ and $j=\check i$ into \eqref{fg2}.
This shows that $\{T_k\}_{k=1}^{n-1}$ must change sign.  Similarly, we can prove that $\{S_k\}_{k=1}^{n-1}$ also changes sign.
\end{proof}

We are ready to show that there is no positive equilibrium when $n=2,3$.
\begin{lemma}\label{23no}
Suppose that $d_1, d_2, r>0$ with $d_1\ne d_2$, $\bm q\ge \bm 0$, and $(\rm{\bf{H1}})$ or $(\rm{\bf{H2}})$ holds. If $n=2,3$, then model \eqref{pat-cp2} has no positive equilibrium.
\end{lemma}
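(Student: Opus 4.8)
The plan is to argue by contradiction: assume that \eqref{pat-cp2} admits a positive equilibrium $(\bm u,\bm v)\gg\bm 0$, and extract a contradiction from the global identity \eqref{gf2} of Lemma \ref{indfg}. The first step is to specialize \eqref{gf2} to $i=1$ and $j=n$. Since $T_0=S_0=T_n=S_n=0$ by \eqref{tsk}, both boundary terms on the right-hand side vanish, and the identity collapses to
\[
\f{d_2-d_1}{d_2}\sum_{k=1}^{n-1}\rho^{(2)}_{k+1}T_kg_k=0.
\]
Because $d_1\ne d_2$, the prefactor is nonzero, so it suffices to show that the sum $\sum_{k=1}^{n-1}\rho^{(2)}_{k+1}T_kg_k$ is strictly of one sign.

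The key observation, and the reason the argument is confined to $n=2,3$, is that for these values the summation range $\{1,\dots,n-1\}$ consists only of the endpoint indices $1$ and $n-1$ (it is $\{1\}$ when $n=2$ and $\{1,2\}$ when $n=3$). Thus every $g_k$ appearing in the sum is an endpoint term, and Lemma \ref{proffg} gives $g_1<0$ and $g_{n-1}<0$, so $g_k<0$ for all $k$ in the range. Likewise, the relevant $T_k$ are all captured by the endpoints $T_1$ and $T_{n-1}$, whose signs are pinned down by Lemma \ref{tsbh}: under $(\textbf{H1})$ one has $T_1,T_{n-1}>0$, and under $(\textbf{H2})$ one has $T_1,T_{n-1}<0$, so in each case all the $T_k$ in the range share a common sign.

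Combining these signs with $\rho^{(2)}_{k+1}>0$, each summand $\rho^{(2)}_{k+1}T_kg_k$ is strictly negative under $(\textbf{H1})$ (a positive $T_k$ times a negative $g_k$) and strictly positive under $(\textbf{H2})$. Hence the sum is nonzero in either case, which contradicts the reduced identity and rules out a positive equilibrium. I do not expect any serious obstacle here: all the analytic work has already been done in Lemmas \ref{proffg}, \ref{indfg} and \ref{tsbh}, and the present lemma is essentially a bookkeeping step that assembles them. The one subtlety worth flagging is that the argument genuinely relies on $n\le 3$: for $n\ge 4$ there are interior indices $k$ for which neither the sign of $T_k$ nor that of $g_k$ is determined (indeed Lemma \ref{tsbh}(i$_2$)/(ii$_2$) shows $\{T_k\}_{k=1}^{n-1}$ must change sign), so the terms of the sum can cancel and a genuinely different strategy is needed.
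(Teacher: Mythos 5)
Your proposal is correct and follows essentially the same route as the paper: both substitute $i=1$, $j=n$ into \eqref{gf2} to obtain $\frac{d_2-d_1}{d_2}\sum_{k=1}^{n-1}\rho^{(2)}_{k+1}T_kg_k=0$, then note that for $n=2,3$ every index in the sum is an endpoint, so Lemma \ref{proffg} ($g_1,g_{n-1}<0$) and Lemma \ref{tsbh} ($T_1,T_{n-1}$ of one fixed sign under $(\textbf{H1})$ or $(\textbf{H2})$) force the sum to be nonzero, a contradiction. The only cosmetic difference is that the paper writes out the $(\textbf{H1})$ case and declares $(\textbf{H2})$ similar, while you treat both signs explicitly.
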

\begin{proof}
We only consider the case that $(\textbf{H1})$ holds, since the  case $(\textbf{H2})$  can be proved similarly. Suppose to the contrary that there exists a positive equilibrium $(\bm u,\bm v)$. By Lemma \ref{proffg}, we see that $\{g_k\}_1^{n-1}=\{g_1\}$ with $g_1<0$ if $n=2$, and $\{g_k\}_1^{n-1}=\{g_1,g_2\}$ with $g_1,g_2<0$ if $n=3$. It follows from Lemma \ref{tsbh} that $\{T_k\}_1^{n-1}=\{T_1\}$ with $T_1>0$ if $n=2$, and $\{T_k\}_1^{n-1}=\{T_1,T_2\}$ with $T_1,T_2>0$ if $n=3$. Then we have
\begin{equation*}
\f{d_2-d_1}{d_2}\sum_{k=1}^{n-1}\rho^{(2)}_{k+1}T_kg_k\ne0,
\end{equation*}
which contradicts \eqref{xmyy}. This completes the proof.
\end{proof}

Then we show the nonexistence of positive equilibrium when $n\ge 4$.
\begin{lemma}\label{iv2}
Suppose that $d_1, d_2, r>0$ with $d_1\ne d_2$, $\bm q\ge \bm 0$, and $(\rm{\bf{H1}})$ holds. If $n\ge 4$, then model \eqref{pat-cp2} has no positive equilibrium.
\end{lemma}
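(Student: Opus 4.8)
The plan is to argue by contradiction, assuming a positive equilibrium $(\bm u, \bm v)$ exists for $n\ge 4$ under $(\textbf{H1})$. The central tool will be the identity \eqref{gf2} from Lemma \ref{indfg} together with the sign information already extracted in Lemma \ref{tsbh}. Recall that under $(\textbf{H1})$, Lemma \ref{tsbh} (i) tells us $T_1, S_1, T_{n-1}, S_{n-1}>0$ and that both sequences $\{T_k\}_{k=1}^{n-1}$ and $\{S_k\}_{k=1}^{n-1}$ must change sign. So each of the four sequences $\{T_k\}$, $\{S_k\}$, $\{f_k\}$, $\{g_k\}$ is positive at its endpoints $k=1$ and $k=n-1$ but has at least one negative interior term. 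The strategy is to exploit the summation-by-parts identity \eqref{gf2} over a carefully chosen subinterval $[i,j]$ so that the boundary terms on the right-hand side have a definite sign that contradicts the sign of the left-hand sum.

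The key combinatorial step will be to track where the sign changes occur. First I would establish, via the recurrences \eqref{f-k}--\eqref{g-k} and \eqref{figi}, how the signs of $f_k$ and $g_k$ relate to the monotonicity patterns $T_k = u_{k+1}-u_k$ and $S_k = v_{k+1}-v_k$; in particular the relation $f_k = d_1 T_k - q_k u_k$ and $g_k = d_2 S_k - q_k v_k$ lets one transfer sign information between the increment sequences and the flux sequences. The heart of the argument should be to locate the first index where $g_k$ (or $T_k$) becomes positive again after being negative, or symmetrically an extremal index of one sequence, and then choose the endpoints $i,j$ in \eqref{gf2} at such transition points so that the boundary expressions $\rho^{(2)}_j(d_2 u_j S_j - d_1 v_j T_j)$ and $\rho^{(2)}_i[(d_2+q_{i-1})u_i S_{i-1}-(d_1+q_{i-1})v_i T_{i-1}]$ both contribute with the same sign, while every summand $\rho^{(2)}_{k+1} T_k g_k$ on the left is controlled. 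This mirrors the endpoint argument already used successfully in the proof of Lemma \ref{tsbh} (ii$_2$), where substituting $i=\hat i+1, j=n$ (or $i=1, j=\check i$) at a sign-change index of $S$ produced a clean contradiction.

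I expect the main obstacle to be the bookkeeping of multiple independent sign changes: unlike the $n=2,3$ case of Lemma \ref{23no}, where $\{g_k\}$ has no room to change sign and the contradiction with \eqref{xmyy} is immediate, for $n\ge 4$ the sequences $\{T_k\}$ and $\{S_k\}$ genuinely oscillate, so one cannot simply read off a global sign of $\sum \rho^{(2)}_{k+1} T_k g_k$. The delicate part is to show that the oscillations of $\{T_k\}$ and $\{g_k\}$ are \emph{correlated} in a way forced by the equilibrium equations, so that a judicious choice of subinterval isolates a monotone block on which the boundary terms dominate. I would handle this by defining the first interior sign-change index of one sequence, say $\bar i = \min\{i : g_i > 0\}$ (which is well defined by Lemma \ref{proffg} and the change-of-sign conclusion, just as in the proof of Lemma \ref{tsbh}), and then using $(\textbf{H1})$ together with the induction structure of \eqref{f-k}--\eqref{g-k} to propagate the sign $g_k>0$ all the way to $k=n-1$, contradicting $g_{n-1}<0$ from Lemma \ref{proffg}. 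The convexity hypothesis $(\textbf{H1})$ enters precisely in guaranteeing that once $r - u_k - v_k$ turns negative it stays negative downstream, which is what locks in the monotone propagation.
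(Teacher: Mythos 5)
Your setup (contradiction, the identity \eqref{gf2}, the sign data from Lemmas \ref{proffg} and \ref{tsbh}) matches the paper's starting point, but the closing argument you propose has two genuine gaps. First, the index $\bar i=\min\{i:g_i>0\}$ need not be well defined. In the proof of Lemma \ref{tsbh}${\rm (i_2)}$ that index exists because one assumes there, for contradiction, that $\{T_k\}$ does \emph{not} change sign: then $T_k\ge 0$ for all $k$, $T_1>0$, $g_1<0$, and the vanishing weighted sum \eqref{xmyy} forces some $g_k>0$. In the setting of Lemma \ref{iv2} you already know from Lemma \ref{tsbh} that $\{T_k\}$ \emph{does} change sign, so $\sum_k\rho^{(2)}_{k+1}T_kg_k=0$ is perfectly compatible with $g_k<0$ for every $k$; nothing forces $\{g_k\}$ to change sign. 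Second, and more fundamentally, your propagation step (``once $r-u_k-v_k<0$ it stays negative downstream'') is not a consequence of $(\textbf{H1})$. Keeping $r-u_k-v_k$ negative as $k$ increases requires $u_{k+1}+v_{k+1}\ge u_k+v_k$, i.e.\ sign control on $T_k$ and $S_k$ downstream of $\bar i$ --- exactly what is unavailable here, because both sequences genuinely oscillate when $n\ge 4$. Note also that the recurrences \eqref{f-k-a} and \eqref{g-k-a} contain no $\bm q$ at all for $k\le n-1$; the convexity hypothesis enters the problem only through the $T$--$S$ recurrences \eqref{TSk}, which carry the increments $q_{k-1}-q_k$, so $(\textbf{H1})$ cannot play the role you assign to it.

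What your sketch is missing is the actual engine of the paper's proof: a complete case analysis on the \emph{relative position} of the sign-change indices of $\{T_k\}$ and $\{S_k\}$. The paper introduces the alternating indices $i^1>i^2>\cdots>i^{2p}$ of $\{T_k\}$ together with $i_s=\max\{i:S_i<0\}$, and treats separately the cases $i_s>i^1$, $i^2\le i_s\le i^1$, and $i_s<i^2$ (then handles $p\ge 2$ by repeating the scheme). Your propagation idea is legitimate only inside the first case: when $i_s$ lies to the right of all sign changes of $T$, one gets $u_{i_s}\le\cdots\le u_n$ and $v_{i_s+1}\le\cdots\le v_n$, and only with this monotonicity in hand does $g_{i_g}>0$ propagate forward to contradict $g_{n-1}<0$. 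The middle case is resolved not by the identities at all but by the recurrences \eqref{TSk} together with $(\textbf{H1})$ (propagating $S_k<0$ backwards to reach a sign contradiction at $i^2+1$), and the last case requires endpoint substitutions into \emph{both} \eqref{gf2} and \eqref{fg2}, not \eqref{gf2} alone, with an auxiliary claim about the signs of $f_k$ on the intermediate block. Without this bookkeeping, your chosen interval cannot be guaranteed to have boundary terms of a definite sign, and the single-sequence propagation you rely on collapses.
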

\begin{proof}
Since the nonlinear terms of model \eqref{pat-cp2} are symmetric, we only need to consider the case $d_1<d_2$. Suppose to the contrary that model \eqref{pat-cp2} admits a positive equilibrium $(\bm u,\bm v)$. It follows from Lemma \ref{tsbh} that there exists $2p\;(p\ge1)$ points $1\le i^{2p}<i^{2p-1}<\cdots<i^2<i^1<n-1$ given by
\begin{equation}\label{ti}
\begin{split}
&i^1=\max\{i:T_i<0\},\;\;i^2=\max\{i<i^1:T_i>0\},\cdots,\\
&i^{2p-1}=\max\{i<i^{2p-2}:T_i<0\},\;\;i^{2p}=\max\{i<i^{2p-1}:T_i>0\}.
\end{split}
\end{equation}
Similarly, $i_s=\max\{i:S_i<0\}$ is also well defined with $1<i_s<n-1$.

Suppose that $p=1$. Then we will obtain a contradiction for each of the following three cases:
\begin{equation*}
(A_1): i_s>i^1,\;\; (A_2):  i^2\le i_s\le i^1,\;\;(A_3): i_s< i^2.
\end{equation*}
First, we consider case $(A_1)$. By the definition of $i^1$ and $i_s$,  we have
\begin{equation}\label{sis1}
S_{i_s}<0,\;\;S_k\ge0\;\;\text{for}\;\;k=i_s+1,\cdots,n-1,\;\;\text{and}\;\;T_{k}\ge0\;\;\text{for}\;\;k=i_s,\cdots,n-1.
\end{equation}
It follows that 
\begin{equation}\label{sis3}
v_{i_s}>v_{i_s+1}, \ v_{i_s+1}\le\cdots\le v_n\;\;\text{and}\;\;u_{i_s}\le u_{i_s+1}\le\cdots\le u_n.
\end{equation}
Then $g_{i_s}=d_2v_{i_s+1}-(d_2+q_k)v_{i_s}<0$. We claim that
\begin{equation}\label{gddx}
g_k\le0\;\;\text{for}\;\;k=i_s+1,\cdots,n-1.
\end{equation}
If it is not true, then $i_g=\min\{i>i_s:g_i>0\}$ is well-defined, and consequently,
\begin{equation*}
g_k\le0\;\;\text{for}\;\;k=i_s,\cdots,i_g-1\;\;\text{and}\;\;g_{i_g}>0.
\end{equation*}
This, combined with \eqref{g-k-a}, implies that
\begin{equation}\label{initial}
r-u_{i_g}-v_{i_g}<0.
\end{equation}
Then, by \eqref{g-k-a}, \eqref{sis3} and \eqref{initial}, we have
\begin{equation*}
g_k-g_{k-1}=-v_k(r-u_k-v_k)>0\;\;\text{for}\;\;k=i_g+1,\cdots,n-1.
\end{equation*}
Noticing that $g_{i_g}>0$, we have
\begin{equation*}
0<g_{i_g}<\cdots<g_{n-1},
\end{equation*}
which contradicts Lemma \ref{proffg}. This proves \eqref{gddx}. Then substituting $i=i_s+1$ and $j=n$ into \eqref{gf2} and noticing  $T_n=S_n=0$, we see from \eqref{sis1} and \eqref{gddx} that
\begin{equation*}
\begin{split}
0&<-\rho^{(2)}_{i_s+1}\left[(d_2+q_{i_s})u_{i_s+1}S_{i_s}-(d_1+q_{i_s})v_{i_s+1}T_{i_s}\right]\\
&=\ds\f{d_2-d_1}{d_2}\sum_{k=i_s+1}^{n-1}\rho^{(2)}_{k+1}T_kg_k\le 0,
\end{split}
\end{equation*}
%where the last inequality is strict as $g_{n-1}<0$ and $T_{n-1}>0$. 
which is a contradiction.

Then we consider $(A_2)$. By the definition of $i_s$, we have $S_{i_s}<0$ and $S_{i_s+1}\ge0$. By \eqref{TSk-b}, we see that
\begin{equation}\label{sis2}
r+q_{i_s}-q_{i_s+1}-u_{i_s+1}-v_{i_s+1}<0.
\end{equation}
By the definition of $i^2$, we have  $T_{i^2}>0$ and $T_{i^2+1}\le0$. If $i_s=i^2$, then by \eqref{TSk-a}, we have
\begin{equation*}
0>d_1T_{i_s+1}-(d_1+q_{i_s})T_{i_s}=-u_{i_s+1}(r+q_{i_s}-q_{i_s+1}-u_{i_s+1}-v_{i_s+1})>0,
\end{equation*}
which is a contradiction.

If $i_s>i^2$, then $T_{k}\le0$ for $k=i^2+1,\cdots,i_s$, and consequently,
\begin{equation*}
u_{i^2+1}\ge\cdots\ge u_{i_s} \ge u_{i_s+1}.
\end{equation*}
%q_{i_s-1}-q_{i_s}<q_{i_s}-q_{i_s+1}.
This, combined with $S_{i_s}<0$, assumption $(\textbf{H1})$ and \eqref{sis2}, implies that
\begin{equation}
r+q_{i_s-1}-q_{i_s}-u_{i_s}-v_{i_s}<0.
\end{equation}
Then, by \eqref{TSk-b}, we have $S_{i_s-1}<0$, which implies that $v_{i_s-1}>v_{i_s}$. By induction, we can show that
\begin{equation*}
r+q_{k-1}-q_{k}-u_{k}-v_{k}<0\;\;\text{for}\;\;k=i^2+1,\cdots,i_s,
\end{equation*}
and
\begin{equation*}
S_k<0\;\;\text{for}\;\;k=i^2,\cdots,i_s.
\end{equation*}
By \eqref{TSk-a} with $k=i^2+1$, we have
%$T_{i^2}>0$ and $T_{i^2+1}\le0$
\begin{equation*}
0>d_1T_{i^2+1}-(d_1+q_{i^2})T_{i^2}=-u_{i^2+1}(r+q_{i^2}-q_{i^2+1}-u_{i^2+1}-v_{i^2+1})>0,
\end{equation*}
which is a contradiction.
%we have $T_{i^2}>0$, $T_{i^2+1}\le0$, $S_{i_s}<0$, and $S_{i_s+1}\ge0$.
%If $i_s=i^2$, then by \eqref{TSk}, we have
%\begin{equation*}
%\begin{split}
%0<&-\frac{1}{u_{i_s+1}}[d_1T_{i_s+1}-(d_1+q_{i_s})T_{i_s}]=r+q_{i_s}-q_{i_s+1}-u_{i_s+1}-v_{i_s+1}\\
%=&-\frac{1}{v_{i_s+1}}[d_2S_{i_s+1}-(d_2+q_{i_s})S_{i_s}]<0,
%\end{split}
%\end{equation*}
%which is a contradiction.

For case $(A_3)$, we have
\begin{equation}\label{a31}
T_k\ge0\;\;\text{for}\;\;k=i_s+1,\cdots,i^2\;\;\text{and}\;\;S_k\ge0\;\;\text{for}\;\;k=i_s+1,\cdots,n-1.
\end{equation}
It follows that
\begin{equation}\label{is}
u_{i_s+1}\le\cdots\le u_{i^2+1}\;\;\text{and}\;\;v_{i_s+1}\le\cdots\le v_{n}.
\end{equation}
By the definition of $i_s$ and $i^2$, we have $S_{i_s}<0$ and $T_{i^2+1}\le0$, which yields $g_{i_s}<0$ and $f_{i^2+1}<0$. We claim that
\begin{equation}\label{gis}
f_k\le0\;\;\text{for}\;\;k=i_s+1,\cdots,i^2.
\end{equation}
If it is not true, then $i_f=\max\{i^s+1\le i\le i^2:f_i>0\}$ is well-defined, and consequently,
%there exists one point in $[i_s+1,i^2]$ such that $f_k>0$. Denote the one that is closest to $i^2$ by $i_f$, where
\begin{equation}\label{fif}
f_{i_f}>0\;\;\text{and}\;\;f_k\le0\;\;\text{for}\;\;k=i_f+1,\cdots,i^2+1.
\end{equation}
By $f_{i_f}>0$, $f_{i_f+1}\le 0$ and \eqref{f-k-a}, we have $r-u_{i_f+1}-v_{i_f+1}>0$. Then by \eqref{is}, we see that
\begin{equation*}
r-u_k-v_k>0\;\;\text{for}\;\;k=i_s+1,\cdots,i_f.
\end{equation*}
This, combined with \eqref{f-k-a}, \eqref{g-k-a}, yields
\begin{equation}\label{if2}
f_{i_s}>\cdots>f_{i_f}>0\;\;\text{and}\;\;0>g_{i_s}>\cdots>g_{i_f}.
\end{equation}
where we have used $f_{i_f}>0$ and $g_{i_s}<0$.

If $i_f<i^2$, then substituting $i=i_f+1$ and $j=i^2+1$ into \eqref{fg2}, we see from \eqref{a31}, \eqref{fif} and \eqref{if2} that
\begin{equation*}
\begin{split}
0\le &\ds\f{d_1-d_2}{d_1}\sum_{k=i_f+1}^{i^2}\rho^{(1)}_{k+1}S_kf_k\\
=&\rho^{(1)}_{i^2+1}\left(d_1v_{i^2+1}T_{i^2+1}-d_2u_{i^2+1}S_{i^2+1}\right)-\rho^{(1)}_{i_f+1}\left[(d_1+q_{i_f})v_{i_f+1}T_{i_f}-(d_2+q_{i_f})u_{i_f+1}S_{i_f}\right]\\
=&\rho^{(1)}_{i^2+1}\left(d_1v_{i^2+1}T_{i^2+1}-d_2u_{i^2+1}S_{i^2+1}\right)-\rho^{(1)}_{i_f+1}\left(v_{i_f+1}f_{i_f}-u_{i_f+1}g_{i_f}\right)<0,\\
\end{split}
\end{equation*}
which is a contradiction.

If $i_f=i^2$, then substituting $i=i_s+1$ and $j=i^2+1$ into \eqref{gf2}, and from \eqref{a31} and \eqref{if2}, we have a contradiction:
\begin{equation*}
\begin{split}
0>&\ds\f{d_2-d_1}{d_2}\sum_{k=i_s+1}^{i^2}T_kg_k\rho^{(2)}_{k+1}\\
=&\rho^{(2)}_{i^2+1}\left(d_2u_{i^2+1}S_{i^2+1}-d_1v_{i^2+1}T_{i^2+1}\right)-\rho^{(2)}_{i_s+1}\left[(d_2+q_{i_s})u_{i_s+1}S_{i_s}-(d_1+q_{i_s})v_{i_s+1}T_{i_s}\right]\\
=&\rho^{(2)}_{i^2+1}\left(d_2u_{i^2+1}S_{i^2+1}-d_1v_{i^2+1}T_{i^2+1}\right)-\rho^{(2)}_{i_s+1}\left(u_{i_s+1}g_{i_s}-v_{i_s+1}f_{i_s}\right)>0.\\
\end{split}
\end{equation*}
This proves \eqref{gis}.

Now we will obtain a contradiction for case $(A_3)$. Recall that $T_{i_s}\ge0$, $T_{i^2+1}\le 0$,  $S_{i_s}<0$ and $S_{k}\ge0$ for any $k\ge i_{s}+1$. Then substituting $i=i_s+1$ and $j=i^2+1$ into \eqref{fg2}, we see from \eqref{gis} that
\begin{equation*}
\begin{split}
0\le &\ds\f{d_1-d_2}{d_1}\sum_{k=i_s+1}^{i^2}\rho^{(1)}_{k+1}S_kf_k \\
=&\rho^{(1)}_{i^2+1}\left(d_1v_{i^2+1}T_{i^2+1}-d_2u_{i^2+1}S_{i^2+1}\right)\\
&-\rho^{(1)}_{i_s+1}\left[(d_1+q_{i_s})v_{i_s+1}T_{i_s}-(d_2+q_{i_s})u_{i_s+1}S_{i_s}\right]<0,
\end{split}
\end{equation*}
which is a contradiction.

Suppose $p=2$. Then we need to consider  $(A_1)-(A_2)$ and the following three cases:
\begin{equation*}
(B_1):  i^3<{i}_s< i^2,\;\; (B_2):  i^4\le {i}_s\le i^3,\;\;(B_3): {i}_s< i^4.
\end{equation*}
For each case of $(A_1)$-$(A_2)$ and $(B_1)$, we can obtain a contradiction by repeating the proof of $p=1$. Cases $(B_2)$-$(B_3)$ can be handled exactly the same as $(A_2)$-$(A_3)$ by replacing $i^2$ by $i^4$ and $i^1$ by $i^3$, respectively. Continue this process, we can prove the desired result for any $p>1$.   
\end{proof}

\begin{lemma}\label{iv21}
Suppose that $d_1, d_2, r>0$ with $d_1\ne d_2$, $\bm q\ge \bm 0$, and  $(\rm{\bf{H2}})$ holds. If $n\ge 4$, then model \eqref{pat-cp2} has no positive equilibrium.
\end{lemma}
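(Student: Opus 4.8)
The plan is to mirror the structure of the proof of Lemma \ref{iv2}, since Lemma \ref{iv21} is its exact analogue under the opposite convexity assumption $(\textbf{H2})$. I would again argue by contradiction, supposing a positive equilibrium $(\bm u,\bm v)$ exists, and by the symmetry of the nonlinear terms in \eqref{pat-cp2} I may assume without loss of generality $d_1<d_2$. The starting point is Lemma \ref{tsbh}(ii): under $(\textbf{H2})$ we have $T_1,S_1,T_{n-1},S_{n-1}<0$, and when $n\ge 4$ both $\{T_k\}_{k=1}^{n-1}$ and $\{S_k\}_{k=1}^{n-1}$ must change sign. The sign-change structure is therefore the reverse of the $(\textbf{H1})$ case, where the extreme entries were positive. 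I would organize the crossing points of $\{T_k\}$ analogously to \eqref{ti}, defining $2p$ indices at which $T$ changes sign (now the boundary values being negative rather than positive), and define a reference crossing index for $\{S_k\}$, say $i_s=\max\{i:S_i>0\}$, which is the natural mirror of the definition used in Lemma \ref{iv2}.

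The key computational engine remains the two summation identities \eqref{gf2} and \eqref{fg2} from Lemma \ref{indfg}, together with the sign-propagation arguments driven by \eqref{f-k}--\eqref{g-k} and \eqref{TSk}. First I would treat the base case $p=1$, splitting into the three regimes $(A_1)$ $i_s>i^1$, $(A_2)$ $i^2\le i_s\le i^1$, and $(A_3)$ $i_s<i^2$, exactly paralleling Lemma \ref{iv2} but with every inequality reversed to reflect the concave monotonicity. In each regime the strategy is the same: use the definitions of the crossing indices and the recursions \eqref{TSk-a}--\eqref{TSk-b} to propagate a definite sign on $r+q_{k-1}-q_k-u_k-v_k$ along a run of indices, deduce monotonicity of the $u_k$ or $v_k$ on that run, then feed the resulting one-signed sequence $\{f_k\}$ or $\{g_k\}$ into the appropriate restriction of \eqref{gf2} or \eqref{fg2} to force the summed quantity to have a sign opposite to the one produced by the boundary terms, yielding a contradiction. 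The intermediate-value/monotone-sequence bookkeeping against Lemma \ref{proffg} (which supplies $f_1,g_1,f_{n-1},g_{n-1}<0$) is the same device as before.

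The main obstacle I anticipate is getting the boundary-term signs right after reversal. In the $(\textbf{H1})$ proof the decisive contradictions came from terms such as $(d_2+q_{i_s})u_{i_s+1}S_{i_s}-(d_1+q_{i_s})v_{i_s+1}T_{i_s}$, where the signs of $S_{i_s}$ and $T_{i_s}$ at a crossing, combined with $d_1<d_2$, pinned down the sign of the whole bracket; under $(\textbf{H2})$ the monotonicity of $u$ and $v$ is reversed, so I must recheck that the same factor $\frac{d_2-d_1}{d_2}>0$ (or $\frac{d_1-d_2}{d_1}<0$) still produces a genuine contradiction rather than a consistent inequality. In particular the case analysis that distinguishes $d_2>d_1$ from $d_2<d_1$ in the proof of Lemma \ref{tsbh}(ii) suggests that here, with $d_1<d_2$ fixed, one of the two identities \eqref{gf2}, \eqref{fg2} will be the productive one and I must select it correctly in each subcase. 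Once the $p=1$ analysis is secured, the induction on $p$ is routine: as in Lemma \ref{iv2}, the higher crossing blocks $(B_2)$--$(B_3)$ reduce verbatim to $(A_2)$--$(A_3)$ after relabeling the crossing indices (e.g. replacing $i^2$ by $i^4$ and $i^1$ by $i^3$), and the remaining cases repeat the $p=1$ argument on a subinterval, so I would simply indicate that this relabeling carries the argument through for all $p\ge 1$.
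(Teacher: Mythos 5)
Your proposal is correct and follows essentially the same route as the paper's own proof: assume $d_1<d_2$ by symmetry, use Lemma \ref{tsbh}(ii) to define the sign-change indices of $\{T_k\}$ (with the boundary signs reversed relative to \eqref{ti}) together with $\max\{i:S_i>0\}$, split into the three regimes according to where that index falls, and in each regime propagate signs via \eqref{TSk} and feed the one-signed runs of $\{f_k\}$ or $\{g_k\}$ into the restricted identities \eqref{gf2}/\eqref{fg2} to contradict the boundary terms, with the induction on $p$ handled by relabeling. The obstacle you flagged is exactly how the paper resolves it: under $(\textbf{H2})$ the productive identity in the outer cases becomes \eqref{fg2} rather than \eqref{gf2}, and the concave case is in fact slightly simpler since the auxiliary claim needed in case $(A_3)$ of Lemma \ref{iv2} is not required.
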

\begin{proof}
Similar to the proof of Lemma \ref{iv2}, we may assume $d_1<d_2$ and suppose to the contrary that the model has a positive equilibrium $(\bm u, \bm v)$.  It follows from Lemma \ref{tsbh} that there exists $2p\;(p\ge1)$ points $1\le i_{2p}<i_{2p-1}<\cdots<i_2<i_1<n-1$ given by
\begin{equation}\label{ti2}
\begin{split}
&i_1=\max\{i:T_i>0\},\;\;i_2=\max\{i<i_1:T_i<0\},\cdots,\\
&i_{2p-1}=\max\{i<i_{2p-2}:T_i>0\},\;\;i_{2p}=\max\{i<i_{2p-1}:T_i<0\} 
\end{split}
\end{equation}
and $i^s=\max\{i:S_i>0\}$ is well-defined.

Suppose that $p=1$. Then we consider the following three cases:
\begin{equation*}
(A_1): i^s>i_1,\;\; (A_2):  i_2\le i^s\le i_1,\;\;(A_3): i^s< i_2.
\end{equation*}
For case $(A_1)$, we have
\begin{equation*}
S_{i^s}>0,\;\;S_k\le0\;\;\text{for}\;\;k=i^s+1,\cdots,n-1,\;\;\text{and}\;\;T_{k}\le0\;\;\text{for}\;\;k=i^s,\cdots,n-1,
\end{equation*}
which yields that $f_{k}<0$ for $k=i^s,\cdots,n-1$.
Then substituting $i=i^s+1$ and $j=n$ into \eqref{fg2}, we have
\begin{equation*}
\begin{split}
0\ge&\ds\f{d_1-d_2}{d_1}\sum_{k=i^s+1}^{n-1}\rho^{(1)}_{k+1}S_k f_k=-\rho^{(1)}_{i^s+1}\left[(d_1+q_{i^s})v_{i^s+1}T_{i^s}-(d_2+q_{i^s})u_{i^s+1}S_{i^s}\right]>0,
\end{split}
\end{equation*}
which is a contradiction.

For case $(C_2)$, we have $S_{i^s}>0$, $S_{i^s+1}\le0$, $T_{i_2}<0$ and $T_{i_2+1}\ge0$. If $i^s=i_2$, then by \eqref{TSk}, we have
\begin{equation}\label{c1}
0<\frac{1}{u_{i^s+1}}[d_1T_{i^s+1}-(d_1+q_{i^s})T_{i^s}]
=\frac{1}{v_{i^s+1}}[d_2S_{i^s+1}-(d_2+q_{i^s})S_{i^s}]<0,
\end{equation}
which is a contradiction.

If $i^s>i_2$, then $T_{k}\ge0$ for $k=i_2+1,\cdots,i^s$, and consequently,
\begin{equation}\label{a2}
u_{i_2+1}\le\cdots\le u_{i^s}\le u_{i^s+1}.
\end{equation}
By \eqref{TSk} again, and noticing that $S_{i^s}>0$ and $S_{i^s+1}\le0$, we have
\begin{equation*}
r+q_{i^s}-q_{i^s+1}-u_{i^s+1}-v_{i^s+1}>0.
\end{equation*}
This, combined with $S_{i^s}>0$, assumption $(\textbf{H2})$ and \eqref{a2}, implies that
\begin{equation}
r+q_{i^s-1}-q_{i^s}-u_{i^s}-v_{i^s}>0.
\end{equation}
Then by \eqref{TSk-b} and induction, we can show that
\begin{equation*}
r+q_{k-1}-q_{k}-u_{k}-v_{k}>0\;\;\text{for}\;\;k=i_2+1,\cdots,i^s,  
\end{equation*}
and
\begin{equation*}
S_k>0\;\;\text{for}\;\;k=i_2,\cdots,i^s.
\end{equation*}
By \eqref{TSk-a} with $k=i_2+1$, we have that
\begin{equation*}
0<d_1T_{i_2+1}-(d_1+q_{i_2})T_{i_2}=-u_{i_2+1}(r+q_{i_2}-q_{i_2+1}-u_{i_2+1}-v_{i_2+1})<0,
\end{equation*}
which is a contradiction.

For case $(C_3)$, we have $S_{i^s}>0$, $S_k\le0$ for $k\ge i^s+1$, $T_{i_2+1}\ge0$, and $T_k\le0$ for $k=i^s,\cdots,i_2$, which implies that $f_k<0$ for $k=i^s,\cdots,i_2$. Then substituting $i=i^s+1$ and $j=i_2+1$ into \eqref{fg2}, we see that
\begin{equation*}
\begin{split}
0\ge&\ds\f{d_1-d_2}{d_1}\sum_{k=i^s+1}^{i_2}\rho^{(1)}_{k+1}S_kf_k\\
=&\rho^{(1)}_{i_2+1}\left(d_1v_{i_2+1}T_{i_2+1}-d_2u_{i_2+1}S_{i_2+1}\right)\\
&-\rho^{(1)}_{i^s+1}\left[(d_1+q_{i^s})v_{i^s+1}T_{i^s}-(d_2+q_{i^s})u_{i^s+1}S_{i^s}\right]>0,
\end{split}
\end{equation*}
which is a contradiction.

Similar to the proof of Lemma \ref{iv2}, we can inductively prove the desired result for any $p>1$. 
\end{proof}

\section{Local stability of semi-trivial equilibria}
In this section, we consider the stability of the two semi-trivial equilibria $(\bm u^*,\bm 0)$  and $(\bm 0, \bm v^*)$ if they exist. It suffices to  study only
the stability of $(\bm u^*,\bm 0)$ since the stability of $(\bm 0, \bm v^*)$ can be investigated similarly.  It is easy to see that the stability of $(\bm u^*, \bm 0)$, if exists, is determined by the sign of $\lambda_1(d_2, \bm q, \bm r-\bm u^*)$: if  $\lambda_1(d_2, \bm q, \bm r-\bm u^*)<0$, then $(\bm u^*, \bm 0)$ is locally asymptotically stable;  if  $\lambda_1(d_2, \bm q, \bm r-\bm u^*)>0$, then $(\bm u^*, \bm 0)$ is unstable. Similarly, the stability of $(\bm 0, \bm v^*)$, if exists, is determined by the sign of $\lambda_1(d_1, \bm q, \bm r-\bm v^*)$.

\begin{lemma}\label{wdx}
Let $d_1, d_2, r>0$ and $\bm q\ge \bm 0$. Suppose that the semi-trivial equilibrium $(\bm u^*,\bm 0)$ of model \eqref{pat-cp2} exists. Then
the following statements hold:
\begin{enumerate}
    \item[$\rm{(i)}$] If $(\rm{\bf{H1}})$ holds,  then $(\bm u^*,\bm 0)$ is locally asymptotically stable when $d_1>d_2$ and
     unstable when $d_1<d_2$;
    \item[$\rm{(ii)}$] If $(\rm{\bf{H2}})$ holds, then $(\bm u^*,\bm 0)$ is locally asymptotically stable when  $d_1<d_2$ and unstable when
    $d_1>d_2$.
\end{enumerate}
\end{lemma}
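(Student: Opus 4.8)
The stability of $(\bm u^*,\bm 0)$ is governed by the sign of $\lambda_1(d_2,\bm q,\bm r-\bm u^*)$, so the whole problem reduces to comparing this number with $0$. The plan is to exploit the observation that $\bm u^*$ is itself the positive principal eigenvector of $d_1D+Q\,\text{diag}(q_i)+\text{diag}(r-u_i^*)$ for the eigenvalue $\lambda_1(d_1,\bm q,\bm r-\bm u^*)=0$, since $\bm u^*$ solves \eqref{pat-cp3} with $d=d_1$. Setting $\bm m=\bm r-\bm u^*$, I would therefore take $\bm\phi_1=\bm u^*$ (the $d_1$-eigenvector) and $\bm\phi_2$ (the $d_2$-eigenvector) as the two eigenvectors in the identity \eqref{iden} of Lemma \ref{prpla}.

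With this choice $\lambda_1(d_1,\bm q,\bm m)=0$, so the left-hand side of \eqref{iden} collapses to $-\lambda_1(d_2,\bm q,\bm m)\sum_{k}\rho_k^{(2)}u_k^*\phi_{2,k}$, whose prefactor is strictly positive. Hence the sign of $\lambda_1(d_2,\bm q,\bm r-\bm u^*)$ is determined by the sign of the right-hand side
\[
\f{d_2-d_1}{d_2}\sum_{k=1}^{n-1}\rho_{k+1}^{(2)}\left(u_{k+1}^*-u_k^*\right)\left[d_2\phi_{2,k+1}-(d_2+q_k)\phi_{2,k}\right].
\]
Each summand factors into the positive weight $\rho_{k+1}^{(2)}$, the increment $u_{k+1}^*-u_k^*$, and the discrete flux $d_2\phi_{2,k+1}-(d_2+q_k)\phi_{2,k}$ of the invader eigenvector; I would pin down the signs of the latter two and then read off the conclusion from the sign of $d_2-d_1$.

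For case (i), assumption $(\textbf{H1})$ gives through Lemma \ref{wdx0}(i) that $\bm u^*$ is increasing, so $u_{k+1}^*-u_k^*>0$; moreover $\bm m=\bm r-\bm u^*$ is then strictly decreasing, so Lemma \ref{prop-p11} applies to $\bm\phi_2$ and gives $(d_2+q_k)\phi_{2,k}-d_2\phi_{2,k+1}>0$, i.e. the flux factor is negative. Thus every summand is negative, and combining with the positive prefactor on the left shows that $\lambda_1(d_2,\bm q,\bm r-\bm u^*)$ is negative precisely when $d_1>d_2$ and positive when $d_1<d_2$, which is exactly the stability/instability dichotomy claimed.

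For case (ii) the bookkeeping is identical but the flux factor is where the real difficulty lies. Under $(\textbf{H2})$, Lemma \ref{wdx0}(ii) gives $\bm u^*$ decreasing, so now $u_{k+1}^*-u_k^*<0$, and for each product to again have a definite sign I need the flux $d_2\phi_{2,k+1}-(d_2+q_k)\phi_{2,k}$ to be negative, i.e. \eqref{ephi} for $\bm\phi_2$. But $\bm m=\bm r-\bm u^*$ is now increasing, so Lemma \ref{prop-p11} is unavailable — and \eqref{ephi} genuinely can fail for a merely increasing potential, so this step cannot be finessed. The hard part will be to show that the invader eigenvector $\bm\phi_2$ is itself monotone decreasing; once this is known, \eqref{ephi} follows at once because $d_2\phi_{2,k+1}-(d_2+q_k)\phi_{2,k}\le d_2(\phi_{2,k+1}-\phi_{2,k})<0$ as $q_k\ge0$, just as in Lemma \ref{wdx0}(ii). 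The subtlety is that this monotonicity cannot come from the sign of $\bm m$ alone: it must use that $\bm m=\bm r-\bm u^*$ with $\bm u^*$ the decreasing principal eigenvector of the $d_1$-problem, together with $(\textbf{H2})$. I would attempt it by a contradiction-and-induction argument on the eigen-recursion for $\bm\phi_2$, mirroring the structure of Lemmas \ref{wdx0} and \ref{prop-p11}, propagating a hypothetical first increase of $\bm\phi_2$ down the chain until it contradicts the downstream boundary equation. Once $\bm\phi_2$ is decreasing, every summand is negative again, so $\lambda_1(d_2,\bm q,\bm r-\bm u^*)$ is negative precisely when $d_1<d_2$ and positive when $d_1>d_2$, reversing the conclusion of case (i).
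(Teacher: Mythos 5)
Your treatment of case (i) is correct and is exactly the paper's argument: take $\bm\phi_1=\bm u^*$ in the identity \eqref{iden}, use Lemma \ref{wdx0}(i) for the increments $u^*_{k+1}-u^*_k$ and Lemma \ref{prop-p11} for the fluxes, and read the sign of $\la_1(d_2,\bm q,\bm r-\bm u^*)$ off the sign of $d_2-d_1$.

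Case (ii), however, contains a genuine gap, and you have located it yourself: everything hinges on the unproved claim that the invader eigenvector $\bm\phi_2$ is monotone decreasing, and the induction you propose (mirroring Lemmas \ref{wdx0} and \ref{prop-p11}) does not go through. In Lemma \ref{wdx0}(ii) the quantity propagated down the chain is the sign of the potential $r+q_{i-1}-q_i-w_i^*$, and the induction works only because the unknown $w_i^*$ appears in its own potential: the contradiction hypothesis $w_i^*\ge w_{i-1}^*$ combines with $(\textbf{H2})$ to force $r+q_{i-1}-q_i-w_i^*\le r+q_{i-2}-q_{i-1}-w_{i-1}^*\le 0$. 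For $\bm\phi_2$ the relevant potential is $V_i=r-u_i^*-(q_i-q_{i-1})-\la_1$, in which the \emph{fixed} profile $\bm u^*$ appears instead of $\bm\phi_2$, so hypothesizing a first increase of $\bm\phi_2$ gives no control on $V_i$. Worse, since $\bm u^*$ is decreasing under $(\textbf{H2})$,
\begin{equation*}
V_i-V_{i-1}=\left(u_{i-1}^*-u_i^*\right)-\left[(q_i-q_{i-1})-(q_{i-1}-q_{i-2})\right]
\end{equation*}
is a difference of two nonnegative terms with no definite sign: the decrease of $\bm u^*$ pushes the potential up while the convexity of $\bm q$ pushes it down. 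Hence the inequality $V_i\le0$ cannot be propagated, and the ``first increase'' argument stalls at the very first induction step (one can check the claim by hand when $n=2$, but for $n\ge3$ neither your argument nor any obvious variant establishes it).

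The paper avoids this claim entirely and proves (ii) indirectly in three steps. First, non-degeneracy: $\la_1(d_2,\bm q,\bm r-\bm u^*)\neq0$ whenever $d_2\neq d_1$. This is shown by contradiction without any monotonicity of the putative eigenvector $\bm\psi$: if $\la_1=0$, the identity \eqref{tildef} (an analogue of \eqref{fg2}) with $i=1$, $j=n$ forces $\sum_{k}\rho^{(1)}_{k+1}\widetilde S_k\widetilde f_k=0$, where $\widetilde f_k<0$ by Lemma \ref{wdx0}(ii), so $\{\widetilde S_k\}$ must change sign; applying \eqref{tildef} once more on the range from $\max\{i:\widetilde S_i>0\}+1$ to $n$ produces a sign contradiction. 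Second, the derivative formula \eqref{fzjs} gives $\partial_d\la_1(d,\bm q,\bm r-\bm u^*)\big|_{d=d_1}<0$, which requires only the monotonicity and flux positivity of $\bm u^*$ itself (again Lemma \ref{wdx0}(ii)). Third, continuity of $\la_1$ in $d_2$ then fixes its sign on each side of $d_1$. To complete your proof you would either have to supply a genuinely new argument for the monotonicity of $\bm\phi_2$ (which may fail in general), or switch to an indirect argument of this kind.
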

\begin{proof}
Let $\bm\phi=(\phi_1,\cdots,\phi_n)^T\gg0$ be an eigenvector corresponding to $\lambda_1(d_2,\bm q,\bm r-\bm u^*)$, where $\bm r=(r, \dots, r)$.

(i) It follows from Lemma \ref{wdx0} that
\begin{equation}\label{ustar}
u^*_i<u^*_{i+1}\;\;\text{for}\;\;i=1,\cdots,n-1.
\end{equation}
This, combined with Lemma \ref{prop-p11}, implies that
\begin{equation}\label{phiin}
d_2\phi_{i+1}-(d_2+q_i)\phi_i<0\;\;\text{for}\;\;i=1,\cdots,n-1.
\end{equation}
Since $\bm u^*$ is an eigenvector corresponding to $\lambda_1(d_1,\bm q,\bm r-\bm u^*)=0$, by Lemma \ref{prpla}, we have
\begin{equation}\label{estimla}
\begin{split}
&-\la_1(d_2,\bm q,\bm r-\bm u^*)\sum_{k=1}^n\rho^{(2)}_ku^*_k\phi_k\\
=&\f{(d_2-d_1)}{d_2}\sum_{k=1}^{n-1}\rho^{(2)}_{k+1}
\left(u^*_{k+1}-u^*_k\right)\left[d_2\phi_{k+1}-(d_2+q_k)\phi_k\right],
\end{split}
\end{equation}
where $\rho^{(2)}_i$, $i=1,\cdots,n$, are defined by \eqref{rho2}.
Hence by \eqref{ustar}-\eqref{phiin},  $\la_1(d_2,\bm q,\bm r-\bm u^*)<0$ when $d_1>d_2$ and $\la_1(d_2,\bm q,\bm r-\bm u^*)>0$ when $d_1<d_2$. Therefore, $(\bm u^*,\bm 0)$ is locally asymptotically stable when $d_1>d_2$ and unstable when $d_1<d_2$.

(ii) We first claim that
\begin{equation}\label{d1d2l}
\la_1(d_2,\bm q,\bm r-\bm u^*)\neq0\;\;\text{if}\;\;d_2\ne d_1.
\end{equation}
Without loss of generality, we may assume  $d_1<d_2$.
Suppose to the contrary that $\la_1(d_2,\bm q,\bm r-\bm u^*)=0$. Let $\bm \psi\gg0$ be a corresponding eigenvector of $\la_1(d_2,\bm q,\bm r-\bm u^*)$. Recall that $\bm u^*$ is an eigenvector corresponding to $\lambda_1(d_1,\bm q,\bm r-\bm u^*)\;(=0)$. We define two sequences $\{\widetilde f_k\}_{k=0}^n$ and $\{\widetilde g_k\}_{k=0}^n$ as follows:
\begin{equation}\label{tildefg}
\begin{split}
&\widetilde f_0=\widetilde f_n=0,\;\;\widetilde f_k=d_1u^*_{k+1}-(d_1+q_k)u^*_k\;\;\text{for}\;\;k=1,\cdots,n-1,\\
&\widetilde g_0=\widetilde g_n=0,\;\;\widetilde g_k=d_2\psi_{k+1}-(d_2+q_k)\psi_k\;\;\text{for}\;\;k=1,\cdots,n-1,
\end{split}
\end{equation}
%We see
%\begin{equation*}
%\begin{split}
%&\tilde f_{k}-\tilde f_{k-1}=-u^*_k(r-u^*_k) ,\;\;\;k=1,\cdots,n-1,\\
%&\tilde f_{n}-\tilde f_{n-1}=-u^*_n(r-q_n-u^*_n),
%\end{split}
%\end{equation*}
%and
%\begin{equation*}
%\begin{split}
%&\tilde g_{k}-\tilde g_{k-1}=-\psi_k(r-u^*_k) ,\;\;\;k=1,\cdots,n-1,\\
%&\tilde g_{n}-\tilde g_{n-1}=-\psi_n(r-q_n-u^*_n).
%\end{split}
%\end{equation*}
%Then we define
and another two sequences $\{\widetilde T_k\}_{k=0}^n$ and $\{\widetilde S_k\}_{k=0}^n$ as follows:
\begin{equation}\label{tildets}
\begin{split}
&\widetilde T_0=\widetilde T_n=0,\;\;\widetilde T_k=u^*_{k+1}-u^*_k\;\;\text{for}\;\;k=1,\cdots,n-1,\\
&\widetilde S_0=\widetilde S_n=0,\;\;\widetilde S_k=\psi_{k+1}-\psi_k\;\;\text{for}\;\;k=1,\cdots,n-1.
\end{split}
\end{equation}

Using  similar arguments as the proof of Lemma \ref{indfg}, we can show that
\begin{equation}\label{tildef}
\begin{split}
\ds\f{d_1-d_2}{d_1}\sum_{k=i}^{j-1}\rho^{(1)}_{k+1}\widetilde S_k\widetilde f_k=&\rho^{(1)}_j\left(d_1\psi_j\widetilde T_j
-d_2u^*_j\widetilde S_j\right)\\
&-\rho^{(1)}_i\left[(d_1+q_{i-1})\psi_i\widetilde T_{i-1}-(d_2+q_{i-1})u^*_i\widetilde S_{i-1}\right],
\end{split}
\end{equation}
for any $1\le i<j\le n$, where $\rho^{(1)}_i$, $i=1,\cdots,n$, are defined by \eqref{rho1}.

By Lemma \ref{wdx0}, we have
\begin{equation}\label{tildetf}
\widetilde T_k<0\;\;\text{and}\;\;\widetilde f_k<0\;\;\text{for}\;\;k=1,\cdots,n-1.
\end{equation}
It follows that
\begin{equation}\label{tildesf}
\widetilde S_{n-1}=\frac{\psi_n}{d_2+q_{n-1}}(r+q_{n-1}-q_n-u^*_n)=\frac{\psi_n}{d_2+q_{n-1}}\f{(d_1+q_{n-1})\widetilde T_{n-1}}{u^*_n}<0.
\end{equation}
Substituting $i=1$ and $j=n$ into \eqref{tildef}, we have
\begin{equation}\label{tilde1n}
\sum_{k=1}^{n-1}\rho^{(1)}_{k+1}\widetilde S_k \widetilde f_k=0.
\end{equation}
Then it follows from \eqref{tildesf}-\eqref{tildetf} that  $i^*=\max\{i:\widetilde S_{i}>0\}$ is well defined with $1\le i^*<n-1$. By the definition of $i^*$, we have
\begin{equation*}
\widetilde S_{i^*}>0\;\;\text{and}\;\;\widetilde S_k\le0\;\;\text{for}\;\;k=i^*+1,\cdots,n-1.
\end{equation*}
Substituting $i=i^*+1$ and $j=n$ into \eqref{tildef}, we have
\begin{equation*}
\begin{split}
0\ge&\ds\f{d_1-d_2}{d_1}\sum_{k=i^*+1}^{n-1}\rho^{(1)}_{k+1}\widetilde S_k \widetilde f_k\\
=&-\rho^{(1)}_i\left[(d_1+q_{i^*})\psi_{i^*+1}\widetilde T_{i^*}-(d_2+q_{i^*})u^*_{i^*+1}\widetilde S_{i^*}\right]>0,
\end{split}
\end{equation*}
which is a contradiction.
This proves that $\la_1(d_2,\bm q,\bm r-\bm u^*)\neq0$ if $d_1\neq d_2$. 

Finally, it follows from \eqref{fzjs} that
\begin{equation*}
\left.\f{\partial}{\partial d}\la_1(d,\bm q,\bm r-\bm u^*)\right|_{d=d_1}
=\f{\sum_{i=1}^{n-1}\rho^{(1)}_{i+1}(u^*_{i+1}-u^*_i)\left[(d_1+q_i)u^*_i-d_1u^*_{i+1}\right]}
{\sum_{i=1}^{n}d_1\rho^{(1)}_i(u^*_i)^2}<0.
\end{equation*}
This implies that $\la_1(d_2,\bm q,\bm r-\bm u^*)<0$ when $0<d_2-d_1\ll 1$ and $\la_1(d_2,\bm q,\bm r-\bm u^*)>0$ when $0<d_1-d_2\ll 1$. Hence by \eqref{d1d2l}, we know that $\la_1(d_2,\bm q,\bm r-\bm u^*)<0$ when $d_1<d_2$ and $\la_1(d_2,\bm q,\bm r-\bm u^*)>0$ when $d_1>d_2$. Therefore, $(\bm u^*,\bm 0)$ is locally asymptotically stable when $d_1<d_2$ and unstable when $d_1>d_2$.
\end{proof}

\bibliographystyle{plain}
\bibliography{BibforControl}

\end{document}